\documentclass{amsart}
\usepackage{amsmath,amsthm,amsfonts}
\usepackage{amssymb}
\usepackage{hyperref}
\usepackage{times}
\usepackage{paralist}
\usepackage{graphicx}
\usepackage[round]{natbib}


\voffset -.5 truecm
\textwidth 12.2 truecm
\textheight 21 truecm
\hbadness 10000
\tolerance 10000


\newtheoremstyle{myplain}
{}
{}
{\itshape}
{}        
{\scshape}
{}
{.5em}
{}

\newtheoremstyle{myremark}
{}
{}
{}
{}        
{\scshape}
{}
{.5em}
{}

\usepackage{color}

\usepackage[T1]{fontenc}

\theoremstyle{myplain}
\newtheorem{thm}{Theorem}\numberwithin{thm}{section}
\newtheorem{cor}[thm]{Corollary}
\newtheorem{lem}[thm]{Lemma}
\newtheorem{prop}[thm]{Proposition}
\theoremstyle{definition}

\newtheorem{ass}[thm]{Assumption}

\theoremstyle{remark}
\newtheorem{rem}[thm]{Remark}

\numberwithin{equation}{section}

\newcommand{\Real}{\mathbb R}

\newcommand{\F}{\mathcal{F}}

\newcommand{\prob}{\mathbb{P}}
\newcommand{\qprob}{\mathbb{Q}}

\newcommand{\expec}{\mathbb{E}}

\newcommand{\limT}{\lim_{T\rightarrow \infty}}

\newcommand{\indic}{\mathbb{I}}
\newcommand{\pare}[1]{\left(#1\right)}
\newcommand{\bra}[1]{\left[#1\right]}

\newcommand{\wt}[1]{\widetilde{#1}}


\begin{document}

\title{Robust Portfolios and Weak Incentives \endgraf in Long-Run Investments}
\thanks{
Mathematics Subject Classification: (2010) 91G10, 91G50. JEL Classification: G11, J33.\endgraf
The authors are grateful to Jak{\v{s}}a Cvitani{\'c}, Ren Liu, and an anonymous referee for their useful comments. \endgraf
Paolo Guasoni is partially supported by the ERC (278295), NSF (DMS-1109047), SFI (07/MI/008,
07/SK/M1189, 08/SRC/FMC1389), and FP7 (RG-248896). Hao Xing is supported in part by STICERD at London School of Economics. Johannes Muhle-Karbe gratefully acknowledges partial support by the National Centre of Competence in Research ``Financial Valuation and Risk Management'' (NCCR FINRISK), Project D1 (Mathematical Methods in Financial Risk Management), of the Swiss National Science Foundation (SNF)}
\author{Paolo Guasoni}
\address{Boston University, Department of Mathematics and Statistics, 111 Cummington Street Boston, MA 02215, USA, and Dublin City University, School of Mathematical Sciences, Glasnevin, Dublin 9, Ireland.}
\email{guasoni@bu.edu}
\author{Johannes Muhle-Karbe}
\address{ETH Z\"urich, Departement f\"ur Mathematik, R\"amistrasse 101, CH-8092, Z\"urich, Switzerland, and Swiss Finance Institute}
\email{johannes.muhle-karbe@math.ethz.ch}
\author{Hao Xing}
\address{London School of Economics and Political Science, Department of Statistics, 10 Houghton St, London WC2A 2AE, UK}
\email{h.xing@lse.ac.uk}
\keywords{Long Run, Portfolio Choice, Incentives, Executive Compensation}

\begin{abstract}
When the planning horizon is long, and the safe asset grows indefinitely, isoelastic portfolios are nearly optimal for investors who are close to isoelastic for high wealth, and not too risk averse for low wealth. We prove this result in a general arbitrage-free, frictionless, semimartingale model.
As a consequence, optimal portfolios are robust to the perturbations in preferences induced by common option compensation schemes, and such incentives are weaker when their horizon is longer. Robust option incentives are possible, but require several, arbitrarily large exercise prices, and are not always convex.
\end{abstract}

\maketitle

\section{Introduction}
Investors pursue long-term goals both by managing their portfolios and by designing incentives, such as stock and option grants, to align managers' actions with their interests.
This paper explores the implications of long-term investment for both portfolio choice and incentive contracts, overcoming their traditional separation in the literature, and some puzzling results that this separation has generated.

The main message in the literature for long-term portfolio choice comes from turnpike theorems:\footnote{See \cite{mossin1968optimal},
\cite{leland1972turnpike},
\cite{ross1974portfolio},
\cite{hakansson1974convergence},
\cite{MR736053},
\cite{cox1992continuous},
\cite{MR1629559},
\cite{MR1805320},
\citet*{dybvig1999portfolio},
\citet*{detemple2010dynamic}, and
\citet*{guasoni.al.11}.}
when the horizon is distant, optimal portfolios depend only on preferences at high levels of wealth, hence a generic investor should invest like an isoelastic investor with the same asymptotic risk aversion.\footnote{Here and henceforth, ``isoelastic'' refers to preferences with constant relative risk aversion.} These theorems are welcome news for long-term investors who seek simple portfolio allocation strategies, as they imply that local differences in preferences are irrelevant at long horizons, and allow to focus on the one-parameter family of isoelastic strategies.

However, turnpike theorems also have unsettling implications for managers' compensation. When investors offer incentive stock options to a manager, the incentive is a perturbation of the manager's preferences. If turnpike theorems hold, they imply that incentive schemes lose their strength as their horizon (i.e., vesting period) increases. The central goal of this paper is to understand the extent and the limits of this effect -- and its potential remedies.

Our main theorem provides conditions on preferences under which, for a long horizon, an isoelastic portfolio is approximately optimal for a generic investor. This is not another turnpike theorem: while turnpikes show that generic portfolios are close to isoelastic portfolios \emph{when the horizon is distant}, we show that the welfare loss to an investor who uses an isoelastic portfolio \emph{throughout the entire period} is negligible -- in \emph{relative monetary terms}. That is, the fraction of wealth lost by using a simple isoelastic portfolio rather than the exact, but more complex, optimal portfolio, declines to zero as the planning horizon increases. The theorem holds in a general semimartingale market without arbitrage opportunities, in which the safe asset grows indefinitely.

To appreciate the strength of this statement, compare this welfare effect with that of a small reduction in the interest rate. Compounded over a long horizon, a lower interest rate translates into a large relative decline in wealth, as the ratio between investments growing at different rates diverges. From this viewpoint, isoelastic portfolios are surprisingly robust -- and incentives remarkably weak.

The message of this result is as positive for portfolio choice as it is negative for typical incentive schemes (those based on options with strike price near the money): if, at long horizons, the optimal strategy for an investor is robust to local perturbations in preferences, then, by the same token, a manager's policy is insensitive to common stock and option grants, which modify preferences only locally. Thus, to be effective at long horizons, an incentive contract must modify preferences at levels much higher than current wealth.

We argue that incentive contracts based on options with several, arbitrarily high strike prices are robust to the horizon. A simple example is a (multiple of a) contract that pays one option for each strike price in a regularly spaced grid: its payoff is approximately the underlying assets's squared price, and the value of the contract is finite in most common models.\footnote{In Black and Scholes' model, the price of the contract $e^{-\sigma^2 T} S_T^2$ is $S_0^2$, i.e., the square of the current stock price. This contract is replicated by a portfolio with equal weights in call options of all strikes. }

Our main result is a natural complement to the extant turnpike literature. The latter compares, as the horizon increases, optimal portfolios for generic investors to optimal portfolios of isoelastic investors with the same horizon, and establishes joint restrictions on markets and preferences, under which these portfolios are close. Put differently, the question is whether the optimizers of the two maximization problems are similar at the early stages of the investment process. A natural practical question (but, somewhat unnaturally, not addressed in the literature) is whether these maximizers are good substitutes: if a generic investor chooses to use an isoelastic portfolio over the entire investment horizon, then under which conditions will utility be close to optimal, and in which sense? Although such a result is plausible, there are two potential pitfalls.

First, turnpike theorems only assert that optimizers are close at the beginning, not necessarily at the end, of the period. Thus, switching a portfolio with another may have dreadful results if they grow apart as time passes -- a concrete possibility, as even optimal isoelastic portfolios shift between the beginning and the end of the period.\footnote{\cite{Guasoni-Robertson} show examples in which the long-run optimal portfolio has utility equal to negative infinity, if used over the entire period.} Second, the generic investor may have nearly isoelastic preferences at high wealth levels, but may be far more risk averse for very low wealth, and it is unclear whether such bad states have a negligible effect, even in the long run. This concern is especially relevant when the isoelastic portfolio is very risky.

The main result of the present study clarifies these issues in a general setting. It also helps to reconcile some of the puzzling, or at least counterintuitive results in the literature on executive compensation. \citet{jensen1976theory} already recognized that \emph{managers of large publicly held corporations seem to behave in a risk averse way to the detriment of the equity holders}, as  projects with positive net present value may be foregone to avoid additional risk. Stock-based compensation attempts to align managers' interests with those of shareholders, but may also backfire, leading managers to engage in hedging activities, as predicted by \citet{amihud1981risk} as well as \citet{smith1985determinants}, and confirmed empirically by \citet{may1995managerial}. As a remedy, option-based incentives aim at rewarding managerial risk-taking by introducing convexity in their payoffs. The intuition, which finds its roots in the familiar notion that prices of options with convex payoffs increase with volatility, is that \emph{the asymmetric payoffs of call options make it more attractive for managers to undertake risky projects.} \citep{defusco1990effect}. However, \cite{Carpenter} and \cite{ross2004compensation} show that this intuition is misleading: with a nontradeable option, the manager will focus on its private reservation value (that is, the certainty equivalent) rather than on its risk-neutral value, and the effect of convex incentives is ambiguous in general. In a full-information model with risk-sharing, \citet*{cadenillas2007optimal} characterize optimal contracts, and also find that they may be either convex or concave. The numerical results of \citet*{larsen2005optimal} also lead to a similar conclusion.

Adding to the debate, \cite{hall2000optimal} argue that the standard practice of awarding stock options with exercise price equal to the stock price at grant date, potentially motivated by a favorable accounting treatment, is suboptimal, and find that optimal exercise prices are generally higher.
\citet{cadenillas2004leverage} also find that optimal exercise prices increase with manager performance and firm size. \citet{edmans2012dynamic} note the dynamic problems with option compensation: \emph{securities given to incentivize the CEO may lose their power over time: if firm value declines, options may fall out of the money and bear little sensitivity to the stock price}. In fact, this issue motivates the common industry practice of ``repricing'', that is resetting the exercise price after a sharp drop in the stock price \citep{acharya2000optimality, chen2004executive}, a custom that has attracted criticism for rewarding poor performance and weakening original incentives.

Our results clarify the interplay between the exercise price and the horizon in option compensation. If the exercise price is held constant, the option loses its effect as the horizon increases, in that its certainty equivalent becomes arbitrarily small compared to the manager's wealth. By contrast, an incentive contract with multiple exercise prices remains effective even after large changes in the stock price, and for long horizons.

Multiple exercise prices also shed light on the ambiguous effects of option grants, and convex incentives in general, on managerial risk-taking. The intuition is that \citep{carr2001optimal} a portfolio of call and put options can recreate any regular function of the underlying, including incentive contracts of power type $x^\alpha$, $\alpha>0$.
If the manager's risk aversion is high, the incentives contracts that reduce it correspond to $0<\alpha<1$, hence are \emph{concave}, not convex. By contrast, a manager with low risk aversion is motivated to take risks by a package of call options with all strikes, and this convex incentive is robust to changes in stock prices and to long horizons.

Finally, note that this paper focuses on the effects of compensation contracts based on combinations of cash, stock, and options, which are prevalent in practice, but it does not investigate their optimality for a possible principal.
(See \citet*{bolton2005contract} and \citet*{Cvitanic-Zhang} for recent surveys in contract theory.) On the other hand, our results allow considerable flexibility in both investment opportunities and preferences, shedding new light on the effect of the horizon in typical compensation contracts.

The rest of the paper is organized as follows: Section 2 introduces the model, and presents the main result and its implications. Section 3 lays the groundwork for the proof of the main result, recalling the general duality results of \cite{Bouchard-Touzi-Zeghal} and some auxiliary results that will be used repeatedly in the sequel. The main result is proved for power utilities ($p \neq 0$) in Section 4, and for logarithmic utility ($p=0$) in Section 5. Finally, Section 6 contains a counterexample illustrating the necessity of our assumptions.

\section{Main Result}

\subsection{Model and Main Result}
We focus on an agent who invests in assets $S$, thereby affecting total wealth $X$, so as to maximize expected utility from terminal wealth $X_T$ at time $T$:
\begin{equation}\label{eq: exp U}
\max_{X \text{ admissible}} \expec[U(X_T)] \ .
\end{equation}
To encompass the applications below, the utility function $U$ is strictly increasing and concave, but not necessarily continuously differentiable or strictly concave.
The model \eqref{eq: exp U} has the usual portfolio choice interpretation, in which the agent is an investor, $S$ represents the financial asset(s), $X$ the portfolio value, and $U$ the investor's utility function. A second interpretation is that of the agent as a corporate manager, $S$ a real investment opportunity, and $X$ the firm's value. In this case, the function $U$ combines the manager's preferences and incentives: for example, if the manager receives a compensation equal to $F(X_T)$ and has utility function $u$, then $U(x) = u (F(x))$.
A hybrid interpretation (cf. \citet*{Carpenter}) is that of a fund manager, who invests in financial assets $S$, so as to maximize the expected payoff of some function $U$ of the terminal fund value.

Formally, there are $d+1$ assets available. A safe asset, with price denoted by $S^0$, and $d$ risky assets, with prices $S=(S^1,\ldots,S^d)$. These assets are traded continuously, without frictions, and no arbitrage opportunities are available. Let $(\Omega,\mathcal F,(\mathcal F_t)_{t\ge 0},P)$ be a filtered probability space satisfying the usual conditions of right-continuity and saturatedness.

\begin{ass}[Assets]\label{ass: growth}
The safe asset $S^0: [0,\infty) \mapsto \Real$ is a deterministic, strictly positive function satisfying $S^0_t \uparrow \infty$ as $t\uparrow\infty$.\footnote{If the utility function is strictly concave and continuously differentiable, the safe asset can be stochastic as long as it is bounded from below and above by two deterministic processes $\underline{S}$ and $\overline{S}$ such that $\lim_{t\rightarrow \infty} \underline{S}_t =\infty$; cf. Remark~\ref{rem: stoch interest}.}

Moreover, the discounted prices $S/S^0$ of the risky assets are local martingales under some probability $Q$ equivalent to $P$.
\end{ass}

The above assumption $\lim_{t\rightarrow \infty} S^0_t = \infty$ is satisfied, for example, in models with positive interest rates bounded away from zero. The existence of a martingale measure ensures that the market is free of arbitrage opportunities (cf. \citet{DS98}).

The agent's objective is described by a utility function $U$, which incorporates the combined effect of preferences and incentives. Henceforth, fix $p<1$, and denote by $\tilde U(x)$ the isoelastic utility function, defined by $\tilde{U}(x)=x^p/p$, $0 \neq p<1$, resp.\ $\tilde{U}(x) = \log(x)$ for $p=0$. Compared to the benchmark $\tilde{U}$, the generic utility function $U$ satisfies the following restrictions, which yield the main result.

\begin{ass}[Utility]\label{ass: utility}
\noindent
\begin{enumerate}[i)]
\item
$U(x): (0,\infty)\rightarrow \Real$ is strictly increasing, concave, not necessarily differentiable or strictly concave for low wealth levels, but differentiable and strictly concave for large enough wealth.

\item
As wealth increases ($x\uparrow\infty$), the utility $U$ becomes similar to the isoelastic utility $\tilde{U}$, in that their marginal utilities are asymptotically equivalent:
 \begin{equation}\label{ass: conv}
 \lim_{x\uparrow \infty} \frac{U'(x)}{\tilde{U}'(x)} =1.
 \end{equation}

\item
The utility $U$ satisfies additional conditions at low wealth levels, depending on the sign of $p$ in $\tilde{U}(x) = x^p/p$\,:
\begin{enumerate}[a)]
 \item For $0<p<1$, $U$ is bounded from below;
 \item For $p=0$, i.e., $\tilde{U}(x)=\log(x)$,
 \begin{equation}\label{ass: U/tU'}
   \liminf_{x\downarrow 0} \frac{U(x)}{\tilde{U}'(x)} >-\infty.
\end{equation}
 \item For $p<0$, $\lim_{x\uparrow \infty} U(x) =0$ and \eqref{ass: U/tU'} is satisfied.
\end{enumerate}
\end{enumerate}
\end{ass}

Condition $i)$ implies that the agent is risk averse when wealth is high. Condition $ii)$ requires that, when the agent is rich, the utility (either by preferences or by incentives) is close to isoelastic, which is the central assumption in turnpike theorems. In particular, \eqref{ass: conv} implies that $U$ satisfies the Inada condition at infinity, i.e., $\lim_{x\uparrow \infty} U'(x) =0$. However, the Inada condition may not be satisfied at zero.

Condition $iii)$ is new, and requires that $U$ is not too risk averse compared to $\tilde U$ when wealth is low. For example, if $U(x) = x^{p^*}/p^*$ for $x$ small, where $p^*<1$, the condition in \eqref{ass: U/tU'} boils down to $p^*\geq p-1$, that is, the risk aversion $1-p^*$ of $U$ should not be greater than one plus the risk aversion $1-p$ of $\tilde U$ at low wealth. In general, \eqref{ass: U/tU'} means that the ratio of utilities $U(x)/\tilde{U}(x)$ does not diverge faster than $x^{-1}$ for  $p\neq 0$ or $(x\log x)^{-1}$ for $p=0$, as the wealth $x$ tends to zero. These conditions are satisfied, in particular, if the ratio between $U$ and $\tilde U$ remains bounded near zero. The example outlined in Section \ref{sec:example} and analyzed in Section \ref{sec:analysis} below shows that if \eqref{ass: U/tU'} is dropped, the main result can fail even in the Black-Scholes model.

The agent invests in the assets subject to the usual budget constraint: if $x$ denotes the initial capital, and $(\varphi^i_t)^{1\le i\le d}_{0\le t\le T}$ the number of shares of the $i$-th asset at time $t$, the corresponding wealth $X^\varphi_t$ equals
\[
X^\varphi_t = S^0_t \left( x + \int_0^t \varphi_s d(S_s/S^0_s) \right).
\]
To simplify notation, without loss of generality we set $x=1$, which amounts to scaling the numeraire by a factor of $x$. An $\Real^d$-valued process $\varphi$ is an admissible strategy if it is predictable, $S$-integrable, and the corresponding wealth process satisfies $X^\varphi_t \ge 0$ a.s.\ for all $t\ge 0$. The class of admissible wealth processes is denoted by
$$\mathcal{X}:= \{X: X_t\geq 0, \prob-a.s.\ \text{ for all } t\geq 0\}.$$
Thus, the value functions for the utility maximization problems for the generic utility $U$ and its isoelastic counterpart $\tilde U$ are:
\begin{equation}\label{eq: primary}
u^T(x) = \sup_{X\in \mathcal X}\expec[U(X)]
\qquad\text{and}\qquad
{\tilde u}^T(x) = \sup_{X\in \mathcal X}\expec[\tilde U(X)].
\end{equation}

The final assumption is that the isoelastic utility maximization problem is well posed. This assumption is necessary only for $p\ge 0$ because it is always satisfied if the utility function is bounded from above for $p<0$:

\begin{ass}[Isoelastic Wellposedness]\label{ass: wellposedness}
If $0\leq p<1$, let
$$
\tilde{V}(y) = \begin{cases} -y^q/q &\text{for $p \in (0,1)$ and $q=p/(p-1)$},\\ -\log(y)-1 &\text{for $p=0$},\end{cases}
$$
be the convex dual of the isoelastic utility $\tilde{U}$, and assume that
\begin{equation}\label{eq:dualfinite}
\inf_{Y\in \mathcal{Y}} \expec[\tilde{V}(Y_T)] <\infty,
\end{equation}
where $\mathcal{Y}$ is the set of \emph{stochastic discount factors}:
\begin{align*}
\mathcal{Y} := \{Y = \overline{Y}/S^0: \overline{Y}>0 \text{ with } \overline{Y}_0=1 \text{ such that } X\overline{Y} &\text{ is a supermartingale}\\
&\qquad \quad  \text{for all $X \in \mathcal{X}$}\}.
\end{align*}
\end{ass}

Condition \eqref{eq:dualfinite} ensures that the isoelastic dual (and in turn primal) problem is well posed. For $p \neq 0$, this requires the existence of the $q$-th moment for \emph{some} stochastic discount factor, which is satisfied, for example, if the asset price follows an It\^o process, and the market price of risk is bounded. Indeed, in this case one can choose the density process of the minimal martingale measure, and \eqref{eq:dualfinite} follows from Novikov's condition. The argument for $p=0$ is similar.

With the above notation and assumptions, the main result reads as follows.
\begin{thm}[Robustness of Isoelastic Portfolios]\label{thm:mt}
Let Assumptions \ref{ass: growth} - \ref{ass: wellposedness} hold. Then for any horizon $T>0$ there exist an optimal payoff $X^T_T$ for the generic utility $U$ and $\tilde{X}_T^T$ for the isoelastic utility $\tilde{U}$.\footnote{Here, the superscript $T$ indicates the optimal wealth process for horizon $T$, whereas the subscript $T$ refers to its evaluation at time $T$.}  They satisfy
\begin{equation}\label{eq: limit ce}
 \lim_{T\rightarrow \infty} \frac{U^{-1}(\expec[U(\tilde{X}^T_T)])}{U^{-1}(\expec[U(X^T_T)])}=1.
\end{equation}
That is, in the long run, the certainty equivalent of the isoelastic portfolio is arbitrarily close in relative terms to that of the optimizer.
\end{thm}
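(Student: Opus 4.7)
The plan combines the Bouchard-Touzi-Zeghal duality machinery with an asymptotic comparison that exploits the convergence $U'/\tilde U'\to 1$ together with the growth of $S^0$ in Assumption \ref{ass: growth}. First I would invoke the duality results of \cite{Bouchard-Touzi-Zeghal} for both $U$ and $\tilde U$: Assumption \ref{ass: wellposedness} gives well-posedness of the isoelastic dual, while conditions \emph{(ii)}--\emph{(iii)} ensure that the dual for $U$ is also finite (for $p>0$ the lower bound on $U$ suffices; for $p\le 0$ one uses $U(x)\ge -C\,\tilde U'(x)$ together with the supermartingale property of $X\overline Y$ for $\overline Y = S^0 Y$, $Y\in\mathcal Y$). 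This delivers the optimizers $X^T_T$ and $\tilde X^T_T$; since $X^T$ is $U$-optimal and $U^{-1}$ is monotone, one immediately has $\tilde c_T/c_T\le 1$, where $c_T := U^{-1}(\expec[U(X^T_T)])$ and $\tilde c_T := U^{-1}(\expec[U(\tilde X^T_T)])$. The remaining task is therefore $\liminf_T \tilde c_T/c_T \ge 1$.

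The heart of the proof is to compare $\expec[U(X_T)]$ with $\expec[\tilde U(X_T)]$ for an admissible wealth $X$ whose isoelastic expected utility grows at the correct rate with $T$. Splitting at a threshold $M$, on $\{X_T\ge M\}$ the hypothesis $U'/\tilde U'\to 1$ implies that $U-\tilde U$ is bounded, so that piece contributes $O(1)$. On $\{X_T<M\}$ the sign of $p$ matters: for $p>0$, the lower bound on $U$ gives a bounded contribution; for $p\le 0$ the estimate $|U(x)|\le C\,\tilde U'(x)$ from condition \emph{(iii)}, combined with $\overline Y_T = S^0_T Y_T$ and the supermartingale property of $X\overline Y$, yields $\expec[\tilde U'(X_T)\indic_{\{X_T<M\}}] = O(1/S^0_T)\to 0$ by Assumption \ref{ass: growth}. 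The upshot is $\expec[U(X_T)]/\expec[\tilde U(X_T)]\to 1$ in the appropriate normalization, applied both to $X^T$ and to $\tilde X^T$. Inserting this into the isoelastic-optimality inequality $\expec[\tilde U(\tilde X^T_T)]\ge \expec[\tilde U(X^T_T)]$ transfers into $\expec[U(\tilde X^T_T)]\ge \expec[U(X^T_T)](1+o(1))$, and the asymptotic isoelasticity of $U$ converts this into $\tilde c_T/c_T\to 1$ after applying $U^{-1}$.

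The main obstacle is the low-wealth region. The asymptotic equivalence of $U$ and $\tilde U$ is inherently an \emph{at infinity} statement, so the whole argument stands or falls on showing that $\tilde X^T$ contributes negligible utility near zero as $T\to\infty$ -- precisely the role of condition \emph{(iii)}. The different shapes of $\tilde U$ near zero ($x^p/p$ versus $\log x$) force the bookkeeping for $p=0$ to be handled separately (Section 5) from $p\neq 0$ (Section 4), and the counterexample in Section 6 confirms that the conclusion genuinely fails if \emph{(iii)} is dropped.
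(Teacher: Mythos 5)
Your high-level route --- BTZ duality, asymptotic isoelasticity of $U$ at high wealth, condition \emph{(iii)} to tame the tail near zero, and $S^0_T\uparrow\infty$ to dissipate the tail --- matches the paper's strategy in outline, but two of the intermediate claims would fail as stated.

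First, the assertion that $U'/\tilde U'\to 1$ makes $U-\tilde U$ bounded on $\{X_T\ge M\}$ is wrong whenever $0\le p<1$: writing $U(x)-\tilde U(x)=\mathrm{const}+\int_M^x\bigl(U'/\tilde U'-1\bigr)(y)\,\tilde U'(y)\,dy$, the factor $\tilde U'(y)=y^{p-1}$ is not integrable at infinity when $p\ge 0$, so a vanishing relative error can still leave an unbounded additive gap. What actually holds (Lemma~\ref{lem:rv}) is the \emph{multiplicative} sandwich $(1-\epsilon)\tilde U(x)+B_\epsilon\le U(x)\le(1+\epsilon)\tilde U(x)+A_\epsilon$ for $x\ge M_\epsilon$, which is genuinely weaker and is what the paper uses. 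Second, and more seriously, your pivot requires $\expec[U(X^T_T)]/\expec[\tilde U(X^T_T)]\to 1$ for the $U$-optimizer $X^T$, and your low-wealth mechanism cannot deliver this. The estimate $\expec[\tilde U'(X_T)\indic_{\{X_T<M\}}]\to 0$ via a discount factor works for the \emph{isoelastic} optimizer because its first-order condition identifies $\tilde U'(\tilde X^T_T)=\tilde y^T\tilde Y^T_T$ as a scaled element of $\mathcal Y$; for $X^T$ the condition reads $U'(X^T_T)=y^T Y^T_T$, and near $0$ the marginal utility $U'$ bears no useful relation to $\tilde U'$, so $\tilde U'(X^T_T)$ has no dual representation and the tail estimate collapses. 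The paper never forms $\expec[\tilde U(X^T_T)]$: it compares $\expec[U(X^T_T)]$ directly to the isoelastic value $\expec[(\tilde X^T_T)^p/p]$ after proving $\expec[(Y^T_T)^q]^{1-p}/\expec[(\tilde Y^T_T)^q]^{1-p}\to 1$ (Lemmas~\ref{lem: ratio Y p>0} and~\ref{lem: ratio Y}), controlling the tail of $X^T$ via $\{X^T_T<M\}\subseteq\{y^TY^T_T>\delta_M\}$ and, for $p<0$, the shift $X_T\mapsto X_T+a$. Finally, for $p=0$ the conversion from expected utilities to certainty equivalents requires the \emph{difference}, not the ratio, of utilities to vanish (Lemma~\ref{lem: inv U}), and that rests on an independent probabilistic input (Proposition~\ref{prop: ratio X}) absent from your sketch.
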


The strength of this result is that the certainty equivalents converge -- not just their growth rates. In contrast, the long-run portfolio choice literature based on risk-sensitive control\footnote{See  \citet*{MR1358100},
\citeauthor*{MR1675114} \citeyearpar{MR1675114,MR1790132},
and several others.
}
and large deviations\footnote{\citet*{MR1968944}, \citet*{follmer:aaa}} focuses on the maximization of the equivalent safe rate, defined as:
\[
\liminf_{T\rightarrow\infty} \frac1T \log U^{-1}(\expec[U(X_T)]).
\]
It is clear that if two processes $(X_t)_{t\ge 0}$ and $(\tilde X_t)_{t\ge 0}$ satisfy \eqref{eq: limit ce}, then they share the same equivalent safe rate (if it exists). It is also clear, however, that \eqref{eq: limit ce} is a much stronger property. As a trivial example, even with a risk-neutral utility ($U(x) = x$), consider the two processes $\tilde X_t = e^{(\mu-\sigma^2/2)t + \sigma W_t}$, $X_t = \frac12 e^{r t} + \frac12 e^{(\mu-\sigma^2/2)t + \sigma W_t}$, which correspond to a full stock investment, and to a half-stock, half-bond investment (without rebalancing) in a safe asset earning a constant interest rate $r$ and a stock following geometric Brownian motion with expected return $\mu>r$ and volatility $\sigma$. Then, both wealth processes have the same growth rate $\mu$, but the ratio of the corresponding certainty equivalents converges to 2. In fact, examples are also available, in which such a ratio diverges while the equivalent safe rate remains equal. Thus, even if two investment policies share the same equivalent safe rate, they may have very different certainty equivalents, which means that the agent may value one policy much higher than the other.

When \eqref{eq: limit ce} holds, however, the optimal policy can only be marginally better than its isoelastic counterpart for long horizons, in that the gain from choosing the superior policy is smaller than any fraction of the value of the inferior one. It is precisely this property that makes the theorem relevant for incentive schemes, and the next section explores in detail the theorem's implications in this area.

\subsection{Incentives}

\subsubsection*{Weakness of incentives with one exercise price}
If a manager has isoelastic preferences ($\tilde U(x)=x^p/p$ with $0\neq p<1$), and compensation that includes a cash component $c_1\geq 0$ and a fraction $c_2>0$ of the equity $X_t$, the objective function is
\begin{equation}\label{eq:private}
\expec[\tilde{U}(c_1 + c_2 X_T)],
\end{equation}
where $X$ runs through the class of admissible wealth processes. Suppose now that shareholders are concerned that the manager's high equity exposure is likely to discourage investment in projects with positive expected value, and are contemplating to grant $c_3>0$ call options with exercise price $K$, as an incentive to take risks. Such executive stock options typically have a vesting period of ten years, so that our focus on long horizons is relevant. Including the option grant, the manager maximizes the objective
\begin{equation}\label{eq:manager}
\expec[\tilde{U}(c_1+c_2 X_T +c_3 (X_T-K)^+)].
\end{equation}
Both the optimizer and the certainty equivalent of this problem are the same as for
\begin{equation}\label{eq:manager2}
\expec[{\bar U}(X_T)] \ ,
\end{equation}
where $\bar U(x) = \tilde{U}(c_1+c_2 x + c_3 (x-K)^+)/(c_2+c_3)^p$. In other words, awarding the option grant is equivalent to replacing the individual utility $\tilde U$ with the \emph{effective} utility $\bar U$.
This utility $\bar U$ is strictly increasing, differentiable on $(K,\infty)$, and satisfies $\bar U(\infty)=0$ if $p<0$ as well as
\begin{equation}\label{eq:lim1}
\lim_{x \uparrow \infty} \frac{\bar U'(x)}{\tilde{U}'(x)}=1,
\qquad
\lim_{x \downarrow 0} \frac{\bar U(x)}{\tilde{U}'(x)}=0.
\end{equation}
Thus, the effective utility $\bar U$ satisfies Assumption 2 for any fixed compensation $c_1\ge 0$, with the exception that $\bar U$ is no longer concave in a neighborhood of the exercise price $K$, which is illustrated in Figure \ref{fig:convex}. Indeed, creating such a region is the main purpose of option incentives.

\begin{figure}[tbp]
\includegraphics[width=0.8\textwidth]{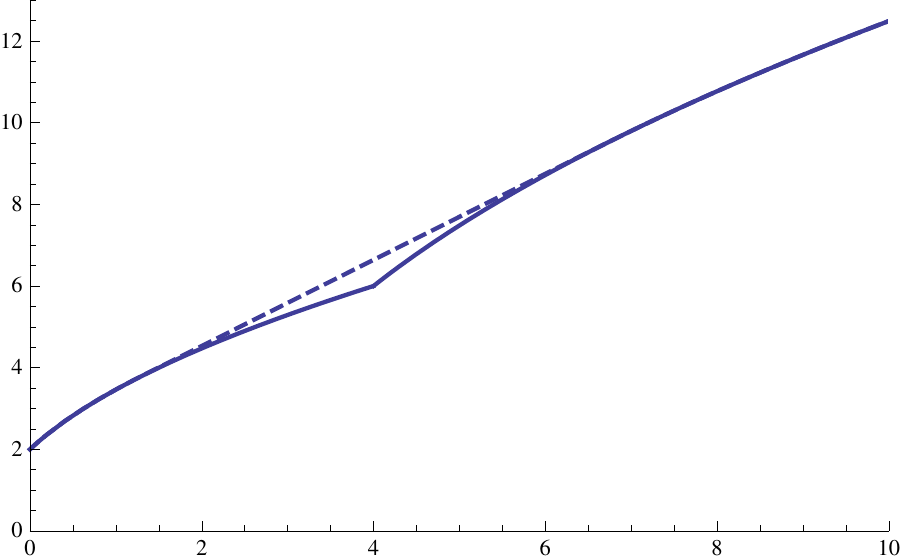}
\caption{Effective utility function $\bar{U}$ (solid) and corresponding concave envelope $U$ (dashed), for sample parameter values ($\gamma=1/2$, $c_1=1$, $c_2=2$, $c_3=3$, $K=4$).
}\label{fig:convex}
\end{figure}

Under different sets of assumptions\footnote{Which include a complete market where the unique equivalent martingale measure has no atoms, as well as other settings.} \citet{Carpenter}, \citet{cuoco2011equilibrium}, and \citet{Bichuch-Sturm} show that maximizing the expected utility $\bar U$ is actually equivalent to maximizing its concave envelope $U$, that is, the minimal concave function that dominates $\bar U$:
\begin{equation}\label{eq:manager3}
\expec[{U}(X_T)],
\qquad
\text{where }  U (x) = \inf \{ g(x) : g \text{ concave}, g\ge \bar U\} .
\end{equation}
In the present setting, the concave envelope coincides with $\bar U$ for sufficiently large or small wealth levels, therefore it preserves properties \eqref{eq:lim1} of $\bar U$. As it is also concave, Theorem \ref{thm:mt} applies to $U$, and yields the following result:

\begin{thm}\label{thm:incentive}
If the manager's problem \eqref{eq:manager} and the concavified problem \eqref{eq:manager3} are well posed and have the same solution, and the safe asset grows indefinitely, then, as the horizon increases, the certainty equivalents of the optimal strategies with incentives \eqref{eq:manager} and without incentives \eqref{eq:private} are asymptotically equivalent.
In particular, the manager's private value of an option grant with fixed exercise price becomes relatively negligible as the horizon increases.
\end{thm}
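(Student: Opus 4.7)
The plan is to derive Theorem~\ref{thm:incentive} from Theorem~\ref{thm:mt} by applying the main theorem separately to the two optimization problems \eqref{eq:manager} and \eqref{eq:private}, and then matching the resulting isoelastic benchmarks.

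\textbf{Step 1 (Verification of Assumption~\ref{ass: utility}).} The effective utility for the manager's incentive problem is $\bar U(x) = \tilde U(c_1+c_2 x+c_3(x-K)^+)/(c_2+c_3)^p$, whose concave envelope $U$ agrees with $\bar U$ outside a compact neighborhood of $K$. The $(c_2+c_3)^p$ normalization is exactly what makes $U'(x)/\tilde U'(x) \to 1$ as $x\uparrow\infty$, so the limits \eqref{eq:lim1} translate into Assumption~\ref{ass: utility}\,(ii)--(iii) for $U$, while concavity and smoothness for large wealth give part~(i). For the no-incentive problem, the analogous normalized utility $U_0(x) = \tilde U(c_1+c_2 x)/c_2^p$ is already concave and satisfies Assumption~\ref{ass: utility} as well. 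Certainty equivalents are invariant under positive affine rescalings of the utility, and by hypothesis the optimizer $X^{\text{inc}}_T$ of \eqref{eq:manager} coincides with the maximizer of $\expec[U(X_T)]$, so Theorem~\ref{thm:mt} applies to both $U$ and $U_0$.

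\textbf{Step 2 (Two applications of Theorem~\ref{thm:mt}, plus a benchmark comparison).} Writing $X^{\text{priv}}_T$ for the optimizer of \eqref{eq:private}, Theorem~\ref{thm:mt} yields
\[
\limT \frac{U^{-1}\bigl(\expec[U(\tilde X^T_T)]\bigr)}{U^{-1}\bigl(\expec[U(X^{\text{inc}}_T)]\bigr)}=1
\quad\text{and}\quad
\limT \frac{U_0^{-1}\bigl(\expec[U_0(\tilde X^T_T)]\bigr)}{U_0^{-1}\bigl(\expec[U_0(X^{\text{priv}}_T)]\bigr)}=1.
\]
It thus suffices to show that the two isoelastic benchmarks on the left coincide asymptotically, that is,
\[
\limT \frac{U^{-1}\bigl(\expec[U(\tilde X^T_T)]\bigr)}{U_0^{-1}\bigl(\expec[U_0(\tilde X^T_T)]\bigr)}=1.
\]
The point is that by Assumption~\ref{ass: utility}\,(ii) both $U$ and $U_0$ are asymptotic to $\tilde U$ at infinity, while Assumption~\ref{ass: growth} forces $\tilde U^{-1}(\expec[\tilde U(\tilde X^T_T)])$ to diverge with $T$. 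I would split each expectation at a threshold $M=M_T$ tending to infinity more slowly than this CE: on $\{\tilde X^T_T > M_T\}$ the pointwise ratios $U/\tilde U$ and $U_0/\tilde U$ are uniformly close to~$1$, while on the complement the low-wealth control \eqref{ass: U/tU'}, inherited by $U$ and $U_0$, makes the residual contribution negligible against the dominant CE growth. Chaining the three limits then delivers the desired equivalence of certainty equivalents.

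\textbf{Main obstacle.} The hard part is precisely this tail-control step: in each of the three cases $p>0$, $p=0$, $p<0$ one has to verify that the pointwise asymptotic $U\sim\tilde U$ (and $U_0\sim\tilde U$) lifts to an asymptotic equivalence of the composite quantities $U^{-1}(\expec[U(\tilde X^T_T)])$ and $\tilde U^{-1}(\expec[\tilde U(\tilde X^T_T)])$, since the inversion step and the integration against a possibly heavy low-tail of $\tilde X^T_T$ have to be controlled simultaneously. Once this is in place, the equivalence of $U^{-1}(\expec[U(X^{\text{inc}}_T)])$ and $U_0^{-1}(\expec[U_0(X^{\text{priv}}_T)])$ follows immediately, and since both diverge, their difference --- the manager's private valuation of the option grant --- is $o(\mathrm{CE}_{\mathrm{priv}})$, yielding the second assertion of the theorem.
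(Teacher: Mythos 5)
Your approach---verifying Assumption~\ref{ass: utility} for both the concave envelope $U$ and the no-incentive effective utility $U_0$, applying Theorem~\ref{thm:mt} separately to each, and chaining through the common isoelastic benchmark $\tilde X^T_T$---is exactly what the paper's terse justification (``Theorem~\ref{thm:mt} applies to $U$'') implicitly requires, so the proposal is correct and follows the same route.

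Two remarks, though. First, the ``main obstacle'' you flag is not actually open: the equivalence $U^{-1}(\expec[U(\tilde X^T_T)]) \sim U_0^{-1}(\expec[U_0(\tilde X^T_T)])$ is already supplied by the machinery inside the proof of Theorem~\ref{thm:mt}. Lemmas~\ref{lem: U-p-tx p>0} and~\ref{lem: U-p-tx} give $\lim_T \expec[U(\tilde X^T_T)]/\expec[(\tilde X^T_T)^p/p]=1$ for $0<p<1$ and $p<0$ respectively, and the identical statement holds with $U_0$ in place of $U$; combined with regular variation of $U$ and $U_0$ at infinity (Lemma~\ref{lem:rv}) and the contradiction argument of Lemma~\ref{lem:equiv}, this converts directly into the certainty-equivalent statement without any fresh tail analysis at a $T$-dependent threshold. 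Also note the theorem is stated for $0\neq p<1$, so the $p=0$ case you would have to worry about is excluded. Second, a small imprecision: ``certainty equivalents are invariant under positive affine rescalings of the utility'' is true for the rescaling of utility \emph{values} (dividing by $(c_2+c_3)^p$), but the precomposition $x\mapsto c_1+c_2 x+c_3(x-K)^+$ inside $\tilde U$ is not a transformation under which the CE is invariant. The quantity that Theorem~\ref{thm:incentive} actually controls is $U^{-1}(\expec[U(X_T)])$, the certainty equivalent of the terminal \emph{wealth} under the effective utility; the CE of the manager's total compensation differs from it by the affine compensation map applied at the end, which is harmless for the relative-equivalence claim precisely because the certainty equivalents of wealth diverge.
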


With hindsight, this result looks natural, and indeed follows from the the local effect on preferences of a single option. Yet, the option pricing intuition, which shapes much of the literature on incentives, points to the opposite conclusion. As in the Black-Scholes model (and also in more sophisticated extensions) the arbitrage-free price of a call option increases with volatility, it is plausible to conclude that a manager is encouraged to increase its value by taking more risk. A longer horizon also increases the option's value, suggesting a stronger, not weaker incentive effect.

However, option pricing heuristics are misleading in this context because the manager, who can neither sell nor hedge the option grant,\footnote{In practice, incentive stock options grants include clauses that prevent a manager from taking offsetting positions even with private accounts.} does not focus on the hypothetical risk-neutral value of the option, but rather on its private value, for which risk-aversion is central. As a single option affects risk-aversion only locally, long horizons combined with wealth growth make their impact vanish.

In summary, Theorem \ref{thm:incentive} supports the broad observation that the size and the exercise price of an option grant need to be chosen carefully, depending on the horizon.  A small quantity of options with a low strike may result, similar to a large stock position, in discouraging risk-taking. A large number of options with a high strike may also be ineffective, as the risk necessary to make the options profitable may be too much for the manager to bear. And even an exercise price chosen optimally at the time of award may soon become inadequate after large changes in the asset price.

\subsubsection*{Robustness of several strike prices, and power(ful) incentives}

We argue that option grants that include several (in theory, infinitely many) exercise prices retain their incentive effects after large price changes, and are robust to long horizons. Once again, the intuition comes from option theory: \citet{carr2001optimal} show that a European option with (smooth) payoff $f(S_T)$ admits the following representation as a portfolio of call and put options of all strikes:\footnote{The formula follows using the fundamental theorem of calculus twice, and then integrating by parts.}
\begin{equation}
\begin{split}
f(S_T) =\ &f(\bar K) + f'(\bar K)(S_T-\bar K)\\
&+\int_0^{\bar K} f''(k)(k-S_T)^+dk + \int_{\bar K}^\infty f''(k)(S_T-k)^+dk.
\end{split}
\end{equation}
In this representation, the term $f(\bar K)$ represents a cash amount, $f'(\bar K)(S_T-\bar K)$ a position in a forward contract, and the two integrals correspond to static portfolios in puts and calls, respectively. The threshold $\bar K$ is arbitrary, and determines the strike above which calls rather than puts are used.

For example, setting $\bar K=0$, this decomposition shows that a portfolio consisting of an equal number of call options at all strikes leads to a payoff of the form $f(x) = c x^2$ for some $c>0$. In general, a payoff of power type $f(x) = x^\alpha$ is replicated by a portfolio of options with weights $f''(x) = \alpha (\alpha -1) x^{\alpha - 2}$. We restrict to $\alpha>0$, as otherwise the incentive does not reward higher asset values.

Consider now the effect of such a power incentive on a manager with isoelastic utility $\tilde U(x) = x^p/p$. With fixed compensation set to zero for simplicity, the utility function including incentives becomes $U(x) = x^{\alpha p}/p$, which means that the incentive replaces the manager's risk aversion $\gamma = 1-p$ with the \emph{effective} risk aversion:\footnote{Here we assume that the resulting optimization problem is well-posed in the sense of Assumption \ref{ass: wellposedness}; in particular, the effective risk aversion should be positive.}
\begin{equation}\label{eq:gammastar}
\gamma^* = \alpha \gamma + (1-\alpha) .
\end{equation}
This formula has several implications. First, the incentive scheme reduces risk aversion ($\gamma^*<\gamma$) if and only if $(1-\alpha)(1-\gamma)<0$. In particular, a convex payoff ($\alpha>1$) does not reduce risk aversion if the latter is originally higher than logarithmic ($\gamma>1$), as it typically is. In general, a concave payoff ($\alpha<1$) makes a manager closer to logarithmic $\gamma=1$, and a convex payoff does the opposite.

Thus, for a manager who has an already low risk aversion ($\gamma<1$), an options grant that includes calls with several exercise prices is an incentive scheme that remains robust to the horizon, and to large movements in the stock price, preventing the need for future repricing.

The payoff of a concave incentive is qualitatively similar to that of a combination of covered-call positions, but devising concave incentives may not be practical, because they would imply large stock holdings, combined with short positions in options of all maturities. The unusual nature of such arrangements, combined with the resulting difficulties for tax and accounting purposes, may make such schemes hard to implement.

In spite of these difficulties, note that the above formula shows that such concave power incentive are implicit in the high-water mark provisions in the compensation of hedge fund managers. Indeed, \citet{guasoni2011incentives} find that a hedge fund manager with risk aversion $\gamma$, who receives as performance fee a fraction $1-\alpha$ of profits, invests the fund's assets like an owner-investor with the same effective risk aversion as in \eqref{eq:gammastar}. This observation shows that concave incentives, although virtually absent in corporate compensation, are in fact implicitly present in the hedge-fund industry, and that the typical performance fees of $20\%$ correspond to a power incentive with $\alpha=0.8$.

\subsection{Counterexample}\label{sec:example}

This section outlines a counterexample, which shows that Theorem \ref{thm:mt} can fail even in the usual Black-Scholes model if condition \eqref{ass: U/tU'} is not satisfied, i.e., if the generic utility $U$ is too risk averse compared to the reference isoelastic utility $\tilde U$ at low wealth levels. The detailed calculations are presented in Section \ref{sec:analysis} below.

Suppose the safe asset earns a constant interest rate $r>0$, and there is a single risky asset following geometric Brownian motion:
$$dS_t/S_t=(\mu+r)dt+\sigma dW_t,$$
for a standard Brownian motion $W_t$ and constants $\mu, \sigma>0$. Let $\tilde{U}(x)=x^p/p$, $p<0$ be a reference isoelastic utility with risk aversion $1-p>1$ and consider the generic utility function given by $U(x)=x^p/p$ for sufficiently large $x$ and by $U(x)=x^{p^*}/p^*$, $p^* < p-1$, for $x \leq 1$, with smooth interpolation in between. Then, at high wealth levels, the generic utility $U$ has the same risk aversion $1-p$ as its isoelastic counterpart $\tilde{U}$, but the corresponding risk aversion is bigger (by more than 1) at low wealth levels. In particular, a simple calculation shows that the generic utility $U$ does not satisfy \eqref{ass: U/tU'}.

In this setting, the optimal isoelastic portfolio $\widetilde{X}^T_T$ can be too risky for the generic utility at low wealth levels, leading to a diminishing ratio of certainty equivalents compared to the optimizer $X^T_T$ for $U$ in the long run:
\begin{equation}\label{eq: ce conv 0}
\lim_{T\rightarrow \infty} \frac{U^{-1}(\expec[U(\tilde{X}^T_T)])}{U^{-1}(\expec[U(X^T_T)])}=0.
\end{equation}
Indeed, we show in Section \ref{sec:analysis} that this result holds when the risky asset is attractive enough, relative to the safe asset:
\begin{equation}\label{eq: para rest}
\left(\frac{\mu}{\sigma^2 (1-p)}\right)^2 > 2\max \left\{1,\frac{1}{p-p^*-1}\right\} \frac r{\sigma^2}.
\end{equation}
To understand this parameter restriction, recall that in this model $\mu/(\sigma^2 (1-p))$ is the optimal portfolio weight in the risky asset for the isoelastic utility. Thus, the lower bound \eqref{eq: para rest} ensures that the risky asset is sufficiently attractive, compared to the safe asset, to lead to a sufficiently large risky investment. In particular, as the difference $p-p^*$ between risk aversions at low wealth levels declines to one, the isoelastic risky weight rises, leading to a sufficiently large probability of reaching low wealth levels. Indeed, the lower bound in \eqref{eq: para rest} tends to infinity, as the difference in risk aversions approaches one ($p^* \uparrow p-1$), in line with our main result, which holds if this difference is less than or equal to one.

The result in \eqref{eq: ce conv 0} may seem puzzling, because it implies that the properties of the generic utility function at low wealths are important in the long run, even though the safe asset has a positive growth rate, so that any initial safe investment grows arbitrarily, thereby avoiding low wealth with certainty. However, the optimal isoelastic portfolio keeps a constant \emph{fraction} of wealth in the risky asset. Hence, if the stock price drops and the risky position declines, the safe \emph{position} is reduced as well, so that wealth can decrease even further.\footnote{As an extreme case, recall that, if risk aversion is small enough, the wealth of the optimal Merton portfolio converges to zero almost surely, even though its expected return is high.} Such a portfolio may be unacceptable for another investor, who is substantially more risk averse at low wealth levels.

\section{Notation and Preliminaries}\label{sec:dual}
We begin by recalling the basic notions and duality results for the present non-smooth setting, which have been established by \cite{Bouchard-Touzi-Zeghal}.

Recall the value functions $u^T$ and $\tilde{u}^T$ for the generic utility $U$ and its isoelastic counterpart $\tilde{U}$ from \eqref{eq: primary}.
Let
$$V(y) := \sup_{x>0} (U(x) - xy) \quad \text{and} \quad\tilde{V}(y) := \sup_{x>0} (\tilde{U}(x) - xy)$$
be the dual functions of the generic utility $U$ and the isoelastic utility $\tilde{U}$, respectively. Define the domain of $V$ as $\text{dom}(V) := \{y>0: |V(y)|<\infty\}$ and let $\overline{\text{dom}(V)}$ be its closure. Consider the dual problems associated to \eqref{eq: primary}:
\begin{equation}\label{eq: dual}
 v^T(y) = \inf_{Y\in \mathcal{Y}} \expec[V(y Y_T)] \quad \text{ and } \quad \tilde{v}^T(y) = \inf_{Y\in \mathcal{Y}} \expec[\tilde{V}(yY_T)],
\end{equation}
where $\mathcal{Y}$ denotes the set of stochastic discount factors:
\begin{align*}
\mathcal{Y} := \{Y = \overline{Y}/S^0: \overline{Y}>0 \text{ with } \overline{Y}_0=1 \text{ such that } X\overline{Y} &\text{ is a supermartingale}\\
&\qquad \quad\text{for all $X \in \mathcal{X}$}\}.
\end{align*}

\begin{thm}[Bouchard-Touzi-Zeghal]\label{thm:duality BTZ}
 Suppose the following holds:
 \begin{enumerate}[a)]
  \item $U:(0,\infty)\rightarrow \Real$ is nonconstant, nondecreasing and concave;
  \item $\overline{\text{dom}(V)} = [0,\infty)$;
  \item $V$ satisfies the dual asymptotic elasticity condition
      \[
       AE_0(V) := \limsup_{y\downarrow 0} \sup_{x\in -\partial V(y)} \frac{yx}{V(y)}<\infty;
      \]
  \item There is $y>0$ such that $v^T(y)$ defined in \eqref{eq: dual} is finite.
 \end{enumerate}
 Then, optimal solutions $X^T \in \mathcal{X}$ for $u^T$ and $Y^T \in \mathcal{Y}$ for $v^T$ exist such that $X^T Y^T$ is a uniformly integrable martingale and
 \begin{equation}\label{eq: X^T resp}
  X^T_T \in -\partial V(y^T Y^T_T) \quad \text{for some $y^T>0$}.
 \end{equation}
\end{thm}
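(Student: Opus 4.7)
The plan is to adapt the Kramkov-Schachermayer convex duality framework to the non-smooth setting, using subdifferential calculus in place of derivatives. As a first step I would invoke the bipolar theorem of Brannath-Schachermayer to pass from a dynamic to a static problem: the sets
\[
\mathcal{C} := \{X_T : X \in \mathcal{X},\ X_0 = 1\}, \qquad \mathcal{D} := \{Y_T : Y \in \mathcal{Y}\}
\]
in $L^0_+$ are polar to each other, so the primal and dual problems reduce to $u^T(x) = \sup_{C \in x\mathcal{C}} \expec[U(C)]$ and $v^T(y) = \inf_{D \in y\mathcal{D}} \expec[V(D)]$ over solid, convex, closed subsets of $L^0_+$.

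Existence of the dual minimizer $Y^T$ would follow from a compactness argument. Condition (d) makes $v^T$ finite at some $y > 0$, condition (b) makes $V$ finite on $(0,\infty)$, and the asymptotic elasticity bound $AE_0(V) < \infty$ from (c) provides the growth control on $V$ near zero needed to ensure that the sublevel sets of $D \mapsto \expec[V(D)]$ in $y\mathcal{D}$ are uniformly integrable after passage to convex combinations (Koml\'os' lemma). Since $\mathcal{D}$ is closed under countable convex combinations and $\prob$-a.s.\ limits, a minimizer $Y^T$ emerges.

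To obtain the primal optimizer I would use conjugate duality: convexity and boundary behavior of $v^T$ yield $u^T(x) = \inf_{y > 0}(v^T(y) + xy)$, with the minimum attained at some $y^T > 0$. A measurable selection $X^T_T(\omega) \in -\partial V(y^T Y^T_T(\omega))$ exists by Kuratowski-Ryll-Nardzewski because $y \mapsto -\partial V(y)$ is upper hemicontinuous with nonempty, convex, compact values on $(0,\infty)$. The pointwise Fenchel equality $U(X^T_T) = V(y^T Y^T_T) + y^T X^T_T Y^T_T$, combined with the saddle identity $u^T(x) = v^T(y^T) + y^T x$, forces $\expec[X^T_T Y^T_T] = x$; together with the supermartingale property of $XY$ for all $X \in \mathcal{X}$, $Y \in \mathcal{Y}$, this upgrades $X^T Y^T$ to a uniformly integrable martingale, so $X^T_T$ is indeed attained by an admissible wealth process.

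The main obstacle is exactly the non-smoothness. In the classical case one simply sets $X^T_T = -V'(y^T Y^T_T)$ and all identities hold pointwise; here $V$ may fail to be differentiable at the kinks of $U$, so one must (i) perform a measurable selection from a genuinely multi-valued subdifferential, (ii) argue that any such selection attains $u^T(x)$ -- this follows because the Fenchel identity pins $\expec[U(X^T_T)]$ down to $v^T(y^T) + y^T x$ regardless of the representative chosen -- and (iii) upgrade the \emph{a priori} supermartingale inequality $\expec[X^T_T Y^T_T] \leq x$ to equality, which is precisely where the asymptotic elasticity condition does its essential work by preventing mass loss at the boundary.
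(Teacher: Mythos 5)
The paper does not prove Theorem~\ref{thm:duality BTZ}; it is stated as an established result and attributed to \cite{Bouchard-Touzi-Zeghal}, so there is no ``paper proof'' to compare your attempt to. Your sketch follows the right general template (Kramkov--Schachermayer duality adapted to a multi-valued subdifferential via measurable selection, which is indeed the spirit of Bouchard--Touzi--Zeghal), but it contains two genuine errors of reasoning that would need repair before it could be called a proof.

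First, the claim in your step (ii) that the Fenchel identity pins $\expec[U(X^T_T)]$ to $v^T(y^T) + y^T x$ ``regardless of the representative chosen'' is false. Pointwise Fenchel equality gives $\expec[U(X^T_T)] = v^T(y^T) + y^T\,\expec[X^T_T Y^T_T]$, and $\expec[X^T_T Y^T_T]$ \emph{depends on the selection}: when $-\partial V(y^T Y^T_T)$ is genuinely multi-valued on a set of positive measure, different measurable selections yield different budgets, so only certain selections hit $\expec[X^T_T Y^T_T]=x$ exactly. What one actually needs is the subdifferential first-order condition $-x\in\partial_y v^T(y^T)$ together with an Aumann-type argument to extract \emph{one} selection with the correct budget; Kuratowski--Ryll-Nardzewski alone gives a measurable selection but not the right one. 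Second, even granting $\expec[X^T_T Y^T_T]=x$ for the chosen $Y^T$, you have not verified that $X^T_T$ lies in $x\mathcal{C}$, which by the bipolar theorem requires $\expec[X^T_T \hat Y_T]\le x$ for \emph{all} $\hat Y\in\mathcal{Y}$, not just at the dual minimizer. This uniform budget bound is established in Kramkov--Schachermayer (and its BTZ extension) via a separate variational argument around the dual optimizer, and your sketch omits it entirely. Finally, the closing remark that the asymptotic elasticity bound ``does its essential work'' in upgrading the budget inequality to equality misattributes its role: $AE_0(V)<\infty$ is used earlier, to guarantee uniform integrability of the negative parts of $V$ along dual minimizing sequences (hence existence of $Y^T$) and to close the duality gap $u^T(x)=\inf_y(v^T(y)+xy)$; the budget equality itself is a first-order condition and does not rest on $AE$.
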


For isoelastic utilities $\tilde{U}$, the same statements hold (cf. \cite{Kramkov-Schachermayer-99}). We denote the corresponding optimal solutions of $\tilde{u}^T$ and $\tilde{v}^T$ by $\tilde{X}^T$ and $\tilde{Y}^T$, respectively. Both of them are unique. Moreover, because $\tilde{V}$ is differentiable, \eqref{eq: X^T resp} simplifies to $\tilde{X}^T_T = - \tilde{V}'(\tilde{y}^T \tilde{Y}^T_T)$ for some $\tilde{y}^T>0$.

\begin{rem}\label{rem: stoch interest}
If the generic utility $U$ is differentiable and strictly concave, the aforementioned duality results also hold for a nondeterministic safe asset $S^0$, assuming that the latter is bounded from above and below by two deterministic, positive processes $\underline{S}$ and $\overline{S}$ (cf. \cite[Theorem 3.10]{karatzas.zitkovic.03}).
\end{rem}

In what follows, we will verify all prerequisites of Theorem \ref{thm:duality BTZ} in our setting. We begin by deriving some consequences of the convergence \eqref{ass: conv} of the ratio of marginal utilities.

\begin{lem}\label{lem:rv}
Suppose \eqref{ass: conv} holds and let $p\neq 0$. Then, for large wealth levels, the generic utility $U$ lies between arbitrarily close multiples of its isoelastic counterpart $\tilde{U}$ up to an additive constant. That is, for any $\epsilon>0$ there exists some sufficiently large $M_\epsilon>0$ such that:
\begin{enumerate}
\item[i)] If $p\in (0,1)$, then for some constants $A_\epsilon$ and $B_\epsilon$,
 \begin{equation}\label{eq: U bdd p>0}
  (1-\epsilon) x^p/p + B_\epsilon \leq U(x) \leq (1+\epsilon) x^p/p + A_\epsilon, \quad \mbox{for $x\geq M_\epsilon$};
 \end{equation}

 \item[ii)] If $p<0$, then
 \begin{equation}\label{intest2}
 (1-\epsilon) x^p/p \geq U(x) \geq (1+\epsilon) x^p/p, \quad \mbox{for $x\geq M_\epsilon$.}
 \end{equation}
\end{enumerate}
Moreover, $U$ is regularly varying at infinity in both cases.
\end{lem}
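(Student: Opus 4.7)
The plan is to derive the pointwise bounds on $U$ by integrating the asymptotic relation on $U'$ from \eqref{ass: conv}, treating the cases $p \in (0,1)$ and $p < 0$ separately because $\int^\infty t^{p-1}\,dt$ diverges in the first and converges in the second. Since $\tilde U'(x) = x^{p-1}$, \eqref{ass: conv} is precisely $U'(x)/x^{p-1} \to 1$ as $x \to \infty$, and Assumption~2(i) ensures $U$ is differentiable for large wealth. Hence for any $\epsilon > 0$ one can pick $M_\epsilon$ large enough that
\[
(1-\epsilon)x^{p-1} \le U'(x) \le (1+\epsilon)x^{p-1}, \qquad x \ge M_\epsilon.
\]
Every subsequent step follows from integrating this sandwich.

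For $p \in (0,1)$, I would integrate from $M_\epsilon$ to $x$. Using $\int_{M_\epsilon}^x t^{p-1}\,dt = (x^p - M_\epsilon^p)/p$, this gives
\[
U(M_\epsilon) + (1-\epsilon)\frac{x^p - M_\epsilon^p}{p} \le U(x) \le U(M_\epsilon) + (1+\epsilon)\frac{x^p - M_\epsilon^p}{p}.
\]
Absorbing the $M_\epsilon$-dependent pieces into the constants $A_\epsilon := U(M_\epsilon) - (1+\epsilon)M_\epsilon^p/p$ and $B_\epsilon := U(M_\epsilon) - (1-\epsilon)M_\epsilon^p/p$ yields \eqref{eq: U bdd p>0}.

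For $p < 0$, the integral $\int_x^\infty t^{p-1}\,dt = -x^p/p$ is finite, so the derivative sandwich shows that $\int_M^\infty U'(t)\,dt < \infty$; hence $U(\infty) := \lim_{x\to\infty} U(x)$ exists. Invoking $U(\infty) = 0$ from Assumption~2(iii)(c) and integrating the sandwich on $[x,\infty)$ gives
\[
(1-\epsilon)\bigl(-x^p/p\bigr) \le -U(x) \le (1+\epsilon)\bigl(-x^p/p\bigr).
\]
Since $-x^p/p > 0$, reversing signs delivers \eqref{intest2} with no additive constant.

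For regular variation, fix $\lambda > 0$ and let $\epsilon \to 0$ in the two-sided bounds just established. In both cases $U(x)/(x^p/p) \to 1$ (the additive constants being negligible against $x^p \to \infty$ when $p>0$), so $\lim_{x\to\infty} U(\lambda x)/U(x) = \lambda^p$, which is the definition of regular variation with index $p$. The only real subtlety I expect is in the $p<0$ case: the existence and value of $U(\infty)$ are not automatic from \eqref{ass: conv} alone, and the argument implicitly leans on Assumption~2(iii)(c) to rule out a shifted limit. Once that is acknowledged, the rest is a straightforward fundamental-theorem-of-calculus computation.
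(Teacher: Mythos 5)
Your proof is correct and follows essentially the same path as the paper's: integrate the two-sided bound on $U'$ from $M_\epsilon$ to $x$ when $p\in(0,1)$, from $x$ to $\infty$ using $U(\infty)=0$ when $p<0$, and deduce regular variation from the resulting sandwich by letting $\epsilon\to 0$. Your explicit remark that the $p<0$ case leans on Assumption 2(iii)(c) for $U(\infty)=0$ is a useful clarification that the paper passes over silently.
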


\begin{proof}
For any $\epsilon\in (0,1)$, the convergence \eqref{ass: conv} of the ratio of marginal utilities yields the existence of some sufficiently large $M_\epsilon$ such that
 \begin{equation}\label{eq:intest}
 (1-\epsilon) x^{p-1} \leq U'(x) \leq (1+\epsilon) x^{p-1}, \quad \mbox{for $x\geq M_\epsilon$}.
 \end{equation}
Up to a larger $M_\epsilon$, we can assume that the generic utility $U$ is differentiable for $x \geq M_\epsilon$. When $p\in (0, 1)$, integrating \eqref{eq:intest} on $(M_\epsilon, x)$ for $x\geq M_\epsilon$ gives the estimates in i). When $p<0$, integrating \eqref{eq:intest} on $(x, \infty)$ for $x\geq M_\epsilon$ and using $U(\infty)=0$ in Assumption \ref{ass: utility} iii) c) yields the estimates in ii). As for the regular variation of $U$, the bounds in i) give
\[
 \limsup_{x\uparrow \infty} \frac{U(cx)}{U(x)} \leq \limsup_{x\uparrow \infty} \frac{(1+\epsilon) (cx)^p/p + A_\epsilon}{(1-\epsilon) x^p/p + B_\epsilon} = \frac{1+\epsilon}{1-\epsilon} \,c^p, \quad \mbox{for any $c>0$.}
\]
As $\epsilon$ is chosen arbitrarily, it follows that $\limsup_{x\rightarrow \infty} U(cx)/U(x) \leq c^p$. The converse statement $\liminf_{x\rightarrow \infty} U(cx)/U(x) \geq c^p$ is obtained analogously. Hence, $U$ is regularly varying at infinity. The regular variation of $U$ for the case $p<0$ is obtained along the same lines.
\end{proof}

As $U$ is differentiable and strictly concave at sufficiently large wealth levels (cf. Assumption \ref{ass: utility} i)), the subdifferential $-\partial V(y)$ is single-valued and equals $(U')^{-1}(y)$ when $y\in (0, y_0)$ for some sufficiently small $y_0>0$. Set
$$I(y):= -\partial V(y) = (U')^{-1}(y), \quad \text{for $y\in (0,y_0)$}.$$

\begin{lem}
Suppose \eqref{ass: conv} holds. Then:
\begin{equation}\label{eq: I conv}
 \lim_{y\downarrow 0} I(y) y^{\frac{1}{1-p}} =1.
\end{equation}
\end{lem}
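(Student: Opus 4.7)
The plan is to invert the asymptotic relation $U'(x) \sim \tilde{U}'(x) = x^{p-1}$ near infinity. Since $1-p > 0$, this is essentially a calculus fact: if $U'$ is asymptotic to the regularly varying function $x^{p-1}$, then its inverse is asymptotic to the inverse of that function, namely $y^{-1/(1-p)}$.

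First I would pin down the setting. By Assumption~\ref{ass: utility}~i), there is some threshold $x_0$ such that $U$ is $C^1$ and strictly concave on $(x_0,\infty)$, so $U'$ is continuous and strictly decreasing there. The hypothesis \eqref{ass: conv} combined with $\tilde{U}'(x) = x^{p-1}$ (for $p \neq 0$) or $\tilde{U}'(x) = 1/x$ (for $p=0$), together with $p < 1$, forces $U'(x) \to 0$ as $x \uparrow \infty$. Hence, for all sufficiently small $y > 0$, the equation $U'(x) = y$ has a unique solution $x = I(y)$ with $x > x_0$, and $I(y) \to \infty$ as $y \downarrow 0$.

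Next I would carry out the inversion. Write $x := I(y)$, so that $U'(x) = y$ and $x \uparrow \infty$ as $y \downarrow 0$. Rearranging,
\begin{equation*}
y \cdot x^{1-p} \;=\; \frac{U'(x)}{x^{p-1}} \;=\; \frac{U'(x)}{\tilde{U}'(x)} \;\longrightarrow\; 1
\end{equation*}
as $y \downarrow 0$, by \eqref{ass: conv}. (In the logarithmic case $p = 0$ one has $y \cdot x = U'(x)/\tilde{U}'(x)$, with the same conclusion.) Taking $(1-p)$-th roots, which is continuous on $(0,\infty)$ since $1-p > 0$, yields
\begin{equation*}
I(y) \, y^{\frac{1}{1-p}} \;=\; \bigl(y \cdot I(y)^{1-p}\bigr)^{\frac{1}{1-p}} \;\longrightarrow\; 1,
\end{equation*}
which is the claim.

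The proof has no real obstacle; the only point that needs a moment of care is the existence and divergence of $I(y)$ as $y \downarrow 0$. Both follow from Assumption~\ref{ass: utility}~i) (strict concavity of $U$ for large wealth, giving injectivity of $U'$) together with \eqref{ass: conv} (yielding $U'(\infty) = 0$, hence surjectivity onto a right-neighborhood of $0$). Once $I$ is well-defined near $0$, the substitution $x = I(y)$ turns the asymptotic \eqref{ass: conv} directly into the desired statement.
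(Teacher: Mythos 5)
Your proof is correct and is essentially the same as the paper's: both substitute $x = I(y)$, rewrite $I(y)\,y^{1/(1-p)}$ as a power of $U'(x)/x^{p-1}$, and invoke \eqref{ass: conv}. Your version just adds a slightly more explicit justification that $I(y)$ is well-defined and diverges as $y \downarrow 0$, which the paper leaves implicit.
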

\begin{proof}
Set $x= I(y)$, which tends to infinity as $y\downarrow 0$. Then, \eqref{eq: I conv} follows from
 \[
  \frac{I(y)}{y^{\frac{1}{p-1}}} = \frac{I(U'(x))}{(U'(x))^{\frac{1}{p-1}}} = \frac{x}{(U'(x))^{\frac{1}{p-1}}} = \pare{\frac{x^{p-1}}{U'(x)}}^{\frac{1}{p-1}} \rightarrow 1, \quad \text{as $y\downarrow 0$},
 \]
where the convergence holds due to \eqref{ass: conv}.
\end{proof}

Let us now verify the prerequisites of Theorem \ref{thm:duality BTZ}.

\begin{lem}\label{lem: dom V}
 Let Assumption \ref{ass: utility} hold. Then $\overline{\text{dom}(V)} = [0,\infty)$.
\end{lem}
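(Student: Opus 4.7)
The plan is to show directly that $\text{dom}(V) = (0,\infty)$ in all three regimes of $p$, which immediately gives $\overline{\text{dom}(V)} = [0,\infty)$. For each $y > 0$ I need to verify both $V(y) > -\infty$ and $V(y) < \infty$. The lower bound $V(y) \ge U(1) - y > -\infty$ is immediate since $U$ is real-valued on $(0,\infty)$. The real content is the upper bound.

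For the upper bound I would split on the sign of $p$ and use the asymptotic comparison already available from Lemma \ref{lem:rv}. When $p \in (0,1)$, estimate \eqref{eq: U bdd p>0} gives $U(x) \le (1+\epsilon)x^p/p + A_\epsilon$ for $x \ge M_\epsilon$, and $U$ is bounded above by $U(M_\epsilon)$ on $(0,M_\epsilon]$ by monotonicity. Hence $U(x) - xy$ is dominated by $(1+\epsilon)x^p/p - xy + \text{const}$, a concave function of $x$ that tends to $-\infty$ as $x\uparrow\infty$, so its supremum is finite. When $p < 0$, Assumption \ref{ass: utility}(iii)(c) gives $U(x) \le 0$ on $(0,\infty)$, hence $V(y) \le \sup_{x>0}(-xy) \le 0$.

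The case $p = 0$ is the one not covered by Lemma \ref{lem:rv}, so I would redo the argument by hand. From \eqref{ass: conv} with $\tilde U'(x) = 1/x$, for any $\epsilon > 0$ one can pick $M_\epsilon$ with $U'(x) \le (1+\epsilon)/x$ for $x \ge M_\epsilon$; integrating from $M_\epsilon$ yields $U(x) \le (1+\epsilon)\log x + C_\epsilon$ for $x \ge M_\epsilon$, while $U(x) \le U(M_\epsilon)$ for $x \le M_\epsilon$. Then $U(x) - xy \le (1+\epsilon)\log x - xy + \text{const}$ has finite supremum for every $y > 0$, so again $V(y) < \infty$.

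Combining both bounds, $(0,\infty) \subseteq \text{dom}(V) \subseteq (0,\infty)$, and taking closure gives $\overline{\text{dom}(V)} = [0,\infty)$. No serious obstacle is anticipated; the only subtlety is remembering to treat the logarithmic case separately from Lemma \ref{lem:rv} and to use the vanishing at infinity from Assumption \ref{ass: utility}(iii)(c) for $p < 0$, which is what rules out the potentially problematic behaviour of $U$ near zero.
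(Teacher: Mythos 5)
Your proof is correct, and it is actually slightly cleaner than the paper's. The paper proves the $p<0$ case in detail and, to show $V(y)>-\infty$, constructs explicit power-law lower bounds on $U$ near zero and near infinity (the near-zero bound coming from condition \eqref{ass: U/tU'}), then invokes the convex conjugates of those power functions. You observe instead that $V(y)\ge U(1)-y>-\infty$ is immediate because Assumption~\ref{ass: utility}(i) stipulates $U:(0,\infty)\to\Real$ is real-valued, so the lower bound never has any content; the only thing that actually needs Lemma~\ref{lem:rv}-style asymptotics (or the bound $U\le 0$ when $p<0$) is the upper bound $V(y)<\infty$. Your three case splits for the upper bound are the same as what the paper's ``similar'' clause would produce, and all are fine: for $p<0$ you use $\lim_{x\uparrow\infty}U(x)=0$ plus monotonicity to get $U\le 0$; for $0<p<1$ and $p=0$ you use \eqref{ass: conv} integrated from $M_\epsilon$ to dominate $U$ by a sub-linear function plus a constant, so that $U(x)-xy$ has a finite supremum. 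The byproduct worth noting is that, in your version, Assumption~\ref{ass: utility}(iii) is never invoked for this lemma; the extra restriction \eqref{ass: U/tU'} really only enters later, in the estimates that control the contribution of low wealth levels (e.g.\ Lemmas~\ref{lem: dom V} onward in the $p\le 0$ proofs).
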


\begin{proof}
 We prove the statement for the case $p<0$; the proofs for $0<p<1$ and $p=0$ are similar.
 As $U$ is increasing with $\lim_{x\uparrow \infty} U(x) =0$, we have $U(x) - xy \leq 0$ for any $x, y>0$. Therefore, $V(y)\leq 0$ for any $y>0$. To obtain a lower bound for $V$, recall the second inequality in \eqref{intest2}, which gives a lower bound for $U$ for $x\geq M_{\epsilon}$. For the same $M_\epsilon$, \eqref{ass: U/tU'} implies the existence of $C_M<0$ such that $U(x)\geq C_M x^{p-1}$ for $x< M_\epsilon$. Hence, when $y$ is small, $V(y)$ is bounded from below by the convex dual of $(1+\epsilon) x^p/p$, which is $-(1+\epsilon)^{\frac{1}{1-p}} \frac{p-1}{p} y^{\frac{p}{p-1}}$; when $y$ is large, $V(y)$ is bounded from below by the convex dual of $C_M x^{p-1}$, which is $-\frac{p-1}{p-2} ((p-1) C_M)^{-\frac{1}{p-2}} y^{\frac{p-1}{p-2}}$. Both convex dual functions are larger than negative infinity when $y\in (0,\infty)$. Therefore, $V(y)>-\infty$ for $y\in (0,\infty)$ confirming that $\text{dom}(V) =(0,\infty)$.
\end{proof}

\begin{lem}
The convex dual $V$ of the generic utility $U$ satisfies the dual asymptotic elasticity condition.
\end{lem}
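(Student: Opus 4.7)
The plan is to reduce the dual asymptotic elasticity of $V$ to the explicit behavior of $\tilde V$. Since $U$ is differentiable and strictly concave for sufficiently large wealth (Assumption \ref{ass: utility} i)), the subdifferential $-\partial V(y)$ coincides with the singleton $\{I(y)\}$ for all sufficiently small $y>0$. Hence it suffices to bound the single ratio $yI(y)/V(y)$ as $y\downarrow 0$.

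The key step is to combine the asymptotic $I(y)\sim y^{1/(p-1)}$ from \eqref{eq: I conv} with the two-sided bounds of Lemma \ref{lem:rv} to show $V(y)\sim\tilde V(y)$. Writing $V(y) = U(I(y)) - yI(y)$, for $p\neq 0$ Lemma \ref{lem:rv} gives $U(I(y)) = (1+o(1))I(y)^p/p + O(1)$, while $yI(y)\sim y^{p/(p-1)} = y^q$. Cancellation of the leading-order terms yields
\[
V(y)\sim y^q\left(\frac{1}{p}-1\right) = -\frac{y^q}{q} = \tilde V(y),
\]
where the additive constants $A_\epsilon,B_\epsilon$ in \eqref{eq: U bdd p>0} are negligible compared to $y^q$ (which diverges for $p\in(0,1)$ and vanishes like a strict power for $p<0$). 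For $p=0$, integrating \eqref{ass: conv} over $[M_\epsilon,I(y)]$ yields the log-analogue $U(I(y)) = -\log y + O(1)$ together with $yI(y)\to 1$, giving $V(y)\sim -\log y - 1 = \tilde V(y)$.

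Having $V(y)\sim\tilde V(y)$ and $I(y)\sim\tilde I(y)$, a direct computation then yields
\[
\limsup_{y\downarrow 0}\frac{yI(y)}{V(y)} \;=\; \lim_{y\downarrow 0}\frac{y\tilde I(y)}{\tilde V(y)} \;=\; \begin{cases}\dfrac{p}{1-p}, & p\ne 0,\\[4pt] 0, & p=0,\end{cases}
\]
which is finite in every case, so $AE_0(V)<\infty$.

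The main obstacle is the case analysis. For $p\in(0,1)$ both $V$ and $\tilde V$ blow up to $+\infty$ and one must verify that the additive constants $A_\epsilon,B_\epsilon$ are $o(y^q)$; for $p<0$ both duals are negative and vanish at $0$, so $V(y)\sim\tilde V(y)$ must be read as equivalence of two small negative quantities whose ratio tends to $1$; for $p=0$ the multiplicative bounds of Lemma \ref{lem:rv} do not apply directly, and one must substitute the logarithmic analogue obtained by integrating $U'(x)\sim 1/x$. Once these regimes are treated separately, the reduction to the explicit isoelastic quantities is routine.
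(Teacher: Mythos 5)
Your proof is correct and takes a related but distinct route. After reducing $AE_0(V)$ to the ratio $yI(y)/V(y)$, the paper uses only \eqref{eq: I conv}: integrating $-V'(y)=I(y)\geq(1-\epsilon)y^{1/(p-1)}$ over $(y,y_\epsilon)$ yields the one-sided lower bound $V(y)\geq(1-\epsilon)\tilde{V}(y)+D_\epsilon$, which together with the upper bound on $I(y)$ already gives $\limsup_{y\downarrow 0} yI(y)/V(y)<\infty$ uniformly in $p$, with no case analysis. You instead work from the Legendre identity $V(y)=U(I(y))-yI(y)$, plug in the primal bounds of Lemma \ref{lem:rv} together with \eqref{eq: I conv}, and establish the sharper asymptotic equivalence $V(y)\sim\tilde{V}(y)$, which pins down $AE_0(V)$ exactly as $p/(1-p)$ (resp.\ $0$ for $p=0$). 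Your route carries more information (the paper obtains the two one-sided bounds on $V$ separately, in Lemmas \ref{lem: dom V} and \ref{lem: finite}, but never assembles them into an equivalence), at the cost of Lemma \ref{lem:rv} as extra input and a three-way case split, since that lemma covers only $p\neq 0$; the paper's argument is shorter and needs only one side of \eqref{eq: I conv}. Two small imprecisions in your write-up, neither fatal: for $p<0$ the bounds \eqref{intest2} carry no additive constants, so the parenthetical about $A_\epsilon,B_\epsilon$ being negligible against a vanishing $y^q$ is moot there (the correct reading is the ratio of small negative quantities, as you note); and for $p=0$ integrating $xU'(x)\to 1$ gives the multiplicative estimate $U(I(y))=(1\pm\epsilon)\log I(y)+O_\epsilon(1)$ rather than an additive $O(1)$ error, though $U(I(y))\sim-\log y$ still follows and suffices for the stated conclusion.
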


\begin{proof}
As $-\partial V(y) = I(y)$ for sufficiently small $y$, the statement is equivalent to $\limsup_{y\downarrow 0} \frac{y I(y)}{V(y)}<\infty$. For any $\epsilon>0$, \eqref{eq: I conv} yields $(1-\epsilon) y^{\frac{1}{p-1}} \leq I(y) \leq (1+\epsilon) y^{\frac{1}{p-1}}$ for sufficiently small $y$, say $y\leq y_\epsilon$ for some $y_\epsilon$. Integrating the first inequality on $(y, y_\epsilon)$ yields $V(y) \geq (1-\epsilon) \tilde{V}(y) + D_\epsilon$, for $y\leq y_\epsilon$ and some $D_\epsilon$. Then, $\limsup_{y\downarrow 0} \frac{y I(y)}{V(y)}<\infty$ follows from this lower bound for $V(y)$ and the upper bound for $I(y)$.
\end{proof}

\begin{lem}\label{lem: finite}
The value functions $u^T, {\tilde u}^T$ in \eqref{eq: primary}, and $v^T, {\tilde v}^T$ in \eqref{eq:
dual} are finite.
\end{lem}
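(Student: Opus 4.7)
The plan is to split the lemma into four finiteness statements and leverage weak duality to transfer information between the primal and dual sides. The backbone is (i) the hypothesis $\tilde v^T < \infty$ (Assumption \ref{ass: wellposedness} when $p \geq 0$, automatic when $p < 0$), (ii) a pointwise comparison between $V$ and $\tilde V$ that upgrades this to $v^T < \infty$, and (iii) the Fenchel inequality to transfer finiteness from the dual to the primal side.

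The lower bounds are immediate. On the primal side, not trading yields the admissible, deterministic wealth $X_T = S^0_T$ (finite and strictly positive by Assumption \ref{ass: growth}), giving $u^T \geq U(S^0_T) > -\infty$ and likewise $\tilde u^T \geq \tilde U(S^0_T) > -\infty$. On the dual side, evaluating the supremum defining $V(y) = \sup_x(U(x) - xy)$ at the single point $x = 1$ yields $V(y) \geq U(1) - y$; since every $Y \in \mathcal Y$ has $\expec[Y_T]$ bounded (via the supermartingale property of $X\overline Y$ applied to the no-trade wealth), this yields $\expec[V(yY_T)] > -\infty$ and the analogous bound for $\tilde v^T$.

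For the upper bound on $\tilde v^T$, I would appeal directly to Assumption \ref{ass: wellposedness} when $p \geq 0$; when $p < 0$, $\tilde V(y) = ((1-p)/p)\, y^{p/(p-1)}$ is nonpositive, so $\tilde v^T \leq 0 < \infty$ trivially. The heart of the argument is the upper bound on $v^T$. For $p \neq 0$ I would use Lemma \ref{lem:rv} (and for $p = 0$ its immediate analogue obtained by integrating $U'/\tilde U' \to 1$) to find, for each $\epsilon > 0$, a threshold $M_\epsilon$ with $U(x) \leq (1+\epsilon)\tilde U(x) + A_\epsilon$ for $x \geq M_\epsilon$. Splitting the supremum defining $V(y)$ at $M_\epsilon$, the contribution from $\{x \leq M_\epsilon\}$ is at most $U(M_\epsilon)$ by monotonicity of $U$, while the contribution from $\{x \geq M_\epsilon\}$ is controlled by the Fenchel transform of the rescaled isoelastic utility, which equals $(1+\epsilon)^{1-q}\tilde V(y)$ (respectively an explicit log expression for $p = 0$). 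Combining yields a pointwise bound $V(y) \leq C_1 \tilde V(y) + C_2$ with $C_1 > 0$; taking its expectation at $yY_T$ and then the infimum over $\mathcal Y$ gives $v^T(y) \leq C_1 \tilde v^T(y) + C_2 < \infty$. For $p < 0$ the comparison can be shortcut since $U \leq 0$ by Assumption \ref{ass: utility} iii) c) already forces $V \leq 0$.

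Finally, the upper bounds on $u^T$ and $\tilde u^T$ follow from weak duality: since $XY$ is a nonnegative supermartingale with $X_0 Y_0 = 1$, the Fenchel inequality $U(X_T) \leq V(yY_T) + yX_T Y_T$ integrates to $\expec[U(X_T)] \leq \expec[V(yY_T)] + y$, so $u^T \leq v^T(y) + y < \infty$, and the same argument handles $\tilde u^T$. The main technical subtlety is the pointwise comparison $V \leq C_1 \tilde V + C_2$, especially for $p = 0$ where $\tilde V$ changes sign; one must check that the low-wealth constant $U(M_\epsilon)$ can be absorbed into $C_2$ uniformly in $y$. This reduces to controlling the positive part of $V$, and poses no real problem since $\tilde V$ is bounded below on each compact subinterval of $(0,\infty)$ and tends to $+\infty$ near zero.
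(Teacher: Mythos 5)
Your proposal is correct and reaches the same conclusion, but by a genuinely different route. The paper works entirely on the dual side: from the convergence~\eqref{eq: I conv} it extracts $I(y)\le (1+\epsilon)\,y^{1/(p-1)}$ for $y\le y_\epsilon$, integrates over $(y,y_\epsilon)$ to obtain $V(y)\le(1+\epsilon)\tilde V(y)+\tilde D_\epsilon$ on $(0,y_\epsilon]$, extends to all $y$ by monotonicity of $V$, and then combines with Assumption~\ref{ass: wellposedness} (for $p\ge 0$) or $\tilde V\le 0$ (for $p<0$), deducing $u^T<\infty$ from $u^T\le \inf_y(v^T(y)+y)$ exactly as you do. You instead start from the primal comparison $U\le(1+\epsilon)\tilde U+A_\epsilon$ of Lemma~\ref{lem:rv} and transport it to $V$ by splitting the supremum at $M_\epsilon$ and computing the Fenchel transform of the rescaled isoelastic utility. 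The two routes are dual to each other and yield the same bound on $V$; the paper's is marginally shorter because it already has~\eqref{eq: I conv} in hand and avoids the supremum split, while yours is more self-contained at the primal level. Your shortcut for $p<0$ via $U\le 0\Rightarrow V\le 0$ parallels the paper's observation that $U$ is bounded above. One caveat you should tighten: as stated, the uniform pointwise bound $V\le C_1\tilde V+C_2$ is false for $p=0$, since $\tilde V(y)=-\log y-1\to-\infty$ as $y\to\infty$ while $V(y)\ge U(0^+)$ if the latter is finite, so $U(M_\epsilon)$ cannot be absorbed into $C_2$ uniformly. You flag this, but the concrete fix is more specific than "controlling the positive part of $V$": bound $V(y)\le V(y_\epsilon)$ for $y\ge y_\epsilon$ (finite by Lemma~\ref{lem: dom V}), use the comparison only on $(0,y_\epsilon)$ with $y_\epsilon<e^{-1}$ where $\tilde V>0$, and control $\expec[\tilde V^+(yY_T)]$ via Assumption~\ref{ass: wellposedness} together with $\expec[\tilde V^-(yY_T)]\le\expec[yY_T]<\infty$ (using $\log z+1\le z$). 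The paper is equally terse on this point ("combining \dots with Assumption~\ref{ass: wellposedness}"), so the gap is not disqualifying, but it deserves to be spelled out.
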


\begin{proof}
If $p<0$, \eqref{eq: I conv} yields $I(y)\leq (1+\epsilon)y^{\frac{1}{p-1}}$ for $y\leq y_\epsilon$. Integrating this inequality on $(y, y_\epsilon)$ yields $V(y)\leq (1+\epsilon) \tilde{V}(y) + \tilde{D}_\epsilon$, $y\leq y_\epsilon$, for some $\tilde{D}_\epsilon$. As $\tilde{V}(y)<0$ and $V(y)$ is decreasing, this upper bound implies that $V(y)$ is uniformly bounded from above on $(0,\infty)$, so that $v^T(y)<\infty$ for any $y>0$. On the other hand, it follows from $\tilde{U}(x) \leq 0$ and the first inequality in \eqref{intest2} that $U(x)$ is uniformly bounded from above. Hence, $u^T <\infty$ for any $T>0$ as well. When $0\leq p<1$, the same argument as in the case $p<0$ gives $V(y) \leq (1+\epsilon) \tilde{V}(y) + \tilde{D}_\epsilon$, $y\leq y_\epsilon$, for some $\tilde{D}_\epsilon$. As $V(y)$ is decreasing, $V(y)$ is also bounded from above for $y\geq y_\epsilon$. Combining this upper bound for $V(y)$ with Assumption \ref{ass: wellposedness} gives $v^T(y)<\infty$, for any $T>0$ and $y>0$. Hence $u^T$ is also finite, due to the duality relation $u^T \leq \inf_{y>0}(v^T(y) + y)$.
\end{proof}

The previous three lemmata verify all assumptions in Theorem \ref{thm:duality BTZ}, hence the statements therein hold. For the isoelastic problem $\tilde{u}^T$, we recall from \cite[Lemma 5]{Guasoni-Robertson} the following duality result, which will be frequently used below.

\begin{lem}\label{lem: power dual}
Let $X,Y$ be $\F_T$-measurable random variables such that $X,Y>0$ almost surely and $\mathbb{E}[XY] \leq 1$. Then, for any $0\neq p<1$,
$$\frac{1}{p} \mathbb{E}[X^p] \leq \frac{1}{p}\mathbb{E}[Y^q]^{1-p}, \quad \text{for $q=\frac{p}{p-1}$},$$
and equality holds if and only if $\mathbb{E}[XY]=1$ and $X^{p-1}=\alpha Y$ for some $\alpha>0$.
\end{lem}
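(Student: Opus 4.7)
My plan is to derive the bound from the pointwise Fenchel--Young inequality for the isoelastic utility $\tilde U$ and its dual $\tilde V$, followed by a one-parameter scaling optimisation. The point is that the conjugate $\tilde V(y)=-y^q/q$ combined with $\tilde U(x)=x^p/p$ satisfies
\[
\frac{x^p}{p}\le xy-\frac{y^q}{q}\qquad\text{for all }x,y>0,
\]
with equality iff $y=\tilde U'(x)=x^{p-1}$. This is the only ``calculus'' ingredient; the rest is a sharp rescaling.

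For the main inequality I would apply the above pointwise with $x=X(\omega)$ and $y=\alpha Y(\omega)$ for an arbitrary constant $\alpha>0$, integrate, and use the hypothesis $\expec[XY]\le 1$:
\[
\frac{1}{p}\expec[X^p]\le \alpha\,\expec[XY]-\frac{\alpha^q}{q}\expec[Y^q]\le \alpha-\frac{\alpha^q}{q}\expec[Y^q].
\]
Assuming $\expec[Y^q]<\infty$ (otherwise the claimed bound is vacuous), the right-hand side is minimised over $\alpha>0$ at $\alpha_*=\expec[Y^q]^{1-p}$, as seen by solving $1=\alpha^{q-1}\expec[Y^q]$ and noting $1/(q-1)=p-1$. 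Substituting, using $q(1-p)=-p$ and $1-1/q=1/p$, the bound collapses to $\tfrac{1}{p}\expec[Y^q]^{1-p}$, which is the claim. A small detail is that for $0<p<1$ one has $q<0$, while for $p<0$ one has $q\in(0,1)$; in both cases the optimum $\alpha_*$ is finite and positive, and the minus sign on $\alpha^q/q$ has the correct orientation to yield a valid upper bound (this is the only place where one has to be attentive to signs).

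For the equality case I would simply retrace the two inequalities. Equality in $\expec[XY]\le 1$ forces $\expec[XY]=1$, and pointwise equality in Fenchel--Young at every $\omega$ forces $\alpha_* Y=X^{p-1}$ a.s., which is the stated relation $X^{p-1}=\alpha Y$ with $\alpha=\alpha_*>0$. Conversely, given $\expec[XY]=1$ and $X^{p-1}=\alpha Y$, running the computation with this specific $\alpha$ produces equality throughout.

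I do not anticipate a serious obstacle: the result is essentially a sharp form of Hölder/Young, and an alternative route via Hölder's inequality with conjugate exponents $1/p$ and $1/(1-p)$ applied to $X^p=(XY)^p Y^{-p}$ (for $0<p<1$), or its reverse version for $p<0$, yields the same statement. The only point that requires a little care is the domain: one must allow $\expec[Y^q]=\infty$ as a trivially valid case and, implicitly, acknowledge that if $\expec[X^p]=-\infty$ (possible when $p<0$) the inequality is also vacuous. Once those degenerate cases are noted, the scaling argument above gives the lemma in a few lines.
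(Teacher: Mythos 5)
Your argument is correct and complete. The Fenchel--Young inequality $x^p/p \le xy - y^q/q$, applied pointwise to $x=X$ and $y=\alpha Y$, integrated, and then optimised over the scaling parameter $\alpha>0$, gives exactly the stated bound after the arithmetic you carry out ($\alpha_*=\expec[Y^q]^{1-p}$, $q(1-p)=-p$, $1-1/q=1/p$), and you correctly observe that $f(\alpha)=\alpha-\alpha^q\expec[Y^q]/q$ is convex in both regimes $0<p<1$ (where $q<0$) and $p<0$ (where $q\in(0,1)$), so $\alpha_*$ is indeed the minimiser. The equality characterisation also follows cleanly from tracing the two inequalities as you describe. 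Note that the paper itself does not prove this lemma but instead cites it from Guasoni and Robertson (their Lemma 5); your Fenchel--Young--plus--scaling derivation and the Hölder/reverse-Hölder route you sketch as an alternative are both standard ways to obtain it.

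One small slip in the discussion of degenerate cases: for $p<0$ and $X>0$ one has $X^p>0$, so $\expec[X^p]$ cannot equal $-\infty$; the quantity that can be $-\infty$ is $\tfrac{1}{p}\expec[X^p]$, when $\expec[X^p]=+\infty$. In fact, for $p<0$ the case $\expec[Y^q]=\infty$ is not truly vacuous: integrating the pointwise Fenchel--Young inequality (all three terms single-signed, and $\expec[\alpha XY]\le\alpha<\infty$) shows that $\expec[Y^q]=\infty$ together with $\expec[XY]\le 1$ forces $\expec[X^p]=\infty$, so both sides equal $-\infty$ and the asserted inequality holds with equality in the extended sense. This does not affect the validity of your proof, but tightens the remark at the end.
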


Finally, we note that $\lim_{T\rightarrow \infty} S^0_T=\infty$ in Assumption \ref{ass: growth} implies the following long-run property for stochastic discount factors:
\begin{equation}\label{eq: expec Y conv}
 \limT \sup_{Y\in \mathcal{Y}}\expec[Y_T] =0.
\end{equation}
Indeed, a full safe investment is admissible, so that the supermartingale property of $S^0Y$ yields $S^0_0 \geq S^0_T\expec[Y_T]$ for any $Y\in \mathcal{Y}$. Assertion \eqref{eq: expec Y conv} in turn follows from Assumption~\ref{ass: growth}.

\section{Proof of the main result for $p\neq 0$}

Different arguments are needed to establish the main result in the cases where the corresponding isoelastic utilities are of power type ($p\neq 0$) or logarithmic ($p=0$), respectively.

First, consider the case $p \neq 0$.
Recall from Lemma \ref{lem:rv} that the generic utility $U$ is regularly varying at infinity. As a result, the convergence \eqref{eq: limit ce} of the ratio of certainty equivalents follows from the convergence of the ratio of the respective utilities:

\begin{lem}\label{lem:equiv}
Let Assumptions \ref{ass: growth} - \ref{ass: wellposedness} hold. Then \eqref{eq: limit ce} holds provided that
 \begin{equation}\label{eq: value conv 1}
  \limT \frac{\expec[U(X^T_T )]}{\expec[U(\tilde{X}^T_T )]} =1.
 \end{equation}
 \end{lem}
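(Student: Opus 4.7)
The plan is to leverage the regular variation of $U$ at infinity established in Lemma \ref{lem:rv}, transfer it to $U^{-1}$, and then apply Karamata's Uniform Convergence Theorem. Write $a_T := \expec[U(X^T_T)]$ and $b_T := \expec[U(\tilde X^T_T)]$: the hypothesis \eqref{eq: value conv 1} reads $a_T/b_T\to 1$, and the goal \eqref{eq: limit ce} is $U^{-1}(b_T)/U^{-1}(a_T)\to 1$. Morally, this is the statement that the generalized inverse of a regularly varying function preserves asymptotic equivalence of its arguments.

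The first step is to show that $a_T$ and $b_T$ both approach $\sup U$ (which equals $+\infty$ for $p\in(0,1)$ and $0^-$ for $p<0$). A pure safe-asset investment is admissible with terminal value $S^0_T$, so optimality of $X^T_T$ for $U$ gives $a_T \geq U(S^0_T)$; combined with $S^0_T\uparrow\infty$ (Assumption \ref{ass: growth}) and the asymptotics of $U$ from Lemma \ref{lem:rv}, this forces $a_T\to \sup U$. Since $b_T\leq a_T$ by optimality of $X^T_T$ for $U$ and $a_T/b_T\to 1$, the same limit applies to $b_T$; when $p<0$ this last step uses that $a_T,b_T<0$ eventually, so that $|b_T|\leq 2|a_T|\to 0$.

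The second step is to convert regular variation of $U$ into a uniform convergence statement for $U^{-1}$. For $p\in(0,1)$, $U^{-1}$ is regularly varying at $+\infty$ with index $1/p$. For $p<0$, Lemma \ref{lem:rv}(ii) yields $U(x)\sim x^p/p$ as $x\uparrow\infty$, which inverts to $U^{-1}(y)\sim (py)^{1/p}$ as $y\uparrow 0$; setting $h(t):=U^{-1}(-1/t)$ gives a positive function regularly varying at $+\infty$ with index $-1/p>0$. In either case Karamata's theorem produces
\[
\frac{U^{-1}(\lambda y)}{U^{-1}(y)}\longrightarrow \lambda^{1/p}
\qquad\text{uniformly in $\lambda$ over compact subsets of $(0,\infty)$},
\]
as $y$ tends to the relevant boundary. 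Taking $\lambda_T:=b_T/a_T$ (or the corresponding ratio in the rescaled variable when $p<0$), which is eventually confined to a small neighborhood of $1$, yields $U^{-1}(b_T)/U^{-1}(a_T)\to 1$, which is exactly \eqref{eq: limit ce}.

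The main obstacle is purely bookkeeping in the case $p<0$: $U$ is strictly negative and the relevant asymptotic is at $y\uparrow 0$ rather than at $y\uparrow\infty$, so one must be careful with signs and with the domain of $U^{-1}$. The substitution $y=-1/t$ neatly reduces this to the standard version of Karamata's theorem at $+\infty$. All other ingredients (existence and optimality of $X^T_T$ and $\tilde X^T_T$, admissibility of the safe investment, and the regular variation of $U$) are already supplied by Lemma \ref{lem:rv} and Theorem \ref{thm:duality BTZ}.
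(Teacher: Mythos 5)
Your proposal is correct, and it reaches the same conclusion via a closely related but distinct route. The paper never actually transfers regular variation to $U^{-1}$: it works directly with $U$, arguing by contradiction. Writing $a_T = U^{-1}(\expec[U(X^T_T)])$ and $b_T = U^{-1}(\expec[U(\tilde X^T_T)])$ for the certainty equivalents, it assumes some subsequence has $a_{T_n}/b_{T_n} \geq 1+\epsilon$, applies monotonicity of $U$ to get $U(a_{T_n}) \geq U\bigl((1+\epsilon)b_{T_n}\bigr)$, and invokes only the \emph{pointwise} regular-variation limit $U((1+\epsilon)b)/U(b)\to(1+\epsilon)^p$ with a \emph{fixed} scaling $\lambda = 1+\epsilon$ to contradict \eqref{eq: value conv 1}. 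You instead establish that $U^{-1}$ is itself regularly varying (via the standard inversion theorem, with the $y\mapsto -1/t$ substitution for $p<0$) and then invoke Karamata's Uniform Convergence Theorem to handle the \emph{moving} scaling factor $\lambda_T = b_T/a_T\to 1$. Both arguments are sound; yours is more direct and arguably more conceptual, at the cost of importing two pieces of RV machinery (inversion and UCT) that the paper's contradiction-plus-monotonicity trick manages to sidestep. Note also that the uniformity in UCT is genuinely needed in your version, since $\lambda_T$ varies, so it cannot be dispensed with; the paper's fixed-$\epsilon$ contradiction is precisely the device that avoids that appeal. Your preliminary step (showing $a_T, b_T \to \sup U$ by comparing with the pure safe investment $S^0_T$) matches the paper's argument verbatim.
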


 \begin{proof}
To simplify notation, define $a_T = U^{-1}(\expec[U(X^T_T)])$ as well as $b_T = U^{-1}(\expec[U(\tilde{X}^T_T)])$, and note that $a_T, b_T\geq 0$ and $a_T\geq b_T$.

Observe that $\lim_{T\rightarrow \infty} a_T = \lim_{T\rightarrow \infty} b_T = \infty$. Indeed, for $p\in (0,1)$, $\limT \expec[U(X_T^T )] \geq \limT \expec[U(S^0_T )] = \infty$ follows from $\lim_{x\uparrow \infty} U(x) = \infty$ (cf. the first inequality in \eqref{eq: U bdd p>0}) and Assumption \ref{ass: growth}. In view of \eqref{eq: value conv 1}, we also have $\lim_{T\rightarrow \infty} \expec[U(\tilde{X}^T_T)] =\infty$. As $U^{-1}(\infty)=\infty$, this confirms that $\lim_{T\rightarrow \infty} a_T = \lim_{T\rightarrow \infty} b_T = \infty$. When $p<0$, the argument is analogous, with the difference that $U(\infty)=0$, whence both expected utilities converge to zero, and therefore certainty equivalents are also become infinite.

Now, prove \eqref{eq: limit ce} by contradiction. Suppose that, for any $\epsilon>0$, there exists a subsequence $\{T_n\}_{n\geq 0}$ with $\lim_{n\rightarrow \infty} T_n=\infty$ such that $a_{T_n} / b_{T_n} \geq 1+\epsilon$. For $p\in (0,1)$, the regular variation of $U$ at infinity implies that
 \[
  \liminf_{T_n\rightarrow \infty} \frac{U(a_{T_n})}{U(b_{T_n})} \geq \liminf_{T_n\rightarrow \infty} \frac{U((1+\epsilon) b_{T_n})}{U(b_{T_n})} = (1+\epsilon)^p >1.
 \]
For $p<0$, the inequality $U(b_{T_n}) <0$ and the regular variation of $U$ at infinity imply that
 \[
  \limsup_{T_n\rightarrow \infty} \frac{U(a_{T_n})}{U(b_{T_n})} \leq \limsup_{T_n\rightarrow \infty}\frac{U((1+\epsilon) b_{T_n})}{U(b_{T_n})} = (1+\epsilon)^p <1,
 \]
and both inequalities above contradict \eqref{eq: value conv 1}.
\end{proof}

To prove Theorem \ref{thm:mt}, it therefore remains to show the following:

\begin{prop}\label{prop: limit value}
Let Assumptions \ref{ass: growth} - \ref{ass: wellposedness} hold. Then
 \begin{equation}\label{eq: value conv}
  \limT \frac{\expec[U(X^T_T )]}{\expec[U(\tilde{X}^T_T )]} =1.
 \end{equation}
\end{prop}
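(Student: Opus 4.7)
The analysis splits by the sign of $p$: for $p\in(0,1)$ one has $\tilde u^T\to+\infty$ (take $X=S^0$ to obtain $\tilde u^T\ge (S^0_T)^p/p$), while for $p<0$ one has $\tilde u^T\uparrow 0^-$. Thus bounded additive errors are harmless when divided by $\tilde u^T$ in the first case, but in the second they must be controlled by $|\tilde u^T|$ itself, making $p<0$ considerably harder.

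For $p\in(0,1)$ a direct sandwich suffices. Lemma~\ref{lem:rv}(i), the monotonicity of $U$, and the lower bound on $U$ from Assumption~\ref{ass: utility} iii)a) give
\[
u^T \le (1+\epsilon)\,\mathbb{E}[\tilde U(X^T_T)] + A_\epsilon + U(M_\epsilon) \le (1+\epsilon)\,\tilde u^T + C_\epsilon,
\]
where the nonnegativity of $\tilde U$ for $p>0$ lets one discard the nonpositive correction $-(1+\epsilon)\mathbb{E}[\tilde U(X^T_T)\mathbf{1}_{X^T_T<M_\epsilon}]$, and $\mathbb{E}[\tilde U(X^T_T)]\le\tilde u^T$ by optimality. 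Symmetrically, $u^T\ge\mathbb{E}[U(\tilde X^T_T)]\ge (1-\epsilon)\,\tilde u^T-C'_\epsilon$. Dividing by $\tilde u^T\to\infty$ and sending $\epsilon\downarrow 0$ proves \eqref{eq: value conv}.

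For $p<0$ I would first sharpen Lemma~\ref{lem: finite} to a multiplicative estimate: integrating \eqref{eq: I conv} from zero (using $V(0^+)=0$ since $U(\infty)=0$) yields $V(y)=(1+o(1))\,\tilde V(y)$ as $y\downarrow 0$, so $|V-\tilde V|\le \epsilon\,|\tilde V|$ on $(0,y_\epsilon]$, and moreover $(V-\tilde V)^+$ is globally bounded because $V$ decays faster than $\tilde V$ at infinity. Plugging $y=\tilde y^T$ into $u^T\le v^T(y)+y$ and testing $v^T$ with $\tilde Y^T$ gives
\[
u^T-\tilde u^T \le \mathbb{E}\bigl[(V-\tilde V)(\tilde y^T\tilde Y^T_T)\bigr].
\]
On $\{\tilde y^T\tilde Y^T_T\le y_\epsilon\}$ this is dominated by $\epsilon\,|\tilde v^T(\tilde y^T)|=\epsilon(1+|p|)|\tilde u^T|$, using the power-scaling identities $\tilde y^T=|p||\tilde u^T|$ and $|\tilde v^T(\tilde y^T)|=(1+|p|)|\tilde u^T|$. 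On the complementary event, Markov's inequality combined with $\mathbb{E}[\tilde y^T\tilde Y^T_T]=|p||\tilde u^T|\,\mathbb{E}[\tilde Y^T_T]=o(|\tilde u^T|)$ (via \eqref{eq: expec Y conv}) shows that the contribution is $o(|\tilde u^T|)$. Dividing by $\tilde u^T<0$ yields $u^T/\tilde u^T\ge 1-O(\epsilon)-o(1)$.

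The matching bound $u^T/\tilde u^T\le 1+o(1)$ follows from $u^T\ge\mathbb{E}[U(\tilde X^T_T)]$ by showing $\mathbb{E}[U(\tilde X^T_T)]/\tilde u^T\to 1$. On $\{\tilde X^T_T\ge M_\epsilon\}$, Lemma~\ref{lem:rv}(ii) bounds $|U-\tilde U|$ by $\epsilon\,|\tilde U|$ directly. On $\{\tilde X^T_T<M_\epsilon\}$, Assumption~\ref{ass: utility} iii)c) yields $|U(x)|\le C\,\tilde U'(x)$, while $|\tilde U(x)|\le (M_\epsilon/|p|)\tilde U'(x)$ in this range; the isoelastic first-order condition $\tilde U'(\tilde X^T_T)=\tilde y^T\tilde Y^T_T$ then reduces the small-wealth contribution to $C'\tilde y^T\mathbb{E}[\tilde Y^T_T]=o(|\tilde u^T|)$. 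The main obstacle is precisely this last step: without \eqref{ass: U/tU'} the quantity $|U|$ need not be dominated by $\tilde U'$ near zero, so the conversion to the stochastic discount factor fails, and Section~\ref{sec:example} exhibits such a case. Sending $\epsilon\downarrow 0$ in both directions completes the proof.
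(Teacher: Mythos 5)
Your proposal is a genuinely different route from the paper's. The paper proves Proposition~\ref{prop: limit value} through the chain Lemma~\ref{lem: U-p-tx p>0}/\ref{lem: U-p-tx} (comparing $\expec[U(\tilde X^T_T)]$ to $\expec[\tilde U(\tilde X^T_T)]$) plus Lemma~\ref{lem: ratio Y p>0}/\ref{lem: ratio Y} (comparing dual values $\expec[(Y^T_T)^q]$ to $\expec[(\tilde Y^T_T)^q]$), and for $p<0$ replaces $x^p$ with $(x+a)^p$ to tame the singularity at zero. Your $p\in(0,1)$ sandwich is simpler and correct: since $\tilde U>0$, the additive constants from Lemma~\ref{lem:rv}(i), the $U(M_\epsilon)$-cap, and the lower bound on $U$ are all swamped by $\tilde u^T\to\infty$, and no dual machinery is needed; note only that what you bound is $u^T/\tilde u^T$ and $\expec[U(\tilde X^T_T)]/\tilde u^T$, from which \eqref{eq: value conv} follows by division. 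Your $p<0$ plan, working with the dual gap $u^T-\tilde u^T\le\expec[(V-\tilde V)(\tilde y^T\tilde Y^T_T)]$ evaluated at the isoelastic $\tilde Y^T$, is also an attractive alternative to the paper's primal shift-by-$a$ argument.

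There is, however, a genuine error in one step of the $p<0$ case: the claim that ``$(V-\tilde V)^+$ is globally bounded because $V$ decays faster than $\tilde V$ at infinity.'' Condition~\eqref{ass: U/tU'} only bounds $U$ from \emph{below} near zero; it does not force $U$ to diverge there. If, say, $U$ is bounded near the origin (take $U(x)=U(1)$ for $x\le 1$, continued suitably), then $V$ is bounded below while $\tilde V(y)=-y^q/q\to-\infty$, so $(V-\tilde V)^+(y)\sim y^q/q\to\infty$; your Markov step then fails exactly in a case the theorem must cover. The claim about comparative decay is reversed as well: \eqref{ass: U/tU'} caps how negative $U$ can get near $0$, so $V$ can grow to $-\infty$ at most as fast as would the dual of $Cx^{p-1}$, but nothing prevents it from growing much slower than $\tilde V$.

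Fortunately the step is repairable without the boundedness claim. Since $U\le 0$ (for $p<0$) one has $V\le 0$, hence the crude inequality $(V-\tilde V)^+(y)\le -\tilde V(y)=y^q/q$ everywhere. On $\{\tilde y^T\tilde Y^T_T>y_\epsilon\}$ use $q\in(0,1)$ to write $y^q=y^{q-1}y\le y_\epsilon^{\,q-1}y$. Then
\[
\expec\bigl[(V-\tilde V)^+(\tilde y^T\tilde Y^T_T)\indic_{\{\tilde y^T\tilde Y^T_T>y_\epsilon\}}\bigr]
\le \frac{y_\epsilon^{\,q-1}}{q}\,\tilde y^T\,\expec[\tilde Y^T_T]
= \frac{y_\epsilon^{\,q-1}|p|}{q}\,|\tilde u^T|\,\expec[\tilde Y^T_T]
= o\bigl(|\tilde u^T|\bigr)
\]
by \eqref{eq: expec Y conv}, which is precisely the estimate you wanted to invoke. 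With that substitution, the rest of your $p<0$ argument (small-$y$ bound by $\epsilon|\tilde v^T(\tilde y^T)|=\epsilon(1-p)|\tilde u^T|$, the matching bound via $\expec[U(\tilde X^T_T)]/\tilde u^T\to 1$, and the reduction of the small-wealth contribution through $\tilde U'(\tilde X^T_T)=\tilde y^T\tilde Y^T_T$) goes through.
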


In order to compare the utilities $\expec[U(X^T_T)]$ and $\expec[U(\tilde{X}^T_T)]$ as $T\rightarrow \infty$, compare both of them to the maximal isoelastic utilities $\expec[(\tilde{X}^T_T)^p/p]$. Henceforth, we prove Proposition \ref{prop: limit value} separately for $0<p<1$ and $p<0$, as the contribution of low wealth levels requires a different treatment in each case.

\subsection{Proof of Proposition \ref{prop: limit value} for $0<p<1$}

We first focus on the long-run performance of the isoelastic portfolio $\tilde{X}^T_T$.

\begin{lem}\label{lem: U-p-tx p>0}
 Let Assumptions \ref{ass: growth} - \ref{ass: wellposedness} hold. Then:
 \[
  \limT \frac{\expec[U(\tilde{X}^T_T)]}{\expec[(\tilde{X}^T_T)^p/p]} =1.
 \]
\end{lem}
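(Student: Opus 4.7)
The plan is to exploit the sandwich bound from Lemma \ref{lem:rv} i), which says that for each $\epsilon>0$ there exist constants $A_\epsilon,B_\epsilon,M_\epsilon$ with
\[
(1-\epsilon)\,x^p/p + B_\epsilon \;\le\; U(x)\;\le\; (1+\epsilon)\,x^p/p + A_\epsilon \qquad (x\ge M_\epsilon),
\]
together with the fact that $U$ is bounded on $(0,M_\epsilon]$ since $U$ is increasing and bounded below (Assumption \ref{ass: utility} iii) a)). The idea is that these two facts together control $U$ uniformly on the full half-line by an affine function of $x^p/p$ plus a universal constant, and then the convergence should follow by dividing by $\mathbb{E}[(\tilde X^T_T)^p/p]$, provided that denominator blows up.

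Concretely, I would first observe that $\mathbb{E}[(\tilde X^T_T)^p/p]\to\infty$ as $T\to\infty$. This is because a full safe investment is admissible, giving wealth $S^0_T$ at time $T$, so the optimality of $\tilde X^T$ yields
\[
\mathbb{E}[(\tilde X^T_T)^p/p]\;\ge\;(S^0_T)^p/p\;\to\;\infty
\]
by Assumption \ref{ass: growth}. Next, split $\mathbb{E}[U(\tilde X^T_T)]$ according to whether $\tilde X^T_T\ge M_\epsilon$ or $\tilde X^T_T < M_\epsilon$. On $\{\tilde X^T_T\ge M_\epsilon\}$ apply the sandwich above; on $\{\tilde X^T_T<M_\epsilon\}$ the contribution to $\mathbb{E}[U(\tilde X^T_T)]$ is bounded in absolute value by $\max\{|L_\epsilon|,U(M_\epsilon)\}$, where $L_\epsilon$ is the lower bound on $U$. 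For the denominator, simply note $0\le (\tilde X^T_T)^p/p\cdot\mathbf 1_{\tilde X^T_T<M_\epsilon}\le M_\epsilon^p/p$, so removing or inserting the indicator $\mathbf 1_{\tilde X^T_T\ge M_\epsilon}$ only changes $\mathbb{E}[(\tilde X^T_T)^p/p]$ by a bounded additive amount.

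Putting these pieces together yields two-sided bounds of the form
\[
(1-\epsilon)\,\mathbb{E}[(\tilde X^T_T)^p/p] - C_\epsilon \;\le\; \mathbb{E}[U(\tilde X^T_T)] \;\le\; (1+\epsilon)\,\mathbb{E}[(\tilde X^T_T)^p/p] + C_\epsilon
\]
for some constant $C_\epsilon$ depending on $\epsilon$ but not on $T$. Dividing by $\mathbb{E}[(\tilde X^T_T)^p/p]$ and sending $T\to\infty$ kills the $C_\epsilon$ term because the denominator diverges, giving
\[
1-\epsilon\;\le\;\liminf_{T\to\infty}\frac{\mathbb{E}[U(\tilde X^T_T)]}{\mathbb{E}[(\tilde X^T_T)^p/p]}\;\le\;\limsup_{T\to\infty}\frac{\mathbb{E}[U(\tilde X^T_T)]}{\mathbb{E}[(\tilde X^T_T)^p/p]}\;\le\;1+\epsilon,
\]
and the conclusion follows since $\epsilon>0$ is arbitrary.

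The only non-routine point is the divergence of $\mathbb{E}[(\tilde X^T_T)^p/p]$, which is what makes the $T$-independent error $C_\epsilon$ absorbed in the limit. This is where Assumption \ref{ass: growth} ($S^0_t\uparrow\infty$) is crucial; without it the additive remainder from low wealth levels could be comparable to the main term, and the argument would break down. Everything else is elementary, and the structure of the argument is symmetric in $p\in(0,1)$ vs.\ $p<0$, though the sign conventions and which side of the sandwich is which will need to be swapped in the $p<0$ case treated in the following lemma.
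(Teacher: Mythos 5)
Your argument is correct, but it takes a genuinely different (and more elementary) route than the paper's. Both proofs start from the same sandwich bound \eqref{eq: U bdd p>0} and use the fact that $\expec[(\tilde{X}^T_T)^p/p]\ge (S^0_T)^p/p\to\infty$. The divergence occurs in how the term $\expec[(\tilde{X}^T_T)^p\,\indic_{\{\tilde{X}^T_T\ge M_\epsilon\}}]/\expec[(\tilde{X}^T_T)^p]$ is shown to converge to $1$. The paper introduces the auxiliary probability $\prob^T$ with density $\tilde{X}^p_T/\expec[\tilde{X}^p_T]$, and invokes the numeraire inequality \eqref{eq: numeraire} (with $X=S^0$) to show $\prob^T(\tilde{X}_T\ge N)\to 1$ for every $N$ --- a tail-estimate under the tilted measure. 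You instead observe that for $p\in(0,1)$ the low-wealth piece of the denominator is itself uniformly bounded, because $0\le x^p/p\le M_\epsilon^p/p$ on $(0,M_\epsilon]$; so inserting the indicator $\indic_{\{\tilde{X}^T_T\ge M_\epsilon\}}$ in the denominator only costs a $T$-independent additive constant, which is annihilated by the diverging denominator. Your argument avoids $\prob^T$ and the numeraire property entirely and is shorter. The likely reason the paper uses the heavier machinery here is uniformity with the case $p<0$ (Lemma \ref{lem: U-p-tx}), where $x^p/p$ is unbounded near $0$ and your simple bound breaks down; indeed, your closing remark that ``the structure of the argument is symmetric in $p\in(0,1)$ vs.\ $p<0$'' is misleading --- the $p<0$ case genuinely requires the first-order condition $\tilde{X}^{p-1}_T=\tilde{y}^T\tilde{Y}_T$ together with \eqref{ass: U/tU'} to control the low-wealth contribution, and the $\prob^T$ argument for the high-wealth piece; your shortcut does not carry over. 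Within the stated scope ($0<p<1$) your proof is complete and correct.
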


\begin{proof}
 To simplify notation, the superscript $T$ in $\tilde{X}^T_T$ is omitted throughout this proof. By \eqref{eq: U bdd p>0},
  \begin{equation}
 \begin{split}
  &\frac{\expec\bra{U(\tilde{X}_T ) \, \indic_{\{\tilde{X}_T < M_\epsilon\}}}}{\expec[\tilde{X}^p_T/p]} + B_\epsilon \frac{\prob(\tilde{X}_T  \geq M_\epsilon)}{\expec[\tilde{X}^p_T/p]} + (1-\epsilon) \frac{\expec\bra{(\tilde{X}_T )^p \, \indic_{\{\tilde{X}_T \geq M_\epsilon\}}}}{\expec[\tilde{X}^p_T]} \\
  \leq\ & \frac{\expec[U(\tilde{X}_T )]}{\expec[\tilde{X}_T^p/p]} \label{eq: same X ub p>0}\\
  \leq\ & \frac{\expec\bra{U(\tilde{X}_T ) \, \indic_{\{\tilde{X}_T  < M_\epsilon\}}}}{\expec[\tilde{X}^p_T/p]} + A_\epsilon \frac{\prob(\tilde{X}_T  \geq M_\epsilon)}{\expec[\tilde{X}^p_T/p]} + (1+\epsilon) \frac{\expec\bra{(\tilde{X}_T )^p \, \indic_{\{\tilde{X}_T \geq M_\epsilon\}}}}{\expec[\tilde{X}^p_T]}.
 \end{split}
 \end{equation}
 Note that $U(\tilde{X}_T )$ is uniformly bounded from below because this is assumed for the generic utility $U$, cf. Assumption \ref{ass: utility} iii)-a). As $\limT\expec[\tilde{X}^p_T/p] \geq \limT\expec[(S^0_T)^p/p] =\infty$ by Assumption~\ref{ass: growth}, this implies that the first two terms on both sides of \eqref{eq: same X ub p>0} converge to zero as $T\rightarrow \infty$.

Now, focus on the third terms. Define an auxiliary probability measure $\prob^T$ on $\F_T$ via
 \begin{equation}\label{eq: def P^T}
  \frac{d\prob^T}{d\prob} = \frac{\tilde{X}^p_T}{\expec[\tilde{X}^p_T]} ,
 \end{equation}
where the expectation is finite by Lemma \ref{lem: finite}.
Recall from Section \ref{sec:dual} that $\tilde{y}^T \tilde{Y}^T_T = \tilde{X}_T^{p-1}$ and, moreover, $1=\expec[\tilde{Y}^T_T\tilde{X}_T]\geq \expec[\tilde{Y}^T_T X_T]$  and hence
 \begin{equation}\label{eq: numeraire}
  0\geq \expec\bra{\tilde{y}^T \tilde{Y}^T_T (X_T - \tilde{X}_T)} = \expec\bra{\tilde{X}^{p-1}_T (X_T - \tilde{X}_T)} = \expec\bra{\tilde{X}_T^p}\expec^{\prob^T}\bra{\frac{X_T}{\tilde{X}_T}-1},
 \end{equation}
 for any admissible wealth process $X$. This inequality yields that
 \begin{equation}\label{eq: tX ubb}
  \limT \prob^T(\tilde{X}_T \geq N) =1, \quad \text{ for any } N>0.
 \end{equation}
 Indeed, choosing $X_T=S^0_T$ in \eqref{eq: numeraire}, it follows that
 \[
  1\geq \expec^{\prob^T}[S^0_T/ \tilde{X}_T] \geq \expec^{\prob^T}\bra{S^0_T/\tilde{X}_T \, \indic_{\{S^0_T\geq L, \tilde{X}_T\leq N\}}} \geq \frac{L}{N} \, \prob^T(S^0_T\geq L, \tilde{X}_T\leq N),
 \]
 for any positive $L$ and $N$. Combined with Assumption \ref{ass: growth}, this yields
 \begin{align*}
  \limsup_{T\rightarrow \infty} \prob^T(\tilde{X}_T\leq N) &\leq \limsup_{T\rightarrow \infty} \prob^T(S^0_T\geq L, \tilde{X}_T \leq N) + \limsup_{T\rightarrow \infty} \prob^T(S^0_T <L)\\
  &\leq N/L,
 \end{align*}
 which confirms \eqref{eq: tX ubb} because $L$ was arbitrary.

 Now, coming back to the third term of the upper and lower bounds in \eqref{eq: same X ub p>0},
 \begin{align*}
  \frac{\expec\bra{(\tilde{X}_T )^p \, \indic_{\{\tilde{X}_T \geq M_\epsilon\}}}}{\expec[\tilde{X}^p_T]} =  \prob^T(\tilde{X}_T  \geq M_\epsilon) \rightarrow 1, \quad \text{as } T\rightarrow \infty,
 \end{align*}
 where the convergence follows from \eqref{eq: tX ubb}.
 Therefore, the above estimates on both sides of \eqref{eq: same X ub p>0} yield
 \[
   1-\epsilon \leq \liminf_{T\rightarrow \infty} \frac{\expec[U(\tilde{X}_T )]}{\expec[\tilde{X}^p_T]} \leq \limsup_{T\rightarrow \infty} \frac{\expec[U(\tilde{X}_T )]}{\expec[\tilde{X}^p_T]} \leq 1+\epsilon.
 \]
This proves the assertion because $\epsilon$ was chosen arbitrarily.
\end{proof}

The proof of Proposition \ref{prop: limit value} (and hence the main result) now proceeds as follows. The lower bound in \eqref{eq: U bdd p>0} yields
$$
\expec[U(\tilde{X}^T_T) \indic_{\{\tilde{X}^T_T \geq M_\epsilon\}}] \geq (1-\epsilon) \,\expec[(\tilde{X}^T_T)^p/p \, \indic_{\{\tilde{X}^T_T \geq M_\epsilon\}}] + B_\epsilon \prob(\tilde{X}^T_T \geq M_\epsilon),$$
which converges to $\infty$ as $T\rightarrow \infty$, because $\limT \expec[(\tilde{X}^T_T)^p/p] \geq \limT \expec[(S^0_T)^p/p] =\infty$ follows from Assumption \ref{ass: growth}. Therefore, given that $U$ is bounded from below, $\limT\expec[U(\tilde{X}^T_T)] =\infty$. On the other hand, the optimality of $X^T_T$ for the generic utility $U$ yields $\expec[U(\tilde{X}^T_T )] \leq \expec[U(X^T_T )]$, so that
\[
 \liminf_{T\rightarrow \infty}\frac{\expec[U(X^T_T )]}{\expec[U(\tilde{X}^T_T )]} \geq 1.
\]
Therefore, Proposition \ref{prop: limit value} follows by showing
\[
 \limsup_{T\rightarrow \infty} \frac{\expec[U(X^T_T )]}{\expec[U(\tilde{X}^T_T )]} \leq 1.
\]
In view of Lemma \ref{lem: U-p-tx p>0}, it suffices to prove that
\[
 \limsup_{T\rightarrow \infty} \frac{\expec[U(X^T_T)]}{\expec[(\tilde{X}^T_T)^p/p]} \leq 1.
\]
To establish this inequality, we begin with the following auxiliary result:

\begin{lem}\label{lem: ratio Y p>0}
Suppose Assumptions \ref{ass: growth} - \ref{ass: wellposedness} hold and let $p\in (0,1)$. Then:
\[
 1=\limT \frac{\expec[(Y^T_T)^q]^{1-p}}{\expec[(\tilde{Y}^T_T)^q]^{1-p}} = \limT \frac{y^T}{\tilde{y}^T}.
\]
\end{lem}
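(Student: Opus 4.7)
The plan is to prove both convergences via a duality-based two-stage argument. First I record the isoelastic identity $\tilde y^T = \expec[(\tilde Y^T_T)^q]^{1-p}$: apply Lemma \ref{lem: power dual} at equality to the pair $(\tilde X^T_T, \tilde Y^T_T)$, using $\expec[\tilde X^T_T \tilde Y^T_T]=1$ together with $\tilde y^T = \expec[(\tilde X^T_T)^p]$ (the latter by multiplying $\tilde X_T^{p-1} = \tilde y^T \tilde Y^T_T$ by $\tilde X^T_T$ and taking expectations). Setting $\alpha := y^T/\tilde y^T$ and $\beta := \expec[(Y^T_T)^q]/\expec[(\tilde Y^T_T)^q]$, the two claims amount to $\alpha \to 1$ and $\beta \to 1$. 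Dual optimality of $\tilde Y^T$ for the isoelastic problem, which since $q<0$ reduces to minimising $\expec[Y_T^q]$ over $\mathcal Y$, gives the free lower bound $\beta \geq 1$.

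For the first step I exploit the primal first-order relation $X^T_T = I(y^T Y^T_T)$ together with $(1-\epsilon)z^{1/(p-1)} \leq I(z) \leq (1+\epsilon)z^{1/(p-1)}$ on $\{z \leq y_\epsilon\}$ from \eqref{eq: I conv}. Splitting the budget $\expec[X^T_T Y^T_T] = 1$ over $\{y^T Y^T_T \leq y_\epsilon\}$ and its complement, the small-argument piece sandwiches as $(1\pm\epsilon)(y^T)^{1/(p-1)}\expec[(Y^T_T)^q\indic_{\{y^T Y^T_T \leq y_\epsilon\}}]$; on the complement, monotonicity of $-\partial V$ bounds $X^T_T$ by the constant $-\partial V(y_\epsilon)$, so the remainder is at most $C_\epsilon \expec[Y^T_T] \to 0$ by \eqref{eq: expec Y conv}. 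Provided the truncation on $\expec[(Y^T_T)^q]$ is negligible, this yields $y^T = \expec[(Y^T_T)^q]^{1-p}(1+o(1))$, i.e.\ $\alpha = \beta^{1-p}(1+o(1))$.

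Justifying that truncation, and preparing the sharp step, both require the preliminary control $\alpha \asymp 1$. For this I upgrade the local bounds $(1-\epsilon)\tilde V + D_\epsilon \leq V \leq (1+\epsilon)\tilde V + \tilde D_\epsilon$ (from the proofs of Lemmas \ref{lem: dom V}--\ref{lem: finite}) to global ones $(1-\epsilon)\tilde V - K_\epsilon \leq V \leq (1+\epsilon)\tilde V + K_\epsilon$, using boundedness of both $V$ and $\tilde V$ on $[y_\epsilon,\infty)$ for $p\in(0,1)$. Combining strong duality $u^T(1) = v^T(y^T)+y^T$, the primal sandwich from Lemma \ref{lem:rv}, the test $v^T(\tilde y^T) \leq \expec[V(\tilde y^T\tilde Y^T_T)]$, and the identity $(\tilde y^T)^q \expec[(\tilde Y^T_T)^q] = \tilde y^T$ (equivalently $\tilde v^T(\tilde y^T) = ((1-p)/p)\tilde y^T$), I obtain $1-o(1) \leq \liminf \alpha \leq \limsup \alpha \leq 1/p + o(1)$. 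With $\alpha \asymp 1$ in hand, on $\{y^T Y^T_T > y_\epsilon\}$ one has $(Y^T_T)^q < (y_\epsilon/y^T)^q = O((\tilde y^T)^{-q})$; since $-q - 1/(1-p) = -1$, this is $O(1/\tilde y^T)$ relative to $\expec[(\tilde Y^T_T)^q] = (\tilde y^T)^{1/(1-p)}$, hence negligible within $\expec[(Y^T_T)^q] \geq \expec[(\tilde Y^T_T)^q]$. This closes the budget identity.

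The sharp convergence $\alpha,\beta \to 1$ then follows by strict convexity. The optimality $v^T(y^T)+y^T \leq v^T(\tilde y^T)+\tilde y^T$, combined with the global $V$-bounds, the $\tilde Y^T$-test, and the computation $(y^T)^q\expec[(Y^T_T)^q] = \alpha^q\beta\,\tilde y^T$, collapses after dividing by $\tilde y^T$ and sending $\epsilon \to 0$ to
\begin{equation*}
\frac{1-p}{p}\alpha^q \beta + \alpha \leq \frac{1}{p}.
\end{equation*}
Inserting $\alpha = \beta^{1-p}(1+o(1))$ and using $(1-p)q = -p$ reduces this to $\beta^{1-p}/p \leq 1/p + o(1)$; combined with $\beta \geq 1$ this forces $\beta \to 1$, and then $\alpha \to 1$. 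The main obstacle is the circularity just flagged: the sharp inequality needs the truncation control, which needs $\alpha \asymp 1$, which itself relies on the somewhat delicate global extension of the near-$0$ bounds on $V$ and the careful interplay with strong duality using $\tilde Y^T$ as a test function in the generic dual.
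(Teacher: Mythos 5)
Your broad two-stage plan can be made to work, but two concrete issues keep the proposal from standing as written, and both stem from overlooking a short algebraic identity that the paper uses to avoid the ``circularity'' you flag.

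First, the circularity is illusory. The truncation error in your dual budget split does not require any prior control of $\alpha=y^T/\tilde y^T$. Since $q-1 = 1/(p-1)$, one has on $\{y^T Y^T_T>\delta_\epsilon\}$
\[
\frac{1}{(y^T)^{1/(1-p)}}\,\expec\bigl[(Y^T_T)^q\,\indic_{\{y^T Y^T_T>\delta_\epsilon\}}\bigr]
 = \expec\bigl[Y^T_T\,(y^T Y^T_T)^{1/(p-1)}\,\indic_{\{y^T Y^T_T>\delta_\epsilon\}}\bigr]
 \le \delta_\epsilon^{1/(p-1)}\,\expec[Y^T_T],
\]
because $1/(p-1)<0$ makes $(y^T Y^T_T)^{1/(p-1)}\le\delta_\epsilon^{1/(p-1)}$ there, and the right side vanishes by \eqref{eq: expec Y conv}. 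This is exactly what the paper does right after its inequality corresponding to your ``small-argument sandwich'', and it removes the need for any preliminary $\alpha\asymp 1$. Consequently the entire excursion through global bounds on $V$ and strong duality is unnecessary machinery.

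Second, within that excursion there is a genuine gap: you claim $1-o(1)\le\liminf\alpha$ from ``strong duality, the primal sandwich, the test $v^T(\tilde y^T)\le\expec[V(\tilde y^T\tilde Y^T_T)]$, and $(\tilde y^T)^q\expec[(\tilde Y^T_T)^q]=\tilde y^T$''. Those ingredients yield $u^T(1)\lesssim (1/p)\tilde y^T$ and $v^T(y^T)\ge -K_\epsilon$, hence an \emph{upper} bound $\limsup\alpha\le 1/p$, but they produce no lower bound on $y^T$: the combination $(1-\epsilon)\tilde u^T(1)-C\le u^T(1)\le v^T(\tilde y^T)+\tilde y^T$ collapses to a tautology after dividing by $\tilde y^T$. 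A lower bound on $\alpha$ comes, in the paper, from a \emph{second} decomposition of the budget constraint you do not use: splitting $1=\expec[X^T_T Y^T_T]$ on $\{X^T_T\ge M_\epsilon\}$ vs.\ its complement, applying $U'(X^T_T)=y^T Y^T_T$ and the bound $U'(x)\le(1+\epsilon)x^{p-1}$ there, and then invoking Lemma~\ref{lem: power dual} with an arbitrary $\hat Y\in\mathcal Y$ to get $1/(1+\epsilon)\le\liminf_T\frac{1}{y^T}\expec[\hat Y_T^q]^{1-p}$. Taking $\hat Y=\tilde Y^T$ and using $\tilde y^T=\expec[(\tilde Y^T_T)^q]^{1-p}$ gives the clean $\limsup\alpha\le 1$ directly, with no detour.

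With both budget decompositions and the truncation identity above, the two squeeze inequalities immediately give $\alpha\to 1$ and $\beta=\alpha^{1/(1-p)}(1+o(1))\to 1$, which is the paper's route. Your closing ``strict convexity'' step (the inequality $\frac{1-p}{p}\alpha^q\beta+\alpha\le\frac1p+o(1)$ collapsing via $(1-p)q=-p$ and $\beta\ge 1$) is a nice alternative for the sharp step, and would be correct once the lower bound on $\alpha$ and the truncation control are in place -- but as presented the former is missing and the latter is handled by a circular argument when it admits a direct one.
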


\begin{proof}
 To simplify notation, we omit the superscript $T$ in $X^T$, $\tilde{X}^T$, $Y^T$, and $\tilde{Y}^T$ throughout this proof.
 For the $M_\epsilon$ in \eqref{eq:intest}, the martingale property of $XY$ implies
 \begin{equation}\label{eq: XY primary p>0}
 \begin{split}
  1 = \expec[X_T Y_T] &= \frac{1}{y^T}\,\expec\bra{U'(X_T ) X_T \, \indic_{\{X_T \geq M_\epsilon\}}} + \expec[X_T Y_T \, \indic_{\{X_T <M_\epsilon\}}]\\
  &\leq \frac{1+\epsilon}{y^T} \,\expec\bra{X_T^{p-1} X_T \, \indic_{\{X_T  \geq M_\epsilon\}}} + \expec[X_T Y_T \, \indic_{\{X_T  <M_\epsilon\}}].
 \end{split}
 \end{equation}
 Here, the second identity uses the first-order condition $U'(X_T) = y^T Y_T$ when $X_T  \geq M_\epsilon$, and the inequality follows from the second inequality in \eqref{eq:intest}. The second term on the right-hand side of \eqref{eq: XY primary p>0} vanishes as $T\rightarrow \infty$ due to \eqref{eq: expec Y conv}. For the first term, note that $\expec[X_T \hat{Y}_T] \leq 1$ for any $\hat{Y}\in \mathcal{Y}$. Hence, it follows from Lemma \ref{lem: power dual} that
 \begin{equation}\label{eq: GR-duality}
  \frac{1}{p} \expec[X_T^p] \leq \frac{1}{p} \expec[\hat{Y}_T^q]^{1-p}, \quad \text{for $q=\frac{p}{p-1}$}.
 \end{equation}
 The previous inequality and $p>0$ imply that
 \[
  \liminf_{T\rightarrow \infty} \expec\bra{X_T^p \, \indic_{\{X_T  \geq M_\epsilon\}}} \leq \liminf_{T\rightarrow \infty} \expec\bra{X_T^p} \leq \liminf_{T\rightarrow \infty} \expec[\hat{Y}^q_T]^{1-p},\quad \mbox{ for any } \hat{Y}\in \mathcal{Y}.
 \]
 Coming back to \eqref{eq: XY primary p>0}, the above estimates for the two terms on the right-hand side yield
 \begin{equation}\label{eq: Y^q est p>0}
  \frac{1}{1+\epsilon} \leq \liminf_{T\rightarrow \infty} \frac{1}{y^T} \expec[\hat{Y}^q_T]^{1-p}, \quad \mbox{ for any } \hat{Y}\in \mathcal{Y}.
 \end{equation}

 For any $\epsilon>0$, \eqref{eq: I conv} shows that there exists a sufficiently small $\delta_\epsilon<y_0$, so that
 \begin{equation}\label{eq:estI p>0}
 (1-\epsilon) \, y^{\frac{1}{p-1}} \leq I(y) \leq (1+\epsilon) \, y^{\frac{1}{p-1}},  \quad  \mbox{for }  y<\delta_\epsilon.
 \end{equation}
Fix such a $\delta_\epsilon$; the martingale property of $X Y$ then implies
 \begin{equation}\label{eq: XY dual p>0}
  1= \expec[X_T Y_T] = \expec\bra{Y_T I(y^TY_T) \, \indic_{\{y^T Y_T\leq \delta_\epsilon\}}} + \expec\bra{Y_T X_T \, \indic_{\{y^T Y_T> \delta_\epsilon\}}},
 \end{equation}
 where the second identity follows from $X_T = I(y^TY_T)$ for $y^T Y_T \leq \delta_\epsilon$. Continue by estimating the second term on the right-hand side. Let $V'_-$ be the increasing left derivative of the convex function $V$. Then $-V'_-(y)$ dominates all other elements of the subdifferential $-\partial V$ at $y$, i.e., $- V'_-(y) \geq x$ for any $x\in -\partial V(y)$. Moreover, again as $-V'_-$ is nonincreasing, there exists $C_\delta$ such that $x\leq C_\delta$ for any $x\in -\partial V(y)$ and $y> \delta_\epsilon$. In view of  \eqref{eq: X^T resp},  it therefore follows that $X_T  \leq C_\delta$ when $y^T Y_T >\delta_\epsilon$. As a result, \eqref{eq: expec Y conv} gives
 \begin{equation*}\label{eq: YX large y}
  \expec\bra{Y_T X_T \, \indic_{\{y^T Y_T > \delta_\epsilon\}}} \leq C_\delta \expec\bra{Y_T \,\indic_{\{y^T Y_T >\delta_\epsilon\}}} \rightarrow 0, \quad \text{as} \quad T\rightarrow \infty.
 \end{equation*}
 Turning to the first term on the right-hand side of \eqref{eq: XY dual p>0}, it follows from the first inequality in \eqref{eq:estI p>0} that
 \begin{equation}\label{eq: YI est p>0}
 \begin{split}
  &\expec\bra{Y_T I(y^T Y_T) \, \indic_{\{y^T Y_T \leq \delta_\epsilon\}}}\\
  \geq\ & (1-\epsilon) \frac{\expec[Y_T^q \, \indic_{\{y^T Y_T\leq \delta_\epsilon\}}]}{(y^T)^{1/(1-p)}}\\
  =\ & \frac{1-\epsilon}{(y^T)^{1/(1-p)}} \expec[Y^q_T] - \frac{1-\epsilon}{(y^T)^{1/(1-p)}} \expec[Y^q_T \, \indic_{\{y^T Y_T >\delta_\epsilon\}}].
 \end{split}
 \end{equation}
 Here, the second term tends to zero as the horizon grows because
 \[
  \frac{1}{(y^T)^{1/(1-p)}} \expec[Y^q_T \, \indic_{\{y^T Y_T >\delta_\epsilon\}}] = \expec[Y_T (y^T Y_T)^{\frac{1}{p-1}} \, \indic_{\{y^T Y_T >\delta_\epsilon\}}] \leq \delta_\epsilon^{\frac{1}{p-1}} \expec[Y_T]\rightarrow 0,
 \]
 due to \eqref{eq: expec Y conv}.
 These estimates together with \eqref{eq: XY dual p>0} and \eqref{eq: YI est p>0} imply that
 \[
  \frac{1}{1-\epsilon} \geq \limsup_{T\rightarrow \infty} \frac{1}{(y^T)^{1/(1-p)}} \expec[Y^q_T].
 \]
Raising both sides to the power $(1-p)$, and using the optimality of $\tilde{Y}$ for the dual problem $\tilde{v}^T$ in \eqref{eq: dual}, we obtain
 \[
  \pare{\frac{1}{1-\epsilon}}^{1-p} \geq \limsup_{T\rightarrow \infty} \frac{1}{y^T} \expec[Y_T^q]^{1-p} \geq \limsup_{T\rightarrow \infty} \frac{1}{y^T} \expec[\tilde{Y}^q_T]^{1-p}.
 \]
Together with \eqref{eq: Y^q est p>0}, and recalling that $\epsilon$ was chosen arbitrarily, it follows that
 \[
  1= \limT \frac{1}{y^T} \expec[Y^q_T]^{1-p} = \limT \frac{1}{y^T} \expec[\tilde{Y}^q_T]^{1-p}.
 \]
Together with $\expec[\tilde{Y}_T^q]^{1-p} = \expec[\tilde{X}^p_T] = \tilde{y}^T$ (cf.\ Lemma \ref{lem: power dual}) this yields the assertion.
\end{proof}

We are now ready to complete the proof for Proposition \ref{prop: limit value}, and hence the main Theorem \ref{thm:mt}, for the case $0<p<1$.

\begin{proof}[Proof of Proposition \ref{prop: limit value} for $0<p<1$] We continue to omit the superscript $T$ to ease notation. As discussed before Lemma \ref{lem: ratio Y p>0}, it suffices to show
 \begin{equation}\label{eq: UX/ptX p>0}
  \limsup_{T\rightarrow \infty} \frac{\expec[U(X_T)]}{\expec[\tilde{X}_T^p/p]} \leq 1.
 \end{equation}
The second inequality in \eqref{eq: U bdd p>0} implies
 \begin{equation}\label{eq: est UX/ptX p>0}
 \begin{split}
  &\frac{\expec[U(X_T )]}{\expec[\tilde{X}_T^p/p]}\\
  =\ & \frac{\expec[U(X_T ) \,\indic_{\{X_T  <M_\epsilon\}}]}{\expec[\tilde{X}_T^p/p]} + \frac{\expec[U(X_T )\, \indic_{\{X_T  \geq M_\epsilon\}}]}{\expec[\tilde{X}_T^p/p]}\\
  \leq\ & \frac{\expec\bra{U(X_T ) \, \indic_{\{X_T  < M_\epsilon\}}}}{\expec[\tilde{X}^p_T/p]} + A_\epsilon \frac{\prob(X_T  \geq M_\epsilon)}{\expec[\tilde{X}^p_T/p]} + (1+\epsilon) \frac{\expec\bra{X_T^p \, \indic_{\{X_T  \geq M_\epsilon\}}}}{\expec[\tilde{X}^p_T]}\\
  \leq\ & \frac{\expec\bra{U(X_T ) \, \indic_{\{X_T  < M_\epsilon\}}}}{\expec[\tilde{X}^p_T/p]} + A_\epsilon \frac{\prob(X_T  \geq M_\epsilon)}{\expec[\tilde{X}^p_T/p]} + (1+\epsilon) \frac{\expec\bra{X_T^p}}{\expec[\tilde{X}^p_T]}.
 \end{split}
 \end{equation}
As we have seen in the proof of Lemma \ref{lem: U-p-tx p>0}, the first two terms on the right-hand side vanish as $T\rightarrow \infty$. For the third, Lemma \ref{lem: power dual} implies
 $
 (1/p)\expec[X_T^p] \leq (1/p) \expec[Y_T^q]^{1-p}
 $
so that $\limsup_{T\rightarrow \infty} \expec[X_T^p] \leq \limsup_{T\rightarrow \infty} \expec[Y_T^q]^{1-p}$ for $q=p/(p-1)$ because $p>0$. On the other hand, $\expec[\tilde{X}_T^p] = \expec[\tilde{Y}_T^q]^{1-p}$ (again cf. Lemma \ref{lem: power dual}). Therefore the previous estimates together with \eqref{eq: est UX/ptX p>0} imply
 \[
  \limsup_{T\rightarrow \infty} \frac{\expec[U(X_T )]}{\expec[\tilde{X}_T^p/p]} \leq (1+\epsilon) \limsup_{T\rightarrow \infty} \frac{\expec[Y_T^q]^{1-p}}{\expec[\tilde{Y}_T^q]^{1-p}} = 1+\epsilon,
 \]
 where the last identity follows from Lemma \ref{lem: ratio Y p>0}. Hence, Assertion \eqref{eq: UX/ptX p>0} is confirmed because $\epsilon$ was chosen arbitrarily.
\end{proof}

\subsection{Proof of Proposition \ref{prop: limit value} for $p<0$}

The overall strategy is similar to the case $0<p<1$. However, the contribution of low wealth levels to the total expected utility is more delicate, as utilities may be unbounded from below near zero and the value functions $u^T$ and $\tilde{u}^T$ converge to zero rather than infinity as $T\rightarrow \infty$.\footnote{This is because $0\geq u^T\geq \expec[U(S^0_T)]$ and $\limT \expec[U(S^0_T)]=0$ due to $\lim_{T\rightarrow \infty} S^0_T=\infty$ and $\lim_{x\uparrow \infty} U(x) =0$. The same statement holds for the isoelastic utility $\tilde{U}$.} Therefore, the additional Assumptions \eqref{ass: U/tU'} on $U$ are needed to ensure that the contribution of low wealth levels is still negligible in the long run. We start with the following analogue of Lemma \ref{lem: U-p-tx p>0}.

\begin{lem}\label{lem: U-p-tx}
Let Assumptions \ref{ass: growth} and \ref{ass: utility} hold. Then:
 \[
  \limT \frac{\expec[U(\tilde{X}^T_T)]}{\expec[(\tilde{X}^T_T)^p/p]} =1.
 \]
\end{lem}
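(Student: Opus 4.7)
The plan is to split $\expec[U(\tilde X^T_T)]$ at the threshold $M_\epsilon$ from Lemma \ref{lem:rv}(ii) and handle the ``high wealth'' and ``low wealth'' pieces separately, mirroring the structure of Lemma \ref{lem: U-p-tx p>0}. The delicate point is that for $p<0$ both $\expec[U(\tilde X^T_T)]$ and $\expec[(\tilde X^T_T)^p/p]$ are negative and vanish as $T\to\infty$, so the low-wealth contribution cannot be absorbed into a crude ``$U$ bounded from below by a constant'' bound, as was possible for $0<p<1$; instead it must be controlled quantitatively, which is precisely where Assumption \eqref{ass: U/tU'} enters. I drop the superscript $T$ throughout.

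For the high-wealth piece, the two-sided estimate $(1+\epsilon)\tilde X_T^p/p \leq U(\tilde X_T) \leq (1-\epsilon)\tilde X_T^p/p$ from Lemma \ref{lem:rv}(ii) (both sides negative on $\{\tilde X_T \geq M_\epsilon\}$), taken in expectation and divided by the negative quantity $\expec[\tilde X_T^p/p]$, yields
\[
(1-\epsilon)\,\prob^T(\tilde X_T \geq M_\epsilon) \leq \frac{\expec[U(\tilde X_T)\indic_{\{\tilde X_T \geq M_\epsilon\}}]}{\expec[\tilde X_T^p/p]} \leq (1+\epsilon)\,\prob^T(\tilde X_T \geq M_\epsilon),
\]
where $\prob^T$ is the auxiliary measure from \eqref{eq: def P^T}. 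The derivation of \eqref{eq: tX ubb} uses only $\expec[\tilde X_T^p]>0$ and admissibility of the full safe investment, so it transplants verbatim to $p<0$ and gives $\prob^T(\tilde X_T \geq M_\epsilon)\to 1$.

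For the low-wealth piece, Assumption \eqref{ass: U/tU'} provides $|U(x)|\leq C\,x^{p-1}$ for $x$ near $0$ (recall $U\leq 0$ by Assumption \ref{ass: utility}(iii)(c)), and continuity of $U$ on the compact interval $[x_0,M_\epsilon]$ upgrades this, after enlarging the constant, to $|U(x)|\leq C'\,x^{p-1}$ on $(0,M_\epsilon]$. The first-order condition $\tilde X_T^{p-1}=\tilde y^T\tilde Y_T$ then gives
\[
\bigl|\expec[U(\tilde X_T)\indic_{\{\tilde X_T<M_\epsilon\}}]\bigr| \leq C'\,\expec[\tilde X_T^{p-1}] = C'\,\tilde y^T\,\expec[\tilde Y_T].
\]
The decisive observation is that the equality case of Lemma \ref{lem: power dual}, triggered by $\tilde X_T^{p-1}=\tilde y^T\tilde Y_T$ and $\expec[\tilde X_T\tilde Y_T]=1$, gives $\expec[\tilde X_T^p]=\expec[\tilde Y_T^q]^{1-p}$, while the same budget constraint forces $\tilde y^T=\expec[\tilde Y_T^q]^{1-p}$, hence $\tilde y^T=\expec[\tilde X_T^p]$. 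Therefore
\[
\biggl|\frac{\expec[U(\tilde X_T)\indic_{\{\tilde X_T<M_\epsilon\}}]}{\expec[\tilde X_T^p/p]}\biggr| \leq \frac{C'\,\tilde y^T\,\expec[\tilde Y_T]}{\expec[\tilde X_T^p]/|p|} = C'|p|\,\expec[\tilde Y_T] \longrightarrow 0
\]
by \eqref{eq: expec Y conv}. Combining the two pieces produces $1-\epsilon \leq \liminf_{T\to\infty}(\cdot) \leq \limsup_{T\to\infty}(\cdot) \leq 1+\epsilon$ for the ratio in the statement, and sending $\epsilon\downarrow 0$ finishes the proof.

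The hard part will be the identity $\tilde y^T=\expec[\tilde X_T^p]$: a naive Jensen bound on $\tilde y^T=\expec[\tilde Y_T^q]^{1-p}$ leaves a factor $\expec[\tilde Y_T^q]^{2(p-1)}$ that diverges as $\expec[\tilde Y_T^q]\to 0$ and easily swamps the decay of $\expec[\tilde Y_T]$; exploiting the exact duality equality instead collapses $\expec[\tilde X_T^{p-1}]/\expec[\tilde X_T^p]$ to precisely $\expec[\tilde Y_T]$, which vanishes deterministically at the rate $1/S^0_T$.
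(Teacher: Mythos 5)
Your proof is correct and follows the same route as the paper: split at $M_\epsilon$, control the high-wealth tail via the auxiliary measure $\prob^T$, and reduce the low-wealth tail to $\expec[\tilde{Y}^T_T]\to 0$ using \eqref{ass: U/tU'}, the first-order condition $\tilde{X}_T^{p-1}=\tilde{y}^T\tilde{Y}_T$, and $\expec[\tilde{X}_T^p]=\tilde{y}^T$, then invoke \eqref{eq: expec Y conv}. The only cosmetic difference is that the paper obtains $\expec[\tilde{X}_T^p]=\tilde{y}^T$ directly by multiplying the first-order condition by $\tilde{X}_T$ and using $\expec[\tilde{X}_T\tilde{Y}_T]=1$, rather than detouring through the equality case of Lemma~\ref{lem: power dual} and the identity $\tilde{y}^T=\expec[\tilde{Y}_T^q]^{1-p}$ as you do.
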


\begin{proof}
 To simplify notation, the superscript $T$ in $\tilde{X}^T$ and $\tilde{Y}^T$ is omitted throughout this proof. In view of \eqref{intest2}, we have
 \begin{equation}\label{eq: same X ub}
 \begin{split}
  &\frac{\expec\bra{U(\tilde{X}_T ) \, \indic_{\{\tilde{X}_T < M_\epsilon\}}}}{\expec[\tilde{X}^p_T/p]} + (1-\epsilon) \frac{\expec\bra{\tilde{X}_T^p \, \indic_{\{\tilde{X}_T \geq M_\epsilon\}}}}{\expec[\tilde{X}^p_T]} \\
  \leq\ & \frac{\expec[U(\tilde{X}_T )]}{\expec[\tilde{X}_T^p/p]}\\
  \leq\ & \frac{\expec\bra{U(\tilde{X}_T ) \, \indic_{\{\tilde{X}_T  < M_\epsilon\}}}}{\expec[\tilde{X}^p_T/p]} + (1+\epsilon) \frac{\expec\bra{\tilde{X}_T^p \, \indic_{\{\tilde{X}_T \geq M_\epsilon\}}}}{\expec[\tilde{X}^p_T]}.
 \end{split}
 \end{equation}
 Let us estimate separately the two terms in the upper and lower bounds. For the second term, similar arguments as in the proof of Lemma \ref{lem: U-p-tx p>0}, utilizing the auxiliary measure $\prob^T$, yield
 \begin{equation}\label{eq: same X ub 2}
  \lim_{T\rightarrow \infty}  \frac{\expec\bra{\tilde{X}_T^p \, \indic_{\{\tilde{X}_T \geq M_\epsilon\}}}}{\expec[\tilde{X}^p_T]} = 1.
 \end{equation}

 Compared to the proof of Lemma \ref{lem: U-p-tx p>0}, the case $p<0$ differs with respect to the estimation of the first term in the upper and lower bounds in \eqref{eq: same X ub}. Here, the first-order condition $\tilde{X}_T^{p-1}=\tilde{y}^T \tilde{Y}_T$ and the martingale property of $\tilde{Y} \tilde{X}$ imply $\expec[\tilde{X}_T^p]=\tilde{y}^T$. On the other hand, it follows from \eqref{ass: U/tU'} that $pU(x) \leq C_M x^{p-1}$ for some constant $C_M>0$ and any $x<M_\epsilon$. As a result:
 \begin{equation}\label{eq: Utx/txp 2}
 \begin{split}
  \frac{\expec[U(\tilde{X}_T ) \, \indic_{\{\tilde{X}_T  <M_\epsilon\}}]}{\expec[\tilde{X}_T^p/p]} &=\frac{\expec[p\,U(\tilde{X}_T ) \, \indic_{\{\tilde{X}_T < M_\epsilon\}}]}{\tilde{y}^T} \\
  &\leq C_M \frac{\expec[\tilde{X}_T^{p-1} \, \indic_{\{\tilde{X}_T  <M_\epsilon\}}]}{\tilde{y}^T}\\
  &= C_M \expec[\tilde{Y}^T_T \indic_{\{\tilde{X}_T  < M_\epsilon\}}] \rightarrow 0, \quad \text{as $T\rightarrow \infty$}.
 \end{split}
 \end{equation}
 Therefore, the first terms of the upper and lower bounds in \eqref{eq: same X ub} vanish as the horizon grows. Together with \eqref{eq: same X ub} and \eqref{eq: same X ub 2}, it follows that
 \[
   1-\epsilon\leq \liminf_{T\rightarrow \infty} \frac{\expec[U(\tilde{X}_T)]}{\expec[\tilde{X}^p_T/p]} \leq \limsup_{T\rightarrow \infty} \frac{\expec[U(\tilde{X}_T )]}{\expec[\tilde{X}^p_T/p]} \leq 1+\epsilon.
 \]
This yields the assertion because $\varepsilon$ was arbitrary.
\end{proof}

\begin{rem}\label{rem: Utx/txp}
 The calculation of the contribution from low wealth levels relative to the expected power utility in \eqref{eq: Utx/txp 2} will be used in the counterexample in Section \ref{sec:analysis}. Therefore, for future reference, we summarize it here: for any $M$ and $U$ satisfying \eqref{ass: U/tU'},
 \[
  \lim_{T\rightarrow \infty} \frac{\expec[U(\tilde{X}^T_T) \, \indic_{\{\tilde{X}^T_T< M\}}]}{\expec[(\tilde{X}^T_T)^p/p]} =0.
 \]
 A careful examination of \eqref{eq: Utx/txp 2} shows that $U$ does not need to be concave to ensure the above convergence.
\end{rem}

Similarly as in the case $0<p<1$, the proof of Proposition \ref{prop: limit value} (and hence the main result) now proceeds as follows. The optimality of $X^T_T$ for the utility $U$ yields $\expec[U(\tilde{X}^T_T)] \leq \expec[U(X^T_T)] <0$, so that
\[
 \frac{\expec[U(X^T_T)]}{\expec[U(\tilde{X}^T_T)]} \leq 1, \quad \mbox{for any $T>0$.}
\]
Therefore, Proposition \ref{prop: limit value} follows by showing
\[
 \liminf_{T\rightarrow \infty} \frac{\expec[U(X^T_T)]}{\expec[U(\tilde{X}^T_T)]} \geq 1.
\]
Due to Lemma \ref{lem: U-p-tx}, it suffices to prove
\[
 \liminf_{T\rightarrow \infty} \frac{\expec[U(X^T_T)]}{\expec[(\tilde{X}^T_T)^p/p]} \geq 1,
\]
which will be established in the sequel. The following auxiliary result is the analogue of Lemma~\ref{lem: ratio Y p>0}:

\begin{lem}\label{lem: ratio Y}
Suppose Assumptions \ref{ass: growth} and \ref{ass: utility} hold for $p<0$. Then, even without assuming \eqref{ass: U/tU'}:
\[
 1=\limT \frac{\expec[(Y^T_T)^q]^{1-p}}{\expec[(\tilde{Y}^T_T)^q]^{1-p}} = \limT \frac{y^T}{\tilde{y}^T}.
\]
\end{lem}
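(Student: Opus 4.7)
The plan is to follow the blueprint of Lemma \ref{lem: ratio Y p>0}, obtaining $\limsup_T y^T/\tilde y^T\leq 1$ by a dual martingale argument that transports essentially verbatim from the $p\in(0,1)$ case, and then producing $\liminf_T y^T/\tilde y^T\geq 1$ by a new truncation argument, since the relevant Lemma \ref{lem: power dual} inequality flips sign for $p<0$. Once $y^T/\tilde y^T\to 1$ is in hand, the other limit in the statement follows from $\tilde y^T=\expec[(\tilde Y^T_T)^q]^{1-p}$ combined with the side-product $\expec[(Y^T_T)^q]^{1-p}/y^T\to 1$ produced along the way.

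For the first direction I would mimic the dual half of the proof of Lemma \ref{lem: ratio Y p>0}: split $1=\expec[X^T_T Y^T_T]$ on $\{y^T Y^T_T\leq\delta_\epsilon\}$ versus its complement, bound $X^T_T$ on the complement by a constant $C_\delta$ coming from monotonicity of $-\partial V$ (so that piece is $o(1)$ by \eqref{eq: expec Y conv}), and use \eqref{eq: I conv} on the main set to put $Y^T_T I(y^T Y^T_T)$ inside $[(1-\epsilon),(1+\epsilon)](y^T)^{1/(p-1)}(Y^T_T)^q$. The only new wrinkle is the tail $\expec[(Y^T_T)^q\indic_{\{y^T Y^T_T>\delta_\epsilon\}}]$: for $p<0$ we have $q\in(0,1)$, so the bound $(Y^T_T)^q\leq Y^T_T(y^T/\delta_\epsilon)^{1-q}$ applies, and the identity $1/(p-1)+(1-q)=0$ makes $(y^T)^{1/(p-1)}$ times this tail collapse to $\delta_\epsilon^{q-1}\expec[Y^T_T]=o(1)$. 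Raising $(y^T)^{1/(p-1)}\expec[(Y^T_T)^q]\to 1$ to the power $1-p$ yields $\expec[(Y^T_T)^q]^{1-p}/y^T\to 1$. Applying Lemma \ref{lem: power dual} to the admissible pair $(\tilde X^T_T,Y^T_T)$, with the inequality flipped for $p<0$, then gives $\tilde y^T=\expec[(\tilde X^T_T)^p]\geq\expec[(Y^T_T)^q]^{1-p}$, whence $\limsup_T y^T/\tilde y^T\leq 1$.

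The harder $\liminf$ direction goes as follows. From the primal identity and the asymptotic lower bound $U'(x)\geq(1-\epsilon)x^{p-1}$ for $x\geq M_\epsilon$ one derives $y^T\geq(1-\epsilon)\expec[(X^T_T)^p\indic_{\{X^T_T\geq M_\epsilon\}}]$, so it suffices to bound this restricted expectation below by $(1-o(1))\tilde y^T$. The key device is the truncation $\bar X^T_T:=X^T_T\vee M_\epsilon$: since $\bar X^T_T\leq X^T_T+M_\epsilon$, the supermartingale property gives $\expec[\bar X^T_T\tilde Y^T_T]\leq 1+M_\epsilon\expec[\tilde Y^T_T]=:c_T\to 1$, and Lemma \ref{lem: power dual} applied to $(\bar X^T_T/c_T,\tilde Y^T_T)$ yields $\expec[(\bar X^T_T)^p]\geq c_T^p\tilde y^T=(1-o(1))\tilde y^T$ since $c_T^p\to 1$ for $p<0$. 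Decomposing $\expec[(\bar X^T_T)^p]=\expec[(X^T_T)^p\indic_{\{X^T_T\geq M_\epsilon\}}]+M_\epsilon^p\prob(X^T_T<M_\epsilon)$, the task reduces to showing the probability term is $o(\tilde y^T)$. Monotonicity of $-\partial V$ gives $\{X^T_T<M_\epsilon\}\subset\{y^T Y^T_T>\eta_\epsilon\}$ with $\eta_\epsilon=U'(M_\epsilon)$, so Markov yields $\prob(X^T_T<M_\epsilon)\leq y^T\expec[Y^T_T]/\eta_\epsilon$; bootstrapping with the already-proved $y^T=O(\tilde y^T)$ and $\expec[Y^T_T]\to 0$ closes the chain to $o(\tilde y^T)$. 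The main obstacle is precisely this bootstrap: the converse inequality cannot be obtained directly from Lemma \ref{lem: power dual} for $p<0$, and the truncation $\bar X^T_T$ combined with the previously established upper bound on $y^T/\tilde y^T$ is what recovers it. Notably the argument nowhere invokes \eqref{ass: U/tU'}, matching the emphasis in the statement.
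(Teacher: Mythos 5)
Your proof is correct, and although it shares the overall primal/dual skeleton with the paper's argument (Lemma \ref{lem: ratio Y}), the technical devices in both halves are genuinely different. On the dual side you extract a \emph{two-sided} limit $\expec[(Y^T_T)^q]^{1-p}/y^T \to 1$ by estimating the tail $\expec[(Y^T_T)^q\,\indic_{\{y^TY^T_T>\delta_\epsilon\}}]$ via the exponent bookkeeping $1/(p-1)+1-q=0$; the paper instead contents itself with the one-sided bound $\liminf \expec[(Y^T_T)^q]^{1-p}/y^T \geq 1$ (dropping the indicator is enough for that direction) and defers the matching upper bound to the primal side. On the primal side, the paper introduces a shift $X^T_T \mapsto X^T_T + a$ and the companion normalized discount factor $\hat Y_T/(a\expec[\hat Y_T]+1)$ to tame the singularity of $x^p$ near zero, whereas you use the truncation $\bar X^T_T = X^T_T\vee M_\epsilon$ with the normalization $c_T = 1+M_\epsilon\expec[\tilde Y^T_T]$, then isolate $M_\epsilon^p\prob(X^T_T<M_\epsilon)$ and control it by Markov together with the subdifferential inclusion $\{X^T_T<M_\epsilon\}\subset\{y^TY^T_T>\eta_\epsilon\}$; the one structural difference this creates is that your primal estimate requires the bootstrap $y^T=O(\tilde y^T)$ from the dual half, because the $y^T$ and $\tilde y^T$ in $\prob(X^T_T<M_\epsilon)\leq y^T\expec[Y^T_T]/\eta_\epsilon$ and the target $o(\tilde y^T)$ do not cancel, while in the paper's version they divide by $y^T$ throughout and the corresponding term cancels without any reference to $\tilde y^T$. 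Both routes are legitimate: the paper's has the mild advantage that each half is self-contained, while yours is arguably more transparent in why $p<0$ forces the extra care (the flipped Lemma \ref{lem: power dual} inequality and the unboundedness of $x^p$ at zero, both of which your truncation addresses head-on). You also correctly observe, in line with the statement, that neither route invokes \eqref{ass: U/tU'}.
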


\begin{proof}
 To simplify notation, we once more omit the superscript $T$ in $X^T$, $\tilde{X}^T$, $Y^T$, and $\tilde{Y}^T$ throughout this proof.
 Coming back to \eqref{eq: XY dual p>0} and using the second inequality in Equation \eqref{eq:estI p>0},
 \begin{equation}\label{eq: dual p<0 lb}
 \begin{split}
  1&\leq (1+\epsilon) \frac{\expec[Y_T^q \, \indic_{\{y^T Y_T\leq \delta_\epsilon\}}]}{(y^T)^{1/(1-p)}} + \expec\bra{Y_T X_T \, \indic_{\{y^T Y_T> \delta_\epsilon\}}}\\
  &\leq (1+\epsilon) \frac{\expec[Y_T^q]}{(y^T)^{1/(1-p)}} + \expec\bra{Y_T X_T \, \indic_{\{y^T Y_T> \delta_\epsilon\}}},
 \end{split}
 \end{equation}
 where $q=p/(p-1)$. Now, repeating the argument after \eqref{eq: XY dual p>0}, the second term on the right-hand side vanishes as $T\rightarrow \infty$. Then, raise both sides of \eqref{eq: dual p<0 lb} to the $(1-p)$-th power, obtaining
 \begin{equation}\label{eq: Y<tY}
  \pare{\frac{1}{1+\epsilon}}^{1-p} \leq \liminf_{T\rightarrow \infty} \frac{1}{y^T} \expec[Y^q_T]^{1-p} \leq \liminf_{T\rightarrow \infty} \frac{1}{y^T} \expec[\tilde{Y}^q_T]^{1-p},
 \end{equation}
 where the second inequality follows from the optimality of $\tilde{Y}$ for the dual problem $\tilde{v}^T$ in \eqref{eq: dual}.

 On the other hand, for any fixed $a>0$, it follows from \eqref{ass: conv} that
 \begin{equation}\label{eq: conv a}
  \lim_{x\uparrow \infty} \frac{U'(x)}{(x+a)^{p-1}} =1.
 \end{equation}
 Hence, for any $\epsilon>0$, there is $M_{\epsilon, a}>0$ such that $U'(x)$ exists and
 \begin{equation}\label{eq:intest a}
  (1-\epsilon)(x+a)^{p-1} \leq U'(x) \leq (1+\epsilon) (x+a)^{p-1}, \quad \mbox{for $x\geq M_{\epsilon, a}$}.
 \end{equation}
The martingale property of $XY$, the first-order condition $y^T Y_T=U'(X_T)$ for large $X_T$, and the first inequality in \eqref{eq:intest a} in turn yield
 \begin{equation*}\label{eq: XY primary}
 \begin{split}
  1&= \expec[X_T Y_T] \\
  &= \expec\bra{X_T Y_T \, \indic_{\{X_T < M_{\epsilon, a}\}}} - \frac{a}{y^T} \expec\bra{U'(X_T)  \, \indic_{\{X_T\geq M_{\epsilon, a}\}}}\\
  &\qquad \qquad \qquad \qquad \qquad \quad+ \frac{1}{y^T} \expec\bra{U'(X_T) (X_T +a) \, \indic_{\{X_T\geq M_{\epsilon, a}\}}}\\
  & \geq \expec\bra{X_T Y_T \, \indic_{\{X_T < M_{\epsilon, a}\}}} - a \expec\bra{Y_T  \, \indic_{\{X_T \geq M_{\epsilon, a}\}}}\\
  & \qquad \qquad \qquad \qquad \qquad \quad+ (1-\epsilon)\frac{1}{y^T}\expec\bra{(X_T+a)^{p} \, \indic_{\{X_T \geq M_{\epsilon, a}\}}}.
 \end{split}
 \end{equation*}
By \eqref{eq: expec Y conv}, the first two terms on the right-hand side go to zero as $T\rightarrow \infty$. Therefore
 \begin{equation}\label{eq: exp X/y}
 \begin{split}
  \frac{1}{1-\epsilon} &\geq \limsup_{T\rightarrow \infty} \frac{1}{y^T} \expec[(X_T+a)^p \, \indic_{\{X_T \geq M_{\epsilon, a}\}}] \\
  &\geq \limsup_{T\rightarrow \infty} \frac{1}{y^T} \expec[(X_T+a)^p] - \limsup_{T\rightarrow \infty} \frac{1}{y^T} \expec[(X_T+a)^p \, \indic_{\{X_T< M_{\epsilon, a}\}}].
 \end{split}
 \end{equation}
 Let us estimate the second term on the right-hand side below. To this end, $p<0$ and $X_T\geq 0$ imply that
 \begin{equation}\label{eq: p/y 1}
 \begin{split}
  \frac{1}{y^T} \expec[(X_T+ a)^p \, \indic_{\{X_T < M_{\epsilon, a}\}}] &\leq \frac{a^p}{y^T} \expec[\indic_{\{X_T < M_{\epsilon, a}\}}] \\
  &\leq \frac{a^p}{y^T} \expec\bra{\frac{y^TY_T}{\delta_M} \indic_{\{X_T < M_{\epsilon, a}\}}} \\
  &= \frac{a^p}{\delta_M} \expec[Y_T \indic_{\{X_T < M_{\epsilon, a}\}}] \rightarrow 0, \quad \text{as $T\rightarrow \infty$},
 \end{split}
 \end{equation}
 where the last step is once again a consequence of \eqref{eq: expec Y conv}. Here, $\delta_M$ is a positive constant such that $y^T Y_T \geq \delta_M$ on $\{X_T <M_{\epsilon, a}\}$. The reason for the existence of such a constant is the following. Notice that $X_T \in -\partial V(y^T Y_T)$ and that every element $x$ of $-\partial V(y)$ dominates $-V'_+(y)$, where $V'_+(y)$ is the right derivative of $V$ at $y$. As $V(y)$ is strictly convex when $y$ is close to zero, $-V'_+$ is strictly decreasing. Together with $-V'_+(0)=\infty$, this implies the existence of $\delta_M>0$ such that $-V'_+(y) \geq M_{\epsilon, a}$ for $y<\delta_M$, and in turn $y^T Y_T \geq \delta_M$ on $\{X_T  <M_{\epsilon, a}\}$.

In view of \eqref{eq: p/y 1}, the second term on the right-hand side of \eqref{eq: exp X/y} vanishes as the horizon grows, so that
 \begin{equation*}\label{eq: X/y ub}
  \frac{1}{1-\epsilon} \geq\limsup_{T\rightarrow \infty} \frac{1}{y^T} \expec[(X_T+ a)^p].
 \end{equation*}
Now, note that $\expec\bra{(X_T + a)\frac{\hat{Y}_T}{a\expec[\hat{Y}_T] + 1}}\leq 1$ for any $\hat{Y}\in \mathcal{Y}$. Hence, Lemma \ref{lem: power dual} shows
 \begin{equation}\label{eq: GR-duality-a}
  \frac{1}{p} \expec[(X_T+a)^p] \leq \frac{1}{p} \frac{\expec[\hat{Y}_T^q]^{1-p}}{(a\expec[ \hat{Y}_T]+1)^{-p}}, \quad \text{for $q=\frac{p}{p-1}$}.
 \end{equation}
 Taking into account that $\limT \expec[\hat{Y}_T]=0$ and $p<0$, the previous two inequalities imply
 \begin{equation}\label{eq: Y/y ub}
  \frac{1}{1-\epsilon} \geq \limsup_{T\rightarrow \infty} \frac{1}{y^T} \expec[(X_T+ a)^p] \geq \limsup_{T\rightarrow \infty} \frac{1}{y^T}\expec[\hat{Y}^q_T]^{1-p}, \quad \text{ for any } \hat{Y}\in \mathcal{Y}.
 \end{equation}
 In particular, this holds for $Y_T$ and $\tilde{Y}_T$. Combining this inequality with \eqref{eq: Y<tY} and recalling that $\epsilon$ was chosen arbitrarily, it follows that
 \[
  1= \limT \frac{1}{y^T} \expec[Y^q_T]^{1-p} = \limT \frac{1}{y^T} \expec[\tilde{Y}^q_T]^{1-p}.
 \]
Together with $\expec[\tilde{Y}_T^q]^{1-p} = \expec[\tilde{X}^p_T] = \tilde{y}^T$ (cf.\ Lemma \ref{lem: power dual}), this yields the assertion.
\end{proof}

We are now ready to complete the proof of Proposition \ref{prop: limit value} and hence Theorem \ref{thm:mt} in the case $p<0$.

\begin{proof}[Proof of Proposition \ref{prop: limit value} for $p<0$]
 As discussed before Lemma \ref{lem: ratio Y}, it suffices to show that
 \begin{equation}\label{eq: liminf U/p}
 \liminf_{T\rightarrow \infty} \frac{\expec[U(X_T)]}{\expec[\tilde{X}_T^p/p]} \geq 1.
\end{equation}
 Fix $a>0$. Recall that $\lim_{x\uparrow \infty} U'(x)/(x+a)^{p-1} =1$ by \eqref{eq: conv a}. As in \eqref{intest2}, it follows that
 \begin{equation}\label{eq: U bdd p<0 a}
  (1-\epsilon) (x+a)^p/p \geq U(x) \geq (1+\epsilon) (x+a)^p/p, \quad \mbox{for }  x\geq M_{\epsilon, a}.
 \end{equation}
Hence,
\begin{align*}
  \frac{\expec[U(X_T)]}{\expec[\tilde{X}_T^p/p]} &=  \frac{\expec[U(X_T)\,\indic_{\{X_T<M_{\epsilon, a}\}}]}{\expec[\tilde{X}_T^p/p]} + \frac{\expec[U(X_T)\,\indic_{\{X_T\geq M_{\epsilon, a}\}}]}{\expec[\tilde{X}_T^p/p]} \\
  &\geq (1-\epsilon) \frac{\expec[(X_T+a)^p \indic_{\{X_T
  \geq M_{\epsilon, a}\}}]}{\expec[\tilde{X}^p_T]}\\
  &= (1-\epsilon) \frac{\expec[(X_T+a)^p]}{\expec[\tilde{X}^p_T]} - (1-\epsilon) \frac{\expec[(X_T+a)^p \indic_{\{X_T
  < M_{\epsilon, a}\}}]}{\expec[\tilde{X}^p_T]},
\end{align*}
where the inequality follows from the first inequality in \eqref{eq: U bdd p<0 a}, multiplied by $p<0$ on both sides. We have seen in \eqref{eq: p/y 1} that
\[
 \limT \frac{1}{y^T}\expec[(X_T+a)^p \indic_{\{X_T
  < M_{\epsilon, a}\}}] =0.
\]
As $\expec[\tilde{X}^p_T] = \tilde{y}^T$ and $\limT y^T/\tilde{y}^T =1$ by Lemma \ref{lem: ratio Y}, it follows that
\[
 \limT \frac{\expec[(X_T+a)^p \indic_{\{X_T
  < M_{\epsilon, a}\}}]}{\expec[\tilde{X}^p_T]} =0,
\]
and in turn
\begin{equation}\label{eq: UX/ptX}
 \liminf_{T\rightarrow \infty} \frac{\expec[U(X_T)]}{\expec[\tilde{X}_T^p/p]} \geq (1-\epsilon) \liminf_{T\rightarrow \infty}\frac{\expec[(X_T+a)^p]}{\expec[\tilde{X}^p_T]}.
\end{equation}
Now take $\hat{Y}$ in \eqref{eq: GR-duality-a} to be $Y$. Then, as $\limT \expec[Y_T]=0$, and $\expec[\tilde{X}^p_T] = \expec[\tilde{Y}_T^q]^{1-p}$ by another application of Lemma~\ref{lem: power dual}, it follows from \eqref{eq: UX/ptX} and Lemma~\ref{lem: ratio Y} that
\[
\liminf_{T\rightarrow \infty} \frac{\expec[U(X_T)]}{\expec[\tilde{X}_T^p/p]} \geq (1-\epsilon) \liminf_{T\rightarrow \infty}\frac{\expec[(X_T+a)^p]}{\expec[\tilde{X}^p_T]}\geq (1-\epsilon) \liminf_{T\rightarrow \infty} \frac{\expec[Y^q_T]^{1-p}}{\expec[\tilde{Y}_T^q]^{1-p}} = 1-\epsilon.
\]
As $\epsilon$ was arbitrary, this proves \eqref{eq: liminf U/p} and in turn Proposition \ref{prop: limit value}.
\end{proof}

\section{Proof of the main result for $p=0$}

If $p=0$, the isoelastic utility $\tilde{U}$ is logarithmic and -- compared to the power case $p \neq 0$ -- different arguments are needed to establish the main result. In this case, the convergence \eqref{ass: conv} of the ratio of marginal utilities implies the following:

\begin{lem}\label{lem: inv U}
 Suppose \eqref{ass: conv} holds. Then, for any $\epsilon>0$, there exists a sufficiently large $M_\epsilon$ such that
 \[
  (1-\epsilon) (a-b) \leq \log \frac{U^{-1}(a)}{U^{-1}(b)} \leq (1+\epsilon) (a-b), \quad \mbox{for $a\geq b\geq M_\epsilon$}.
 \]
\end{lem}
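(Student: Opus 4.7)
The plan is to translate the asymptotic condition \eqref{ass: conv}, which for $p=0$ reads $\lim_{x\uparrow\infty} x\,U'(x)=1$, into a two-sided integral sandwich for $U^{-1}$. Given $\epsilon>0$, choose an auxiliary $\epsilon'>0$ small enough that $1/(1-\epsilon')\leq 1+\epsilon$ and $1/(1+\epsilon')\geq 1-\epsilon$ (for instance $\epsilon'=\epsilon/(1+\epsilon)$). By \eqref{ass: conv} there is a threshold $N_{\epsilon'}$ such that
\[
(1-\epsilon')\,\frac{1}{x} \;\leq\; U'(x) \;\leq\; (1+\epsilon')\,\frac{1}{x}, \qquad x\geq N_{\epsilon'}.
\]
By Assumption~\ref{ass: utility}(i), $U$ is strictly increasing and differentiable for large enough wealth, so by enlarging $N_{\epsilon'}$ if necessary we may assume $U$ is a strict bijection from $[N_{\epsilon'},\infty)$ onto $[U(N_{\epsilon'}),\infty)$. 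Define $M_\epsilon := U(N_{\epsilon'})$.

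For any $a\geq b\geq M_\epsilon$, set $x:=U^{-1}(a)$ and $y:=U^{-1}(b)$, so that $x\geq y\geq N_{\epsilon'}$. Writing both relevant quantities as Lebesgue integrals on $[y,x]$,
\[
a-b \;=\; U(x)-U(y) \;=\; \int_y^x U'(s)\,ds, \qquad \log\frac{U^{-1}(a)}{U^{-1}(b)} \;=\; \log\frac{x}{y} \;=\; \int_y^x \frac{ds}{s}.
\]
The pointwise bounds on $U'$ sandwich the first integral between $(1-\epsilon')$ and $(1+\epsilon')$ times the second, yielding
\[
(1-\epsilon')\,\log\frac{U^{-1}(a)}{U^{-1}(b)} \;\leq\; a-b \;\leq\; (1+\epsilon')\,\log\frac{U^{-1}(a)}{U^{-1}(b)}.
\]
Rearranging and invoking the choice of $\epsilon'$ gives the two inequalities in the statement.

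There is no real obstacle here beyond bookkeeping: the only care required is ensuring that the threshold $M_\epsilon$ on the $a,b$-side is consistent with the threshold $N_{\epsilon'}$ on the $x,y$-side where the marginal-utility bound is valid; the choice $M_\epsilon=U(N_{\epsilon'})$, together with monotonicity of $U$, handles this. The passage from an $\epsilon'$-sandwich for $U'$ to the target $\epsilon$-sandwich for $\log(U^{-1}(a)/U^{-1}(b))$ is then just a trivial recalibration of the small parameter.
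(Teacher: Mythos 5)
Your proof is correct and follows essentially the same route as the paper: both translate the condition $xU'(x)\to 1$ into a two-sided bound on $U'$ near infinity and then integrate; the paper phrases this by bounding $(\log U^{-1})'(y)=1/(xU'(x))$ and integrating in $y$, while you bound $\int_y^x U'(s)\,ds$ against $\int_y^x ds/s$ in the $x$-variable, which is the same sandwich read in the other direction. The only minor point left implicit in your write-up is that $U$ is unbounded above (so that $U$ is indeed a bijection onto $[U(N_{\epsilon'}),\infty)$ and the choice $M_\epsilon=U(N_{\epsilon'})$ makes sense); this follows immediately from the lower bound $U'(x)\geq(1-\epsilon')/x$ by integration, but deserves a one-line remark.
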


\begin{proof}
 For any $\epsilon \in (0,1)$, the same argument as in Item $i)$ of Lemma \ref{lem:rv} yields the existence of $M_\epsilon$ such that
 \begin{equation}\label{eq: U bd log}
  (1-\epsilon) \log x + B_\epsilon \leq U(x) \leq (1+\epsilon) \log x +A_\epsilon, \quad \mbox{for } x\geq M_\epsilon
 \end{equation}
and constants $A_\epsilon, B_\epsilon$. The lower bound implies $\lim_{x\uparrow \infty} U(x) =\infty$. Set $y= U(x)$; then, $x\uparrow \infty$ as $y\uparrow \infty$. Combined with the convergence of ratio of marginal utilities \eqref{ass: conv}, this yields
 \[
  \frac{(U^{-1})'(y)}{U^{-1}(y)} = \frac{1}{x U'(x)} \rightarrow 1, \quad \mbox{as $y\uparrow \infty$}.
 \]
 Therefore,  $1-\epsilon \leq (\log U^{-1}(y))' \leq 1+\epsilon$ for $y\geq M_\epsilon$, after enlarging $M_\epsilon$ if necessary.  The assertion then follows from integrating these inequalities over the interval $(b, a)$ for $a\geq b\geq M_\epsilon$.
\end{proof}

In the long run, the generic expected utility diverges, both for the corresponding optimal portfolio and for its isoelastic counterpart:

\begin{lem}\label{lem: log value}
Let Assumptions \ref{ass: growth} - \ref{ass: wellposedness} hold. Then
\[
 \limT \expec[U(X^T_T)] = \limT \expec[U(\tilde{X}^T_T)] =\infty.
\]
\end{lem}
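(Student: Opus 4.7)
The plan is to reduce the claim to divergence of $\expec[U(\tilde{X}^T_T)]$ and then to control the dangerous low-wealth contributions using the log-specific identity $\tilde{X}^T_T = 1/\tilde{Y}^T_T$. By optimality of $X^T_T$ for $U$, $\expec[U(X^T_T)]\geq \expec[U(\tilde{X}^T_T)]$, so it is enough to show that the right-hand side tends to infinity. For log utility, $\tilde{V}(y)=-\log y-1$ gives the first-order condition $\tilde{X}^T_T=1/(\tilde{y}^T\tilde{Y}^T_T)$; combining this with the martingale identity $\expec[\tilde{X}^T_T\tilde{Y}^T_T]=1$ forces $\tilde{y}^T=1$, hence $\tilde{X}^T_T = 1/\tilde{Y}^T_T$, a key relation I will use throughout.

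Next, fix $\epsilon>0$ and combine the estimate $U(x)\geq (1-\epsilon)\log x + B_\epsilon$ for $x\geq M_\epsilon$ from \eqref{eq: U bd log} (established in the proof of Lemma~\ref{lem: inv U}) with the local lower bound $U(x)\geq -C/x$ for $x\leq \delta$ that follows from \eqref{ass: U/tU'} (since $\tilde{U}'(x)=1/x$ for $p=0$). Decomposing $\expec[U(\tilde{X}^T_T)]=I_1(T)+I_2(T)+I_3(T)$ over the events $\{\tilde{X}^T_T\geq M_\epsilon\}$, $\{\delta\leq \tilde{X}^T_T<M_\epsilon\}$ and $\{\tilde{X}^T_T<\delta\}$: the middle term $I_2(T)$ is bounded below by $\inf_{[\delta,M_\epsilon]}U$, which is finite by continuity of $U$; and on the low-wealth set, $U(\tilde{X}^T_T)\geq -C/\tilde{X}^T_T = -C\tilde{Y}^T_T$, so $I_3(T)\geq -C\,\expec[\tilde{Y}^T_T]\to 0$ by \eqref{eq: expec Y conv}.

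For the dominant piece I would use $I_1(T)\geq (1-\epsilon)\expec[\log\tilde{X}^T_T\indic_{\tilde{X}^T_T\geq M_\epsilon}]+B_\epsilon\,\prob(\tilde{X}^T_T\geq M_\epsilon)$ and split
\[
\expec[\log\tilde{X}^T_T\indic_{\tilde{X}^T_T\geq M_\epsilon}] = \expec[\log\tilde{X}^T_T] - \expec[\log\tilde{X}^T_T\indic_{\tilde{X}^T_T<M_\epsilon}].
\]
The first term diverges to $\infty$: the constant safe investment is admissible and attains log-utility $\log S^0_T$, so by optimality of $\tilde{X}^T_T$ for log utility, $\expec[\log\tilde{X}^T_T]\geq \log S^0_T\to\infty$ by Assumption~\ref{ass: growth}. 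The second term is bounded uniformly in $T$: via $\tilde{X}^T_T=1/\tilde{Y}^T_T$ it rewrites as $-\expec[\log\tilde{Y}^T_T\indic_{\tilde{Y}^T_T>1/M_\epsilon}]$, and the elementary bounds $\log y\leq y$ and $\log y> -\log M_\epsilon$ on the set $\{\tilde{Y}^T_T>1/M_\epsilon\}$, combined with $\expec[\tilde{Y}^T_T]\to 0$, give a uniform two-sided control. Hence $I_1(T)\to\infty$ and the claim follows.

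The main obstacle is that Assumption~\ref{ass: utility} permits $U(0^+)=-\infty$, so a priori the event $\{\tilde{X}^T_T\text{ small}\}$ could contribute an unbounded negative amount to $\expec[U(\tilde{X}^T_T)]$, and the analogous negative tail of $\log \tilde{X}^T_T$ could also sabotage the main term. This is precisely what condition \eqref{ass: U/tU'} is designed to prevent: it calibrates the growth of $-U$ near zero against the decay of the stochastic discount factor, and through the log-specific identity $\tilde{X}^T_T = 1/\tilde{Y}^T_T$ both dangerous integrals collapse into expectations that are controlled by $\expec[\tilde{Y}^T_T]\to 0$.
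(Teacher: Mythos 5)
Your proof is correct and takes essentially the same route as the paper's: reduce to the isoelastic case via optimality of $X^T_T$, use the lower bound $U(x)\geq(1-\epsilon)\log x+B_\epsilon$ for large $x$, and control the low-wealth contribution via \eqref{ass: U/tU'} together with the identity $\tilde X^T_T{}^{-1}=\tilde Y^T_T$ and $\expec[\tilde Y^T_T]\to 0$. The only cosmetic differences are that you split off an explicit middle interval $[\delta,M_\epsilon]$ (the paper instead extends the bound $U(x)\geq C_M x^{-1}$ to all of $(0,M_\epsilon)$ by continuity) and that you make more explicit the bounding of $\expec[\log\tilde X^T_T\,\indic_{\{\tilde X^T_T<M_\epsilon\}}]$, which the paper leaves implicit.
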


\begin{proof}
 Once again, the superscript $T$ for $X^T$, $\tilde{X}^T$, and $\tilde{Y}^T$ is omitted throughout this proof.
 As $\expec[U(X_T)]\geq \expec[U(\tilde{X}_T)]$ by optimality of $X_T$ for the generic utility $U$, the first convergence is implied by the second one. To prove the second convergence, first note that the lower bound in \eqref{eq: U bd log} yields
 \begin{equation}\label{eq: Utx log}
\lim_{T\rightarrow\infty}
\expec[U(\tilde{X}_T) \, \indic_{\{\tilde{X}_T \geq M_\epsilon\}}] \geq (1-\epsilon) \,\expec[\log(\tilde{X}_T) \,\indic_{\{\tilde{X}_T  \geq M_\epsilon\}}] + B_\epsilon \prob(\tilde{X}_T \geq M_\epsilon) = \infty
.
\end{equation}
Here, the last convergence holds because the optimality of $\tilde{X}_T$ for the logarithmic utility implies $\limT\expec[\log(\tilde{X}_T)] \geq \limT \expec[\log(S^0_T)] =\infty$ by $\lim_{T\rightarrow \infty} S^0_T = \infty$. Now, turn to the contribution of the low wealth levels $\{\tilde{X}_T  <M_\epsilon\}$. To this end, Assumption \eqref{ass: U/tU'} guarantees the existence of $C_M$ such that $U(x) \geq C_M x^{-1}$ for all $x<M_\epsilon$. Then,
 \[
  \expec[U(\tilde{X}_T ) \, \indic_{\{\tilde{X}_T  <M_\epsilon\}}]  \geq C_M \,\expec[(\tilde{X}_T)^{-1} \, \indic_{\{\tilde{X}_T  <M_\epsilon\}}].
 \]
Recall from Section \ref{sec:dual} that $\tilde{X}_T^{-1} = \tilde{y}^T\tilde{Y}_T$ and $\tilde{y}^T \equiv 1$.\footnote{The identity $\tilde{y}^T \equiv 1$ follows from $1= \expec[\tilde{X}_T/\tilde{X}_T] = \expec[\tilde{X}_T \tilde{y}^T \tilde{Y}_T] = \tilde{y}^T$, where $\expec[\tilde{X}_T \tilde{Y}_T] =1$ is used to obtain the third identity.}
 In view of \eqref{eq: expec Y conv}, the term on the right-hand side therefore converges to zero as $T\rightarrow \infty$, so that $\liminf_{T\rightarrow \infty}[U(\tilde{X}_T ) \, \indic_{\{\tilde{X}_T  < M_\epsilon\}}]\geq 0$. Combined with \eqref{eq: Utx log}, this confirms the second (and in turn the first) convergence in the assertion.
\end{proof}

Set $a_T = \expec[U(X^T_T)]$ and $b_T = \expec[U(\tilde{X}^T_T)]$. As we have seen above, $a_T \geq b_T$ and both utilities tend to infinity as $T\rightarrow \infty$. In view of Lemma \ref{lem: inv U}, the convergence $U^{-1}(a_T)/U^{-1}(b_T) \to 1$ of the ratio of corresponding certainty equivalents is thereby equivalent to the \emph{difference} $a_T-b_T$ of utilities vanishing in the long run. This is in contrast to the power case $p \neq 0$, where the convergence of the ratio of certainty equivalents was found to be equivalent to the convergence of the \emph{ratio} $a_T/b_T$ of utilities in Lemma \ref{lem:equiv}. As a result, different estimates are needed in the case $p=0$. More specifically, the proof of the main result is based on the following long-run asymptotics in this case, whose technical proof is deferred to Section \ref{sec:proofp0}.

\begin{prop}\label{prop: ratio X}
 Let $p=0$ and suppose Assumptions \ref{ass: growth} - \ref{ass: wellposedness} hold. Then:
 \begin{enumerate}
  \item[i)] $\prob-\limT \tilde{X}^T_T =\infty$;
  \item[ii)] $\prob-\limT X^T_T =\infty$;
  \item[iii)] $\prob-\limT X^T_T /\tilde{X}^T_T =1$.
 \end{enumerate}
 Here, $\prob-\limT$ denotes convergence in $\prob$-probability.
\end{prop}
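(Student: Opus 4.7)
Part (i) is immediate. Since $p=0$, the footnote to Lemma~\ref{lem: log value} records that $\tilde{y}^T \equiv 1$, so the first-order condition gives $\tilde{X}^T_T = 1/\tilde{Y}^T_T$. Combining \eqref{eq: expec Y conv} with Markov's inequality yields $\tilde{Y}^T_T \to 0$ in $\prob$-probability, hence $\tilde{X}^T_T \to \infty$ in $\prob$-probability.

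For (ii) and (iii), the plan is to first establish the auxiliary convergence $y^T \to 1$, the logarithmic analogue of Lemma~\ref{lem: ratio Y}. Starting from the martingale identity $\expec[X^T_T Y^T_T] = 1$ and splitting at $\{X^T_T \geq M_\epsilon\}$, the first-order condition $U'(X^T_T) = y^T Y^T_T$ together with \eqref{ass: conv} (which for $p=0$ reads $xU'(x) \to 1$) gives $X^T_T Y^T_T = X^T_T U'(X^T_T)/y^T \in [(1-\epsilon)/y^T, (1+\epsilon)/y^T]$ on the large-wealth set, while on its complement $X^T_T Y^T_T \leq M_\epsilon Y^T_T$ has vanishing expectation by \eqref{eq: expec Y conv}. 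This yields $\limsup_T y^T \leq 1$. The boundedness of $y^T$ together with Markov applied to $\expec[Y^T_T]\to 0$ then shows $y^T Y^T_T \to 0$ in probability; via $X^T_T = I(y^T Y^T_T) \in [(1-\epsilon)/(y^T Y^T_T), (1+\epsilon)/(y^T Y^T_T)]$ on the high-probability set $\{y^T Y^T_T \leq \delta_\epsilon\}$ (from \eqref{eq: I conv}), this already delivers (ii). The reverse bound $\liminf_T y^T \geq 1$ is obtained by splitting at $\{y^T Y^T_T \leq \delta_\epsilon\}$ (now asymptotically of full probability) and using $X^T_T \leq C_{\delta_\epsilon}$ on its complement, as in Lemma~\ref{lem: ratio Y}.

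For (iii), set $Z^T := X^T_T/\tilde{X}^T_T = X^T_T \tilde{Y}^T_T$. The numeraire property of the log-optimal $\tilde{X}^T$ gives $\expec[Z^T] \leq 1$; the claim is $\expec[\log Z^T] \to 0$. Given both, the elementary inequality $g(z) := z - 1 - \log z \geq 0$ (with equality only at $z=1$) yields
\[ 0 \leq \expec[g(Z^T)] = \expec[Z^T] - 1 - \expec[\log Z^T] \leq -\expec[\log Z^T] \to 0, \]
so $g(Z^T) \to 0$ in $L^1$ and hence in probability. Since $g$ is continuous with its unique zero at $z=1$, $Z^T \to 1$ in probability.

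The substantive step is $\expec[\log Z^T] \to 0$. The upper direction $\expec[\log Z^T] \leq 0$ is log-optimality of $\tilde{X}^T$. For the lower direction, on $\{X^T_T \geq M_\epsilon\}$ the FOC yields the \emph{exact} identity
\[ \log X^T_T = -\log y^T - \log Y^T_T + \log(X^T_T U'(X^T_T)), \]
whose last term is bounded by $\epsilon$ for $M_\epsilon$ large, since $\log(xU'(x)) \to 0$ by \eqref{ass: conv}. Taking expectations, and exploiting $\prob(X^T_T \geq M_\epsilon) \to 1$ from (ii), $y^T \to 1$, and the dual log-optimality $\expec[\log Y^T_T] \leq \expec[\log \tilde{Y}^T_T] = -\expec[\log \tilde{X}^T_T]$ of $\tilde{Y}^T$, gives $\expec[\log X^T_T \indic_{\{X^T_T \geq M_\epsilon\}}] \geq \expec[\log \tilde{X}^T_T] - \epsilon + o(1)$. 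The main obstacle is then to bound from below the residual $\expec[\log X^T_T \indic_{\{X^T_T < M_\epsilon\}}]$: although this event has vanishing probability, $\log X^T_T$ may be arbitrarily negative there. This is exactly where Assumption~\eqref{ass: U/tU'}, which for $p=0$ reads $\liminf_{x \downarrow 0} xU(x) > -\infty$ and therefore yields $U(x) \geq C_M/x$ for $x$ small, is critical; combined with the optimality inequality $\expec[U(X^T_T)] \geq \expec[U(\tilde{X}^T_T)]$, the lower bound $y^T Y^T_T \geq \delta_M$ on $\{X^T_T < M_\epsilon\}$ from the subdifferential structure of $V$, and an estimate analogous to \eqref{eq: Utx/txp 2} (using $\expec[Y^T_T \indic_{\{X^T_T < M_\epsilon\}}] \leq \expec[Y^T_T] \to 0$), it forces the low-wealth contribution to vanish in the limit.
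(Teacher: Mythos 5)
Your parts (i) and (ii) are correct, and (i) is in fact slightly more direct than the paper's argument (the paper refers back to the derivation of \eqref{eq: tX ubb} instead of using Markov on $\expec[\tilde Y^T_T]\to 0$ directly). Your auxiliary lemma $y^T\to 1$ and the deduction of (ii) from it match the paper's Lemmas~\ref{eq: ratio y p=0} and~\ref{lem: X tX ub}.

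The route you take for (iii), however, has a genuine gap. You reduce (iii) to the claim $\expec[\log Z^T]\to 0$ for $Z^T=X^T_T/\tilde X^T_T$, and then combine this with $\expec[Z^T]\le 1$ via the convex function $g(z)=z-1-\log z$. But $\expec[\log Z^T]\to 0$ is strictly stronger than convergence of $Z^T$ to $1$ in probability, and it can fail under the paper's hypotheses. Consider, in the Black--Scholes setting, $U(x)=\log x$ for $x\ge 1$ and $U(x)=x-1$ for $x<1$. This $U$ is increasing, concave, $C^1$, satisfies \eqref{ass: conv}, and $\liminf_{x\downarrow 0}xU(x)=0>-\infty$, so all of Assumption~\ref{ass: utility} holds. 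Yet $U'(0+)=1$, so $-\partial V(y)=\{0\}$ for $y>1$, and the optimizer $X^T_T\in -\partial V(y^T Y^T_T)$ equals zero on $\{y^T Y^T_T>1\}$, an event of strictly positive probability at every finite $T$ because $Y^T_T$ is log-normal. Thus $\prob(Z^T=0)>0$ and $\expec[\log Z^T]=-\infty$ for every $T$, while the proposition's conclusion $Z^T\to 1$ in probability still holds (because $\prob(y^T Y^T_T>1)\to 0$). So the intermediate statement you aim to prove is false. The specific step that cannot be made to work is the control of $\expec[\log X^T_T \indic_{\{X^T_T<M_\epsilon\}}]$: Assumption~\eqref{ass: U/tU'} and the $U$-optimality inequality control $U(X^T_T)$ on the low-wealth event, but since $U$ is permitted to be bounded (or linear) near zero, this gives no lower bound on $\log X^T_T$, which can be $-\infty$ with positive probability. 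The paper sidesteps this entirely by never forming $\expec[\log Z^T]$; instead it sums the two numeraire inequalities to obtain $\expec[(1-X^T_T Y^T_T/r_T)(r_T-1)]\le 0$ and then shows both the positive and negative parts of the integrand vanish in $L^1$, splitting according to whether $X^T_T Y^T_T$ and $r_T$ straddle $1$, and appealing to Corollary~\ref{cor: XY conv}. That decomposition requires only $X^T_T Y^T_T\to 1$ in $L^1$, not any integrability of $\log X^T_T$.
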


With Proposition \ref{prop: ratio X} at hand, we can now complete the proof of the main result also in the case $p=0$:

\begin{proof}[Proof of Theorem \ref{thm:mt} for $p=0$]
 The superscript $T$ for $X^T$, $\tilde{X}^T$, and $\tilde{Y}^T$ is again omitted throughout this proof.
 As discussed above, due to Lemmas \ref{lem: inv U} and \ref{lem: log value}, it suffices to prove
 \[
  \limT \expec[U(X_T ) - U(\tilde{X}_T )] =0.
 \]
As $\expec[U(X_T)] \geq \expec[U(\tilde{X}_T )]$, we only need to show
 \begin{equation}\label{eq: Ux-Utx}
  \limsup_{T\rightarrow \infty} \expec[U(X_T ) - U(\tilde{X}_T )]\leq 0.
 \end{equation}
For any $\epsilon>0$, the convergence \eqref{ass: conv} of the ratio of marginal utilities implies that there exists a sufficiently large $M_\epsilon$ such that $x U'(x) \leq 1+\epsilon$ for $x\geq M_\epsilon$. As a result:
 \[
  U(x) - U(\tilde{x}) = \int_{\tilde{x}}^x \frac{1}{y} y U'(y)\, dy \leq (1+\epsilon) (\log x - \log\tilde{x}), \quad \mbox{for } x\geq \tilde{x}\geq M_\epsilon.
 \]
Choosing $x= X_T$ and $\tilde{x}= \wt{X}_T$ in the previous inequality leads to
 \begin{equation}\label{eq: diff U est>M}
 \begin{split}
  &\expec\bra{\pare{U(X_T) - U(\tilde{X}_T)} \,\indic_{\{\tilde{X}_T \geq M_\epsilon, X_T  \geq M_\epsilon\}}} \\
   \leq\  &\expec\bra{\pare{U(X_T) - U(\tilde{X}_T)} \,\indic_{\{X_T  \geq \tilde{X}_T \geq M_\epsilon\}}}\\
  \leq\  &(1+\epsilon)\,\expec\bra{\log \frac{X_T}{\tilde{X}_T} \,\indic_{\{X_T \geq \tilde{X}_T\geq M_\epsilon\}}},
 \end{split}
 \end{equation}
 where the first inequality holds because $U(X_T ) < U(\tilde{X}_T )$ when $X_T  < \tilde{X}_T $.
 Now, observe that
 \[
  0 \leq \expec\bra{\log \frac{X_T }{\tilde{X}_T } \, \indic_{\{X_T  > \tilde{X}_T  \geq M_\epsilon\}}} \leq \expec\bra{\log \frac{X_T }{\tilde{X}_T } \, \indic_{\{X_T > \tilde{X}_T\}}} = \expec\bra{\log \pare{\frac{X_T }{\tilde{X}_T } \vee 1}},
 \]
 and, moreover,
 \begin{align*}
  \expec\bra{\exp \log \pare{\frac{X_T }{\tilde{X}_T } \vee 1}} = \expec\bra{\frac{X_T }{\tilde{X}_T } \vee 1} \leq \expec\bra{\frac{X_T }{\tilde{X}_T }} + 1 \leq 2, \quad \mbox{for all } T>0,
 \end{align*}
 where the last inequality holds due to the numeraire property of $\tilde{X}_T$, that is, $\expec[\hat{X}_T/\tilde{X}_T] \leq 1$ for any admissible $\hat{X}$ (cf.\ \eqref{eq: numeraire} with $p=0$). As a result, de la Vallee-Poussin's criterion as in \cite[Lemma 3]{Shiryaev} implies that the family $\log\pare{\frac{X_T}{\tilde{X}_T } \vee 1}$ is uniformly integrable in $T$. Together with Proposition \ref{prop: ratio X} iii), this  yields
 \begin{equation}\label{eq: lim log ratio}
  \limT \expec\bra{\log \frac{X_T }{\tilde{X}_T } \, \indic_{\{X_T  > \tilde{X}_T  \geq M_\epsilon\}}} =0.
 \end{equation}
In view of \eqref{eq: diff U est>M}, it follows that
 \[
  \limsup_{T\rightarrow \infty} \expec\bra{\pare{U(X_T ) - U(\tilde{X}_T)} \, \indic_{\{\tilde{X}_T  \geq M_\epsilon, X_T  \geq M_\epsilon\}}} \leq 0.
 \]
 In the next two paragraphs, we will show
 \begin{align}
  & \limsup_{T\rightarrow \infty} \expec\bra{U(X_T) \,\indic_{\{X_T \leq M_\epsilon \text{ or } \wt{X}_T  \leq M_\epsilon\}}} \leq 0, \label{eq: diff U est 1}\\
  & \liminf_{T\rightarrow \infty} \expec\bra{U(\tilde{X}_T)\,\indic_{\{X_T  \leq M_\epsilon \text{ or } \tilde{X}_T  \leq M_\epsilon\}}} \geq 0. \label{eq: diff U est 2}
 \end{align}
Combining these three inequalities then yields \eqref{eq: Ux-Utx}, and in turn completes the proof of Theorem~\ref{thm:mt} in the case $p=0$.

 To establish \eqref{eq: diff U est 1}, note that Proposition \ref{prop: ratio X} ii) gives
 $$\limsup_{T\rightarrow \infty} \expec\bra{U(X_T) \,\indic_{\{X_T  \leq M_\epsilon\}}}\leq U(M_\epsilon) \limsup_{T\rightarrow \infty} \prob(X_T  \leq M_\epsilon) =0.$$
 On the other hand, the upper bound on the generic utility $U$ in \eqref{eq: U bd log} implies
 \begin{align*}
  & \expec[U(X_T ) \, \indic_{\{X_T  > M_\epsilon, \tilde{X}_T  \leq M_\epsilon\}}]\\
  \leq\ & (1+\epsilon) \,\expec[\log(X_T ) \, \indic_{\{X_T  > M_\epsilon, \tilde{X}_T  \leq M_\epsilon\}}] + A_\epsilon \prob(\tilde{X}_T  \leq M_\epsilon)\\
  \leq\ & (1+\epsilon)\, \expec\bra{\log \frac{X_T }{\tilde{X}_T } \, \indic_{\{X_T > \tilde{X}_T\}}} + (1+\epsilon)\,\expec[\log(\tilde{X}_T ) \, \indic_{\{X \geq M_\epsilon, \tilde{X}_T  \leq\ M_\epsilon\}}]\\
  &\qquad  + A_\epsilon \prob(\tilde{X}_T  \leq M_\epsilon)\\
  \leq\ & (1+\epsilon)\, \expec\bra{\log \frac{X_T }{\tilde{X}_T } \, \indic_{\{X_T > \tilde{X}_T\}}} + ((1+\epsilon) \log M_\epsilon + A_\epsilon) \,\prob(\tilde{X}_T  \leq M_\epsilon)\\
  =\ & (1+\epsilon) \expec\bra{\log \frac{X_T }{\tilde{X}_T } \vee 1} + ((1+\epsilon) \log M_\epsilon + A_\epsilon) \,\prob(\tilde{X}_T  \leq M_\epsilon).
 \end{align*}
Now, Proposition \ref{prop: ratio X} iii) and the uniform integrability established in the derivation of \eqref{eq: lim log ratio} show that the first term on the right-hand side converges to zero as $T \to \infty$. Likewise, by Proposition \ref{prop: ratio X} i), the second term also tends to zero as the horizon grows, confirming \eqref{eq: diff U est 1}.

 To prove \eqref{eq: diff U est 2}, note that
 $$\liminf_{T\rightarrow \infty} \expec[U(\tilde{X}_T ) \, \indic_{\{\tilde{X}_T  >M_\epsilon, X_T  \leq M_\epsilon\}}] \geq U(M_\epsilon) \limT \prob(\tilde{X}_T  > M_\epsilon, X_T  \leq M_\epsilon) =0$$
 by Proposition \ref{prop: ratio X} ii). It therefore suffices to show \begin{equation}\label{eq: Utx log <M}
 \liminf_{T\rightarrow \infty} \expec[U(\tilde{X}_T ) \, \indic_{\{\tilde{X}_T  \leq M_\epsilon\}}]\geq 0.
 \end{equation}
 To this end, \eqref{ass: U/tU'} yields the existence of $C_M$ such that $U(x) \geq C_M x^{-1}$ for $x\leq M_\epsilon$. Together with the first-order condition $\tilde{X}_T^{-1}=\tilde{y}^T\tilde{Y}_T$ with $\tilde{y}^T\equiv 1$ and \eqref{eq: expec Y conv}, it follows that
 \begin{align*}
  \expec[U(\tilde{X}_T ) \, \indic_{\{\tilde{X}_T  \leq M_\epsilon\}}] \geq C_M \, \expec[ (\tilde{X}_T)^{-1} \, \indic_{\{\tilde{X}_T  \leq M_\epsilon\}}] = C_M \expec[\tilde{Y}_T \, \indic_{\{\tilde{X}_T  \leq M_\epsilon\}}] \rightarrow 0,
 \end{align*}
 as $T\rightarrow \infty$. Therefore \eqref{eq: Utx log <M} and in turn  \eqref{eq: diff U est 2} is confirmed in both cases, completing the proof of Theorem~\ref{thm:mt}.
\end{proof}

\subsection{Proof of Proposition \ref{prop: ratio X}}\label{sec:proofp0}

This section concludes the proof of Theorem \ref{thm:mt} in the case $p=0$, by establishing the auxiliary Proposition \ref{prop: ratio X}. In all the following proofs, the superscript $T$ in $\tilde{X}^T$, $X^T$, $Y^T$, and $\tilde{Y}^T$ is omitted to ease notation. We begin with the analogue of Lemma \ref{lem: ratio Y p>0}:

\begin{lem}\label{eq: ratio y p=0}
 Let $p=0$ and suppose Assumptions \ref{ass: growth} - \ref{ass: wellposedness} hold. Then
 \[
  \lim_{T\rightarrow \infty} \frac{y^T}{\tilde{y}^T} =1,
 \]
 where $\tilde{y}^T\equiv 1$ as we have seen before (cf.\ Footnote 15).
\end{lem}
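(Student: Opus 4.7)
The plan is to adapt the argument of Lemma \ref{lem: ratio Y p>0} to the logarithmic case. Since $\tilde{y}^T \equiv 1$, the assertion reduces to showing $\limT y^T = 1$. The key ingredients I would invoke are: the martingale identity $\expec[X^T_T Y^T_T] = 1$; the local identification $X^T_T = I(y^T Y^T_T)$ coming from \eqref{eq: X^T resp}; the asymptotic bound from \eqref{eq: I conv} specialized to $p=0$, namely that for any $\epsilon \in (0,1)$ there is $\delta_\epsilon \in (0, y_0)$ with
\[
\frac{1-\epsilon}{y} \leq I(y) \leq \frac{1+\epsilon}{y} \quad \text{for } 0 < y \leq \delta_\epsilon;
\]
and the decay $\expec[Y^T_T] \to 0$ from \eqref{eq: expec Y conv}.

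First I would split the identity $\expec[X^T_T Y^T_T] = 1$ according to whether $y^T Y^T_T$ falls inside $(0, \delta_\epsilon]$, writing
\[
1 = \expec\!\bra{I(y^T Y^T_T)\, Y^T_T \, \indic_{\{y^T Y^T_T \leq \delta_\epsilon\}}} + \expec\!\bra{X^T_T Y^T_T \, \indic_{\{y^T Y^T_T > \delta_\epsilon\}}}.
\]
On $\{y^T Y^T_T \leq \delta_\epsilon\}$ the bounds on $I$ yield
\[
\frac{1-\epsilon}{y^T}\,\prob(y^T Y^T_T \leq \delta_\epsilon) \leq \expec\!\bra{I(y^T Y^T_T) Y^T_T \, \indic_{\{y^T Y^T_T \leq \delta_\epsilon\}}} \leq \frac{1+\epsilon}{y^T}\,\prob(y^T Y^T_T \leq \delta_\epsilon).
\]
On the complement, the argument based on the nonincreasing left derivative of the convex dual $V$ (exactly as in the proof of Lemma \ref{lem: ratio Y p>0}) produces a constant $C_\delta$ with $X^T_T \leq C_\delta$, so the second term is controlled by $C_\delta \expec[Y^T_T] \to 0$. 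Bounding $\prob(\cdot) \leq 1$ in the upper estimate then yields $\limsup_{T\to\infty} y^T \leq 1+\epsilon$.

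For the matching lower bound I would bootstrap through Markov's inequality: now that $y^T$ is known to be bounded above for large $T$, \eqref{eq: expec Y conv} gives
\[
\prob(y^T Y^T_T > \delta_\epsilon) \leq \frac{y^T \expec[Y^T_T]}{\delta_\epsilon} \to 0,
\]
so $\prob(y^T Y^T_T \leq \delta_\epsilon) \to 1$. Feeding this back into the lower estimate yields $\liminf_{T\to\infty} y^T \geq 1-\epsilon$, and letting $\epsilon \downarrow 0$ completes the proof. The main obstacle, a mild one, is the apparent circularity of the Markov step: it needs an a priori bound on $y^T$ before it can deliver $\prob(y^T Y^T_T \leq \delta_\epsilon) \to 1$. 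This is resolved by noticing that the upper bound on $y^T$ falls out of the decomposition without any control over how close $\prob(y^T Y^T_T \leq \delta_\epsilon)$ is to one.
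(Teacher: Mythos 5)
Your argument is correct, and it differs from the paper's in a small but genuine way. You derive \emph{both} the upper and the lower bound on $y^T$ from a single decomposition — the dual one, conditioning on $\{y^T Y^T_T \leq \delta_\epsilon\}$ — and then bootstrap: first read off $\limsup_T y^T \leq 1+\epsilon$ without any control on $\prob(y^T Y^T_T \leq \delta_\epsilon)$, and then feed that a priori bound into Markov's inequality to get $\prob(y^T Y^T_T \leq \delta_\epsilon) \to 1$ and hence $\liminf_T y^T \geq 1-\epsilon$. The paper instead obtains the two bounds from two different decompositions: the upper bound on $y^T$ comes from the \emph{primal} split $\{X_T \geq M_\epsilon\}$ together with $X_T U'(X_T) \leq 1+\epsilon$ there (the $p=0$ specialization of the estimate \eqref{eq: XY primary p>0}), and only the lower bound uses the dual split $\{y^T Y^T_T \leq \delta_\epsilon\}$. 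Moreover, the paper sidesteps the bootstrap-through-Markov entirely by rewriting $\prob(y^T Y^T_T > \delta_\epsilon)/y^T = \expec[\,(Y^T_T/(y^T Y^T_T))\indic_{\{y^T Y^T_T > \delta_\epsilon\}}] \leq \delta_\epsilon^{-1}\expec[Y^T_T]$, which makes the relevant combination vanish without needing any prior bound on $y^T$ or on $\prob(y^T Y^T_T > \delta_\epsilon)$ separately. Both routes are valid; yours is slightly more economical in that it only needs one decomposition, at the cost of the two-pass structure, whereas the paper's is self-contained in each direction and matches more tightly the parallel structure of Lemma \ref{lem: ratio Y p>0} for the power case.
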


\begin{proof}
 The proof follows the argument in Lemma \ref{lem: ratio Y p>0}, where many estimates are simplified when $p=0$. Indeed, on the one hand, the same estimate as in \eqref{eq: XY primary p>0} yields
 \[
  1 \leq \frac{1+\epsilon}{y^T} \,\expec\bra{\frac{X_T}{X_T} \, \indic_{\{X_T  \geq M_\epsilon\}}} + \expec[X_T Y_T \, \indic_{\{X_T  <M_\epsilon\}}] \leq \frac{1+\epsilon}{y^T} + \expec[X_T Y_T \, \indic_{\{X_T  <M_\epsilon\}}].
 \]
Here, the term $\expec[X_T Y_T \, \indic_{\{X_T  <M_\epsilon\}}]$ vanishes as $T\rightarrow \infty$ due to \eqref{eq: expec Y conv}, so that
 \begin{equation}\label{eq: ratio y p=0 liminf}
  \frac{1}{1+\epsilon} \leq \liminf_{T\rightarrow \infty} \frac{1}{y^T}.
 \end{equation}
 On the other hand, the same argument that leads to \eqref{eq: XY dual p>0} also yields
 \[
  1= \expec\bra{Y_T I(y^TY_T)\, \indic_{\{y^T Y_T\leq \delta_\epsilon\}}} + \expec\bra{Y_T X_T \, \indic_{\{y^T Y_T> \delta_\epsilon\}}},
 \]
 where the second term on the right-hand side vanishes as $T\rightarrow \infty$ by the same reasoning as after \eqref{eq: XY dual p>0}. Concerning the first term, \eqref{eq: I conv} for $p=0$ shows that $I(y)\geq (1-\epsilon)y^{-1}$ for any $y<\delta_\epsilon$ and some sufficiently small $\delta_\epsilon$. Therefore:
 \[
 \expec\bra{Y_T I(y^T Y_T) \, \indic_{\{y^T Y_T \leq \delta_\epsilon\}}}
  \geq (1-\epsilon) \frac{\prob(y^T Y_T\leq \delta_\epsilon)}{y^T}
  = \frac{1-\epsilon}{y^T} - \frac{1-\epsilon}{y^T} \prob(y^T Y_T >\delta_\epsilon).
 \]
 Here the second term also tends to zero because
 \[
  \frac{1}{y^T}\prob(y^TY_T >\delta_\epsilon) = \expec\bra{\frac{Y_T}{y^T Y_T} \indic_{\{y^T Y_T >\delta_\epsilon\}}} \leq \delta_{\epsilon}^{-1} \expec[Y_T]\rightarrow 0,
 \]
 due to \eqref{eq: expec Y conv}. Combining the above estimates, we obtain
 \begin{equation}\label{eq: ratio y p=0 limsup}
  \frac{1}{1-\epsilon}\geq \limsup_{T\rightarrow \infty} \frac{1}{y^T}.
 \end{equation}
 Thus, the assertion follows by combining \eqref{eq: ratio y p=0 liminf} and \eqref{eq: ratio y p=0 limsup} because $\varepsilon$ was arbitrary.
\end{proof}

Using the previous results, we can now verify the first two items of Proposition \ref{prop: ratio X}.

\begin{lem}\label{lem: X tX ub}
Let $p=0$ and suppose Assumptions \ref{ass: growth} - \ref{ass: wellposedness} hold. Then:
 \[
  \prob-\limT \tilde{X}^T_T = \infty \quad \mbox{and} \quad \prob-\limT X^T_T = \infty.
 \]
\end{lem}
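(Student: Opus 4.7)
The plan is to derive both convergences from the key fact \eqref{eq: expec Y conv}, i.e., $\expec[Y_T]\to 0$ for every sequence of stochastic discount factors, combined with the first-order conditions linking primal optimizers to dual variables.

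For $\tilde{X}^T_T\to\infty$ in probability, I would use the first-order condition for the logarithmic problem, namely $(\tilde{X}^T_T)^{-1}=\tilde{y}^T\tilde{Y}^T_T=\tilde{Y}^T_T$ (since $\tilde{y}^T\equiv 1$ by Footnote~15). Then $\{\tilde{X}^T_T\leq M\}=\{\tilde{Y}^T_T\geq 1/M\}$, and Markov's inequality together with \eqref{eq: expec Y conv} applied to $\tilde{Y}^T\in\mathcal{Y}$ gives
\[
\prob(\tilde{X}^T_T\leq M)\leq M\,\expec[\tilde{Y}^T_T]\to 0,\qquad T\to\infty,
\]
for any fixed $M>0$.

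For $X^T_T\to\infty$ in probability, the strategy is to translate a lower bound on $X^T_T$ into an upper bound on $y^T Y^T_T$ via the subdifferential relation $X^T_T\in-\partial V(y^T Y^T_T)$ from \eqref{eq: X^T resp}. Since $U$ is differentiable and strictly concave for large wealth (Assumption~\ref{ass: utility}(i)), the dual $V$ is strictly convex with $-\partial V(y)=\{I(y)\}$ for $y\in(0,y_0)$, and the convergence \eqref{eq: I conv} with $p=0$ gives $I(y)\sim 1/y\uparrow\infty$ as $y\downarrow 0$. Arguing exactly as in the passage following \eqref{eq: p/y 1}, for each $M>0$ there exists $\delta_M>0$ with $-V'_+(y)\geq M$ whenever $y\leq\delta_M$; since any element of $-\partial V(y)$ dominates $-V'_+(y)$, this yields the inclusion $\{X^T_T<M\}\subseteq\{y^T Y^T_T>\delta_M\}$. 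Markov's inequality then gives
\[
\prob(X^T_T<M)\leq \prob(y^T Y^T_T>\delta_M)\leq \frac{y^T}{\delta_M}\,\expec[Y^T_T],
\]
which tends to $0$ by Lemma~\ref{eq: ratio y p=0} (which yields $y^T\to 1$) and \eqref{eq: expec Y conv} applied to $Y^T\in\mathcal{Y}$. Taking $M$ arbitrary completes the proof.

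The only delicate point is establishing the existence of the threshold $\delta_M$, but this has already been spelled out earlier in the paper (around \eqref{eq: p/y 1}), so I would simply quote that argument rather than repeat it. Everything else is a direct combination of Markov's inequality with the two long-run ingredients $\expec[Y_T]\to 0$ and $y^T\to 1$ established in the preceding lemma.
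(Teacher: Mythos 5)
Your proof is correct and follows essentially the same strategy as the paper: both parts rest on $\lim_{T}\sup_{Y\in\mathcal Y}\expec[Y_T]=0$ from \eqref{eq: expec Y conv} together with $y^T\to 1$. For $\tilde X^T_T$, the paper instead invokes the numeraire inequality $\expec[S^0_T/\tilde X_T]\leq 1$ and decomposes on $\{S^0_T\geq L\}$ as in the derivation of \eqref{eq: tX ubb}, but since $\tilde Y_T=1/\tilde X_T$ with $\tilde y^T\equiv 1$, your direct Markov bound $\prob(\tilde X_T\leq M)\leq M\,\expec[\tilde Y_T]\to 0$ is just a streamlined repackaging of the same fact; your treatment of $X^T_T$ via the threshold $\delta_M$ and the inclusion $\{X_T<M\}\subseteq\{y^TY_T>\delta_M\}$ matches the paper verbatim.
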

\begin{proof}
Recall the numeraire property \eqref{eq: numeraire} of the log-optimal portfolio $\tilde{X}$: $\expec[\hat{X}_T / \tilde{X}_T] \leq 1$ for any $T$ and any admissible payoff $\hat{X}$. The first part of the assertion in turn follows verbatim as in the derivation of \eqref{eq: tX ubb}, where $\prob^T$ is replaced by $\prob$. As for the second part of the assertion, for any $N>0$, there exists $y_N$ such that $I(y) > N$ for $y\leq y_N$. Here, $y_N$ is chosen sufficiently small so that $-\partial V(y)$ is single valued and is denoted by $I(y)$ for $y\leq y_N$. For the chosen $N$,
 \begin{align*}
  \prob(X_T \leq N) \leq \prob(X_T  \leq N, y^T Y_T \leq y_N) + \prob(y^T Y_T >y_N).
 \end{align*}
 The second term on the right-hand side vanishes as $T\rightarrow \infty$, due to \eqref{eq: expec Y conv}, $\limT y^T/\tilde{y}^T = 1$, and $\tilde{y}^T \equiv 1$ (cf. Lemma \ref{eq: ratio y p=0}). The first term is identically zero, because $X_T  = I(y^T Y_T) >N$ on $\{y^T Y_T \leq y_N\}$. Therefore, the second part of the assertion follows.
\end{proof}

To complete the proof of Proposition  \ref{prop: ratio X}, it remains to verify Item iii). To this end, some auxiliary results are established first.

\begin{lem}\label{lem: lim XY}
Suppose Assumptions \ref{ass: growth} - \ref{ass: wellposedness} hold for $p=0$. Then:
\[
 \prob-\limT X^T_T Y^T_T =1.
\]
\end{lem}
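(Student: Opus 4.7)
The plan is to exploit the first-order condition for $X^T_T$ on the region where $U$ is smooth and strictly concave, combined with the asymptotics of $I$ near zero and of $Y^T_T$ in probability. Writing $X_T Y_T = (X_T \cdot y^T Y_T)/y^T$ and using Lemma \ref{eq: ratio y p=0} (which gives $y^T \to 1 = \tilde{y}^T$), it is enough to show that $X_T \cdot y^T Y_T \to 1$ in $\prob$-probability.

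Fix $M_\epsilon$ large enough that $U$ is differentiable and strictly concave on $[M_\epsilon,\infty)$, so that $X_T = I(y^T Y_T)$ on $\{X_T \ge M_\epsilon\}$ (cf.\ Theorem \ref{thm:duality BTZ} and the definition of $I$). On this event, $X_T \cdot y^T Y_T = I(y^T Y_T)\cdot (y^T Y_T)$. The specialization of \eqref{eq: I conv} to $p=0$ asserts that $I(y)\,y \to 1$ as $y\downarrow 0$, so for any $\eta>0$ there exists $\delta>0$ with $|I(z)\,z-1|<\eta/2$ whenever $z<\delta$.

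The event $\{X_T < M_\epsilon\}$ has vanishing probability by Lemma \ref{lem: X tX ub}. Moreover, $\prob(y^T Y_T \ge \delta)\le y^T\expec[Y_T]/\delta\to 0$ by \eqref{eq: expec Y conv} and $y^T\to 1$. Hence, for any $\eta>0$,
\begin{equation*}
\prob\!\left(|X_T y^T Y_T - 1|>\eta/2\right) \le \prob(X_T<M_\epsilon) + \prob(y^T Y_T\ge \delta) \longrightarrow 0,
\end{equation*}
so $X_T\cdot y^T Y_T\to 1$ in probability. Dividing by $y^T\to 1$ yields $X_T Y_T \to 1$ in $\prob$-probability.

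The only real obstacle is to control the event $\{X_T<M_\epsilon\}$, where the first-order condition fails and $U$ need not be differentiable; this is handled by the already-established $\prob$-convergence $X_T\to\infty$ from Lemma \ref{lem: X tX ub}, which is precisely why the non-smooth low-wealth behavior is invisible in the limit.
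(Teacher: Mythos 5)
Your proof is correct, and it is essentially the paper's argument written on the dual side: the paper bounds $X_T U'(X_T)$ via $\lim_{x\uparrow\infty} x U'(x)=1$ on $\{X_T\ge M_\epsilon\}$, while you equivalently bound $I(y^T Y_T)\cdot y^T Y_T$ via $\lim_{y\downarrow 0} I(y)y=1$ on the same event; under the first-order condition $y^T Y_T = U'(X_T)$ these are literally the same quantity, and both routes then invoke Lemma \ref{lem: X tX ub} to discard $\{X_T<M_\epsilon\}$ and Lemma \ref{eq: ratio y p=0} to remove the $y^T$ factor. (Two minor remarks: you should note that $M_\epsilon$ must also be chosen large enough that $U'(M_\epsilon)<y_0$ so that $-\partial V(y^T Y_T)$ is single-valued and $X_T=I(y^T Y_T)$; and the extra term $\prob(y^T Y_T\ge\delta)$ in your union bound is actually redundant, since $X_T\ge M_\epsilon$ already forces $y^T Y_T=U'(X_T)\le U'(M_\epsilon)<\delta$ once $M_\epsilon$ is large enough -- the paper exploits this to avoid introducing $\delta$ at all.)
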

\begin{proof}
 For any $\epsilon>0$, the convergence of the ratio of marginal utilities \eqref{ass: conv} implies the existence of $M_\epsilon$ such that $U(x)$ is differentiable beyond $M_\epsilon$ and, moreover, $xU'(x) \geq 1-\epsilon/2$  for $x\geq M_\epsilon$. For such a $M_\epsilon$,
 \begin{equation}\label{eq: p XY<1-e}
  \prob(X_TY_T \leq 1-\epsilon) \leq \prob(X_T Y_T \leq 1-\epsilon, X_T  \geq M_\epsilon) + \prob(X_T  < M_\epsilon).
 \end{equation}
 In view of Lemma \ref{lem: X tX ub}, the second term on the right-hand side vanishes as $T\rightarrow \infty$. To estimate the first term, recall that $Y_T = U'(X_T ) /y^T$ on $\{X_T  \geq M_\epsilon\}$. As a result, for $T \geq T_*$,
 \begin{align*}
  \prob(X_T Y_T \leq 1-\epsilon, X_T  \geq M_\epsilon) &= \prob\pare{X_T  U'(X_T ) \leq y^T (1-\epsilon), X_T  \geq M_\epsilon}\\
  &\leq \prob(1-\epsilon/2\leq y^T (1-\epsilon)),
 \end{align*}
 where the inequality holds because $X_T U'(X_T ) \geq 1-\epsilon/2$ on $\{X_T  \geq M_\epsilon\}$. As $\limT y^T=1$ by Lemma \ref{eq: ratio y p=0}, the above estimates imply
 $$\limT \prob(X_T Y_T \leq 1-\epsilon, X_T  \geq M_\epsilon)=0.$$
Coming back to \eqref{eq: p XY<1-e}, this gives $\limT \prob(X_T Y_T \leq 1-\epsilon)=0$. Along the same lines, $\limT \prob(X_T Y_T \geq 1+\epsilon) =0$ follows, completing the proof.
\end{proof}

\begin{cor}\label{cor: XY conv}
Let $p=0$ and suppose Assumptions \ref{ass: growth} - \ref{ass: wellposedness} hold. Then:
\[
 \limT \expec[|X^T_T Y^T_T-1|] =0.
\]
\end{cor}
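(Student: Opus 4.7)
The plan is to combine the convergence in probability from Lemma \ref{lem: lim XY} with the martingale identity $\expec[X^T_T Y^T_T]=1$ (which holds for all $T$ since $X^T Y^T$ is a uniformly integrable martingale by Theorem \ref{thm:duality BTZ}, with $X^T_0 Y^T_0 = 1$). This is essentially an instance of Scheffé's lemma: nonnegativity plus convergence in probability plus constant expectation equal to the limit forces $L^1$-convergence.

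Concretely, first I would decompose
\[
\abs{X^T_T Y^T_T - 1} = (X^T_T Y^T_T - 1)^+ + (X^T_T Y^T_T - 1)^-.
\]
Taking expectations and using $\expec[X^T_T Y^T_T - 1]=0$ yields
\[
\expec\bra{\abs{X^T_T Y^T_T - 1}} = 2\, \expec\bra{(X^T_T Y^T_T - 1)^-}.
\]
So it suffices to control the negative part.

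The key observation is that, since $X^T_T Y^T_T \geq 0$, we have the deterministic bound $(X^T_T Y^T_T - 1)^- \leq 1$. Meanwhile, by Lemma \ref{lem: lim XY}, $X^T_T Y^T_T \to 1$ in $\prob$-probability, hence also $(X^T_T Y^T_T - 1)^- \to 0$ in $\prob$-probability. Applying the dominated convergence theorem for convergence in probability (with dominating function $1$) gives $\expec[(X^T_T Y^T_T - 1)^-] \to 0$ as $T\to\infty$, which completes the proof.

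The argument involves no real obstacle; the only subtlety is noticing that one should not try to dominate the positive part $(X^T_T Y^T_T - 1)^+$ directly (which has no obvious uniform bound), but instead exploit the martingale identity to reduce to the bounded negative part. This is the same device that underlies Scheffé's theorem.
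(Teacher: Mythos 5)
Your argument is correct and is essentially the paper's proof: the paper simply cites Scheffé's lemma together with the martingale identity $\expec[X^T_T Y^T_T]=1$ and the convergence in probability from Lemma~\ref{lem: lim XY}, whereas you unpack Scheffé's proof inline (reducing to the bounded negative part and applying dominated convergence). Same ingredients, same mechanism.
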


\begin{proof}
By the convergence in probability established in the previous lemma and $\expec[X^T_TY^T_T]=1$, this follows from Scheffe's lemma \cite[5.5.10]{williams.91}.
\end{proof}

Set $r_T = X^T_T /\tilde{X}^T_T$. The following estimate is a key to prove Proposition \ref{prop: ratio X} iii).

\begin{lem}\label{lem: r est}
Let $p=0$ and suppose Assumptions \ref{ass: growth} - \ref{ass: wellposedness} hold. Then:
 \[
  \limT \expec\bra{\left|1-\frac{X^T_T Y^T_T}{r_T}\right| |r_T-1|}=0.
 \]
\end{lem}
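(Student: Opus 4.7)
The plan is to rewrite the integrand using the identity $\tilde{X}^T_T \tilde{Y}^T_T = 1$, which converts the assertion into a problem about two random variables whose product is the already well-controlled quantity $X^T_T Y^T_T$. Setting $\xi_T := Y^T_T/\tilde{Y}^T_T = Y^T_T \tilde{X}^T_T$ and recalling that $r_T = X^T_T/\tilde{X}^T_T = X^T_T \tilde{Y}^T_T$, one obtains $X^T_T Y^T_T/r_T = \xi_T$ and $r_T \xi_T = X^T_T Y^T_T$, so the assertion reduces to
\[
\limT \expec\bra{\abs{(1 - \xi_T)(r_T - 1)}} = 0.
\]

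The decisive step is the algebraic identity $(1 - \xi_T)(r_T - 1) = r_T + \xi_T - 1 - r_T \xi_T$ together with the elementary AM-GM estimate $r_T + \xi_T \geq 2\sqrt{r_T \xi_T}$. These give the pointwise lower bound
\[
(1 - \xi_T)(r_T - 1) \;\geq\; 2\sqrt{X^T_T Y^T_T} - 1 - X^T_T Y^T_T \;=\; -\pare{\sqrt{X^T_T Y^T_T} - 1}^2,
\]
so the negative part of the integrand is dominated by $\pare{\sqrt{X^T_T Y^T_T} - 1}^2$. Using the elementary inequality $\abs{\sqrt{x} - 1} = \abs{x-1}/(\sqrt{x}+1) \leq \abs{x-1}$ for $x \geq 0$, Corollary \ref{cor: XY conv} implies $\sqrt{X^T_T Y^T_T} \to 1$ in $L^1$, and hence
\[
\expec\bra{\pare{\sqrt{X^T_T Y^T_T} - 1}^2} = 2 - 2\,\expec\bra{\sqrt{X^T_T Y^T_T}} \to 0.
\]

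To conclude, I would invoke the identity $\abs{A} = A + 2 A_-$. The supermartingale property applied to the admissible $X^T$ with $\tilde{Y}^T \in \mathcal{Y}$ and to the admissible $\tilde{X}^T$ with $Y^T \in \mathcal{Y}$ yields $\expec[r_T] = \expec[X^T_T \tilde{Y}^T_T] \leq 1$ and $\expec[\xi_T] = \expec[\tilde{X}^T_T Y^T_T] \leq 1$, while $\expec[r_T \xi_T] = \expec[X^T_T Y^T_T] = 1$ by Theorem \ref{thm:duality BTZ}. Thus $\expec[(1 - \xi_T)(r_T - 1)] = \expec[r_T] + \expec[\xi_T] - 2 \leq 0$, and combining with the previous bound gives
\[
\expec\bra{\abs{(1 - \xi_T)(r_T - 1)}} \;\leq\; 2\,\expec\bra{\pare{\sqrt{X^T_T Y^T_T} - 1}^2} \;\to\; 0.
\]
The main obstacle is spotting the AM-GM lower bound: both $\expec[1-\xi_T]$ and $\expec[r_T - 1]$ have the wrong sign for bounding the product in absolute value directly, and no tail estimate on either factor separately is available, so AM-GM is the one manoeuvre that links the one-sided bound $\expec[r_T] + \expec[\xi_T] \leq 2$ to the sharp $L^1$-control on $r_T \xi_T = X^T_T Y^T_T$ supplied by Corollary \ref{cor: XY conv}.
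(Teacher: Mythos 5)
Your proof is correct, and it takes a genuinely different and cleaner route than the paper's. Both arguments begin identically: setting $\xi_T:=Y_T^T\tilde{X}_T^T$ so that $X_T^TY_T^T/r_T=\xi_T$ and $r_T\xi_T=X_T^TY_T^T$, and observing $\expec[(1-\xi_T)(r_T-1)]=\expec[r_T]+\expec[\xi_T]-1-\expec[X_T^TY_T^T]\leq 0$ via the supermartingale property of $\mathcal{Y}$ and $\expec[X_T^TY_T^T]=1$, which reduces the task to controlling the negative part of $(1-\xi_T)(r_T-1)$. The paper then partitions into the two sign-regions $\{X_T^TY_T^T\leq r_T\leq 1\}$ and $\{1\leq r_T\leq X_T^TY_T^T\}$, bounds the integrand by $(1-X_T^TY_T^T)^2$ on the first and (via an implicit application of AM--GM) by $X_T^TY_T^T-1$ on the second, and closes each estimate with an $\epsilon$-split invoking both Lemma~\ref{lem: lim XY} and Corollary~\ref{cor: XY conv}. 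Your pointwise AM--GM bound $(1-\xi_T)(r_T-1)\geq -\left(\sqrt{X_T^TY_T^T}-1\right)^2$ subsumes both of the paper's regional estimates in one stroke, is valid everywhere (not just where the product is negative), and closes purely via the identity $\expec\left[\left(\sqrt{X_T^TY_T^T}-1\right)^2\right]=2-2\,\expec\left[\sqrt{X_T^TY_T^T}\right]$ together with the $L^1$-convergence from Corollary~\ref{cor: XY conv} and $|\sqrt{x}-1|\leq|x-1|$. Net effect: one unified inequality in place of a case analysis, no $\epsilon$-split, and no separate appeal to Lemma~\ref{lem: lim XY}.
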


\begin{proof}
 Recall the first-order condition $\tilde{y}^T \tilde{Y}_T = \tilde{X}^{-1}_T$. As $1= \expec[\tilde{Y}_T \tilde{X}_T] \geq \expec[\tilde{Y}_T X_T]$, it follows that
 \[
  \expec[\tilde{X}^{-1}_T (X_T - \tilde{X}_T)]\leq 0.
 \]
 Similarly, $1=\expec[Y_T X_T] \geq \expec[Y_T \tilde{X}_T]$ yields
\[
 \expec[Y_T (\tilde{X}_T - X_T)]\leq 0.
\]
Summing up the previous two inequalities and using the definition $r_T = X^T_T /\tilde{X}^T_T$ gives
\begin{equation}\label{eq: r^T-1 exp}
 0\geq \expec[(\tilde{X}^{-1}_T - Y_T) (X_T - \tilde{X}_T)] = \expec\bra{\pare{1-\frac{X_T Y_T}{r_T}}(r_T-1)}.
\end{equation}
Observe that $(1- X_T Y_T /r_T)(r_T-1)\leq 0$ when $X_T Y_T \leq r_T \leq 1$ or $1\leq r_T \leq X_T Y_T$. Therefore,
\begin{align*}
& \expec\bra{\pare{\pare{1-\frac{X_T Y_T}{r_T}}(r_T-1)}_-}\\
  \leq\ & \expec\bra{\pare{1-\frac{X_T Y_T}{r_T}}(1-r_T) \, \indic_{\{X_T Y_T \leq r_T \leq 1 \, \text{or} \, 1\leq r_T \leq X_T Y_T\}}}.
\end{align*}
As $\expec[((1-X_TY_T/r_T)(r_T-1))_+] \leq \expec[((1-X_TY_T/r_T)(r_T-1))_-]$ by \eqref{eq: r^T-1 exp}, it follows that
\begin{equation}\label{eq: abs XY}
\begin{split}
 &\expec\bra{\left|1-\frac{X_T Y_T}{r_T}\right| |r_T-1|}\\
  \leq\ & 2 \, \expec\bra{\pare{1-\frac{X_T Y_T}{r_T}}(1-r_T) \, \indic_{\{X_T Y_T \leq r_T \leq 1 \, \text{or} \, 1 \leq r_T \leq X_T Y_T\}}}.
  \end{split}
\end{equation}
Now, estimate the expectation on the right-hand side. On the set $\{X_T Y_T\leq r_T \leq 1\}$ we have $(1-X_T Y_T/r_T)(1-r_T) \leq (1- X_T Y_T)^2$, so that
\begin{equation*}
\begin{split}
 &\expec\bra{\pare{1-\frac{X_TY_T}{r_T}}(1-r_T)\, \indic_{\{X_T Y_T \leq r_T \leq 1\}}}\\
 \leq\  &\expec\bra{(1- X_T Y_T)^2 \, \indic_{\{X_T Y_T \leq 1\}}}\\
\leq\ &\expec[(1-X_T Y_T)^2 \, \indic_{\{X_T Y_T\leq 1-\epsilon\}}] + \expec[(1-X_T Y_T)^2 \, \indic_{\{1-\epsilon < X_T Y_T \leq 1\}}]\\
  \leq\  &\prob(X_T Y_T \leq 1-\epsilon) + \epsilon^2, \quad \mbox{for any $\epsilon>0$}.
\end{split}
\end{equation*}
Lemma \ref{lem: lim XY} shows that the first term on the right vanishes as $T\rightarrow \infty$. As $\epsilon$ was chosen arbitrarily, this yields
\begin{equation}\label{eq: est XY<r<1}
 \limT \expec\bra{\pare{1-\frac{X_TY_T}{r_T}}(1-r_T)\, \indic_{\{X_T Y_T \leq r_T \leq 1\}}} =0.
\end{equation}
On $\{1\leq r_T \leq X_T Y_T\}$ we have $X_T Y_T/r_T + r_T \geq 2$ and in turn
$$(1- X_T Y_T/ r_T)(1-r_T)\leq X_T Y_T -1.$$
As a consequence:
\begin{align*}
 &\expec\bra{\pare{1-\frac{X_T Y_T}{r_T}}(1-r_T) \,  \indic_{\{1\leq r_T \leq X_T Y_T\}}}\\
 =\  &\expec\bra{\pare{1-\frac{X_T Y_T}{r_T}}(1-r_T)\, \indic_{\{1\leq r_T \leq X_T Y_T, X_T Y_T \leq 1+\epsilon\}}}\\
 &+  \expec\bra{\pare{1-\frac{X_T Y_T}{r_T}}(1-r_T)\, \indic_{\{1\leq r_T \leq X_T Y_T, X_T Y_T > 1+\epsilon\}}}\\
\leq\  &\expec\bra{(1-X_T Y_T)^2\, \indic_{\{1\leq r_T \leq X_T Y_T, X_T Y_T \leq 1+\epsilon\}}} + \expec\bra{(X_T Y_T -1) \,\indic_{\{X_T Y_T > 1+\epsilon\}}}\\
\leq\  &\epsilon^2 + \expec[|X_T Y_T-1|].
\end{align*}
As $\epsilon$ was chosen arbitrarily and the second term on the right-hand side converges to $0$ by Corollary~\ref{cor: XY conv}, we obtain
\begin{equation}\label{eq: est 1<r<XY}
 \limT \expec\bra{\pare{1-\frac{X_T Y_T}{r_T}}(1-r_T) \, \indic_{\{1\leq r_T \leq X_T Y_T\}}} =0.
\end{equation}
Together, \eqref{eq: abs XY}, \eqref{eq: est XY<r<1} and \eqref{eq: est 1<r<XY}  yield the assertion.
\end{proof}

We are now ready to complete the proof of Proposition \ref{prop: ratio X} iii).

\begin{proof}[Proof of Proposition \ref{prop: ratio X} iii)]
 Lemma \ref{lem: r est} implies
 $$\prob-\limT (1- X_T Y_T/r_T) (1-r_T) =0.$$
Combined with Lemma \ref{lem: lim XY}, this yields the assertion $\prob-\limT r_T=1$.
\end{proof}

\section{Analysis of the Counterexample}\label{sec:analysis}
In this section, we provide a detailed analysis of the counterexample from Section~\ref{sec:example}. Recall that $U(x) = x^p/p$, $p<0$ for sufficiently large $x$ and $U(x) = x^{p^*}/{p^*}$, $p^*<p-1$, for $x\leq 1$. In what follows, we will show that if \eqref{eq: para rest} is satisfied, then
\begin{equation}\label{eq: ratio exp}
 \lim_{T\rightarrow \infty} \frac{\expec[U(\tilde{X}_T)]}{\expec[\tilde{U}(\tilde{X}_T)]} =\infty \quad \text{and} \quad \limsup_{T\rightarrow \infty} \frac{\expec[U(X_T)]}{\expec[\tilde{U}(\tilde{X}_T)]}\leq 1.
\end{equation}
Hence
\begin{equation}\label{eq: ratio conv 0}
 \lim_{T\rightarrow \infty} \frac{\expec[U(\tilde{X}_T)]}{\expec[U(X_T)]} =\infty,
\end{equation}
which confirms \eqref{eq: ce conv 0} as $U$ is regularly varying at infinity (cf.\ Lemma 3.1). Indeed, set $a_T = U^{-1}(\expec[U(\tilde{X}_T)])$ and $b_T = U^{-1}(\expec[U(X_T)])$.
If $\limsup_{T\rightarrow \infty} \expec[U(\tilde{X}_T)]$ is bounded away from zero, then $\limsup_{T\rightarrow \infty}a_T<\infty$. However, $U(\infty) =0$ and $\lim_{T\rightarrow \infty} S^0_T=\infty$ yield $0\geq \expec[U(X_T)] \geq \expec[U(S^0_T)]\rightarrow 0$ as $T\rightarrow \infty$, hence $\lim_{T\rightarrow \infty}b_T =\infty$. Therefore $\lim_{T\rightarrow \infty} a_T/b_T =0$ holds. When $\lim_{T\rightarrow \infty} \expec[U(\tilde{X}_T)] =0$, then $\lim_{T\rightarrow \infty} a_T=\infty$. If there exists $\delta>0$ such that $\liminf_{T\rightarrow \infty}a_T/b_T \geq \delta$, then
\[
 \limsup_{T\rightarrow \infty}\frac{U(a_T)}{U(b_T)} \leq \limsup_{T\rightarrow \infty}\frac{U(\delta b_T )}{U(b_T)}= \delta^p<\infty
\]
because $U \leq 0$, which contradicts \eqref{eq: ratio conv 0}.

To prove the first convergence in \eqref{eq: ratio exp}, it suffices to show
\begin{equation}\label{eq: tXp*/tXp}
  \lim_{T\rightarrow \infty} \frac{\expec[\tilde{X}^{p^*}_T \,\indic_{\{\tilde{X}_T\leq 1\}}]}{\expec[\tilde{X}^p_T]} = \infty.
\end{equation}
 In the Black-Scholes model, the optimal risky weight for power utility $\tilde{U}(x)=x^{p}/p$ is $\tilde{\pi}= \frac{1}{1-p} \frac{\mu}{\sigma^2}$. The associated wealth process starting from unit initial capital is
\begin{equation*}\label{eq: tX}
 \tilde{X}_T = \exp\pare{\pare{r+ \frac{1-2p}{2(1-p)^2} \frac{\mu^2}{\sigma^2}}T + \frac{1}{1-p} \frac{\mu}{\sigma} W_T}.
\end{equation*}
Straightforward calculations show
\begin{equation}\label{eq: txp*}
 \frac{\tilde{X}^{p^*}_T}{\expec[\tilde{X}_T^p]} = \exp\pare{(p^*-p) rT + \pare{\frac{p^*(1-2p+p^*)}{2(1-p)^2} + \frac12 q} \frac{\mu^2}{\sigma^2} T} \mathcal{E}\pare{\frac{p^*}{1-p} \frac{\mu}{\sigma} W_T},
\end{equation}
where $q=p/(p-1)$. Define a new probability measure $\qprob^*$ via
\[
 \frac{d\qprob^*}{d\prob}|_{\F_T} = \mathcal{E}\pare{\frac{p^*}{1-p} \frac{\mu}{\sigma} W_T}.
\]
Then, $W^*_t = W_t - \frac{p^*}{1-p}\frac{\mu}{\sigma} t$ is a $\qprob^*$-Brownian motion on $[0,T]$. As a result,
\begin{align*}
 &\expec^{\prob}\bra{\mathcal{E}\pare{\frac{p^*}{1-p} \frac{\mu}{\sigma} W_T} \,\indic_{\{\tilde{X}_T\leq 1\}}}\\
  =\ & \qprob^*\pare{\frac{1}{1-p}\frac{\mu}{\sigma} W_T \leq - \pare{r+ \frac{1-2p}{2(1-p)^2} \frac{\mu^2}{\sigma^2}} T}\\
 =\ & \qprob^*\pare{\frac{1}{1-p}\frac{\mu}{\sigma} W^*_T \leq -rT - \frac{1-2p+2p^*}{2(1-p)^2} \frac{\mu^2}{\sigma^2}T}\\
 = \ & \qprob^*\pare{\frac{1}{1-p}\frac{\mu}{\sigma} \frac{W^*_T}{\sqrt{T}} \leq -r\sqrt{T} - \frac{1-2p+2p^*}{2(1-p)^2} \frac{\mu^2}{\sigma^2}\sqrt{T}}.
\end{align*}
Observe that
\[
 -r-\frac{1-2p+2p^*}{2(1-p)^2}\frac{\mu^2}{\sigma^2} > -r + \frac{1}{2(1-p)^2} \frac{\mu^2}{\sigma^2} >0,
\]
where the first inequality follows from $p^*<p-1$ and the second inequality holds due to $\mu^2/\sigma^2 > 2(1-p)^2 r$ in \eqref{eq: para rest}. As a consequence:
\begin{equation}\label{eq: Q* lim}
\begin{split}
 &\lim_{T\rightarrow \infty} \expec^{\prob}\bra{\mathcal{E}\pare{\frac{p^*}{1-p} \frac{\mu}{\sigma} W_T} \,\indic_{\tilde{X}_T\leq 1}}\\
  =\ & \lim_{T\rightarrow \infty}  \qprob^*\pare{\frac{1}{1-p}\frac{\mu}{\sigma} \frac{W^*_T}{\sqrt{T}} \leq -r\sqrt{T} - \frac{1-2p+2p^*}{2(1-p)^2} \frac{\mu^2}{\sigma^2}\sqrt{T}}= 1.
\end{split}
\end{equation}
For the exponential factor on the right-hand side of \eqref{eq: txp*}, note that
\begin{align*}
 \frac{p^*(1-2p + p^*)}{2(1-p)^2} + \frac12 q >0
\end{align*}
because $p^*(1-2p+p^*)$ is strictly decreasing in $p^*$ when $p^*<p-1$. If \eqref{eq: para rest} is satisfied, it follows that
\begin{equation}\label{cond2}
 (p^*-p) r + \pare{\frac{p^*(1-2p+p^*)}{2(1-p)^2} + \frac12 q}\frac{\mu^2}{\sigma^2} = (p^*-p) \pare{r+\frac{p^*+1-p}{2(1-p)^2} \frac{\mu^2}{\sigma^2}}>0,
\end{equation}
so that the exponential term on the right-hand side of \eqref{eq: txp*} diverges as $T\rightarrow \infty$. Therefore, \eqref{eq: tXp*/tXp} is obtained after taking into account \eqref{eq: Q* lim}.

Now, consider the second convergence in \eqref{eq: ratio exp}. Let us first prove
 \begin{equation}\label{eq: ratio exp small x}
  \lim_{T\rightarrow \infty} \frac{\expec[U(X_T) \, \indic_{\{X_T\leq M\}}]}{\expec[\tilde{X}^p/p]} =0.
 \end{equation}
 As the market is complete, there exists a common stochastic discount factor $Y$ and $y^T, \tilde{y}^T>0$ such that
 $U'(X_T) = y^T Y_T$ and $\tilde{X}^{p-1}_T = \tilde{y}^T Y_T$.
 Hence, $X_T = I\pare{\frac{y^T}{\tilde{y}^T} \tilde{X}_T^{p-1}}$, where $I= (U')^{-1}$. Define $U^*(T,x) := U\pare{I\pare{\frac{y^T}{\tilde{y}^T} x^{p-1}}}$. Recall that $U(x) = x^{p^*}/p^*$ for small $x$. Therefore,
 \[
  U^*(T, x)= \frac{1}{p^*} \pare{\frac{y^T}{\tilde{y}^T}}^{\frac{p^*}{p^*-1}} x^{\frac{p-1}{p^*-1}p^*}, \quad \mbox{for small } x.
 \]
 On the other hand, there exists $T_0$ such that $1/2\leq y^T/\tilde{y}^T \leq 2$ for any $T\geq T_0$ due to Lemma \ref{lem: ratio Y}.
 As a result, \eqref{ass: U/tU'} is satisfied when $U$ is replaced by $U^*$, i.e.
 \[
  \liminf_{x\downarrow 0}\frac{U^*(T, x)}{x^{p-1}} = \liminf_{x\downarrow 0} \frac{1}{p^*} \pare{\frac{y^T}{\tilde{y}^T}}^{\frac{p^*}{p^*-1}} x^{\frac{p-1}{p^*-1}} = 0, \quad \mbox{for } T>T_0,
 \]
 where the second convergence holds because $p^*<p-1<0$.
 It then follows from Remark \ref{rem: Utx/txp} that
 \begin{align}
   \lim_{T\rightarrow \infty} \frac{\expec[U(X_T) \, \indic_{\{X_T <M\}}]}{\expec[\tilde{X}^p_T/p]} &= \lim_{T\rightarrow \infty} \frac{\expec\bra{U^*(T, \tilde{X}_T)\, \indic_{\{\tilde{X}_T \leq (U'(M) \tilde{y}^T/y^T)^{1/(p-1)}\}}}}{\expec[\tilde{X}_T^p/p]}\notag\\
   &=0, \label{eq: Utx conv}
 \end{align}
for any $M>0$. On the other hand, fix $a>0$. For any $\epsilon>0$ there is $M_{a,\epsilon}$ such that $1-\epsilon \leq U'(x)/(a+x)^{p-1} \leq 1+\epsilon$ for $x\geq M_{a,\epsilon}$. Then \eqref{eq: U bdd p<0 a} follows, and the second inequality therein yields
 \begin{equation}\label{eq: Ux/tx}
  \frac{\expec[U(X_T)]}{\expec[\tilde{X}^p_T /p]} \leq \frac{\expec[U(X_T) \, \indic_{\{X_T <M_{a,\epsilon}\}}]}{\expec[\tilde{X}^p_T/p]} + (1+\epsilon) \frac{\expec\bra{(a+X_T)^p\,\indic_{\{X_T \geq M_{a,\epsilon}\}}}}{\expec[\tilde{X}^p_T]},
 \end{equation}
 where the first term on the right-hand side vanishes as $T\rightarrow \infty$ due to \eqref{eq: Utx conv}. For the second term, use \eqref{eq: exp X/y} with $\lim_{T\rightarrow \infty} y^T/\tilde{y}^T=1$ and $\tilde{y}^T = \expec[\tilde{X}_T^p]$, to obtain
 \[
  \limsup_{T\rightarrow \infty} \frac{\expec\bra{(a+X_T)^p\,\indic_{\{X_T \geq M_\epsilon\}}}}{\expec[\tilde{X}^p_T]} \leq \frac{1}{1-\epsilon}.
 \]
In summary, the estimates for the two terms on the right side of \eqref{eq: Ux/tx} yield
 \[
  \limsup_{T\rightarrow \infty} \frac{\expec[U(X_T)]}{\expec[\tilde{X}^p_T/p]} \leq \frac{1+\epsilon}{1-\epsilon}.
 \]
 which confirms the second convergence in \eqref{eq: ratio exp} as $\epsilon$ was chosen arbitrarily.

\bibliographystyle{abbrvnat}
\bibliography{biblio}

\end{document}